\documentclass[12pt]{article}
\usepackage{setspace}
\usepackage[margin=1in]{geometry}

\usepackage[utf8]{inputenc}
\usepackage{hanging}	
\usepackage{soul}
\usepackage{xcolor}
\usepackage{amsmath}
\usepackage{amsthm}
\usepackage{amssymb}
\usepackage{amsfonts}
\usepackage{setspace}
\usepackage{multirow}
\usepackage{caption}
\DeclareCaptionLabelFormat{AppendixTables}{A.#2}
\usepackage{subcaption}
\usepackage{graphics}
\usepackage{lscape}
\usepackage{changepage}
\usepackage{graphicx}
\usepackage{xcolor,soul}
\usepackage{booktabs}
\usepackage{tikz-cd}
\usepackage[makeroom]{cancel}
\usepackage{mathrsfs}
\usepackage{algorithm}
\usepackage{algpseudocode}
\usepackage{tikz}
\usetikzlibrary{positioning, fit, backgrounds}

\usepackage{natbib}
\usepackage{xr-hyper}
\usepackage{hyperref}
\usepackage{subfiles}
\usepackage[bottom]{footmisc}
\usepackage{booktabs,longtable,pdflscape,array}

\tikzset{
  symbol/.style={
    draw=none,
    every to/.append style={
      edge node={node [sloped, allow upside down, auto=false]{$#1$}}}
  }
}

\theoremstyle{definition}
\newtheorem{theorem}{Theorem}[section]  \newtheorem{assumption}{Assumption}[section]  \newtheorem{lemma}{Lemma}[section] 
\newtheorem{corollary}{Corollary}[section] \newtheorem{proposition}{Proposition}[section]

\title{Risk-Optimal Curvature Selection for Finite-Sample Cressie--Read Moment Estimation}
\author{Jieun Lee\footnote{Corresponding author. Department of Economics, Emory University.  \textit{jieun.lee@emory.edu; \\jieunlee.sophia@gmail.com.}} \; and Anil K. Bera\footnote{Department of Economics, University of Illinois Urbana-Champaign. \textit{abera@illinois.edu.}}}

\begin{document}

\maketitle

\begin{abstract} 
We propose a finite-sample risk-optimal selection criterion for Cressie--Read
power divergence (CRPD) estimation in overidentified moment-based models. The
CRPD family, dual to generalized empirical likelihood, is indexed by the power
parameter \(\gamma\), with empirical likelihood and exponential tilting as
benchmarks. Although \(\gamma\) is conventionally fixed at a
researcher-chosen value, we argue that it should be interpreted as a data-tunable
curvature parameter governing the finite-sample behavior of the CRPD objective.
Through implied probability weights and associated Lagrange multipliers,
\(\gamma\) affects how the empirical distribution is reweighted to enforce the
moment restrictions, even when population identification is unchanged. The
proposed criterion selects \(\gamma\) by minimizing an estimation- and
system-oriented risk measure. It combines a structural component, which measures
finite-sample distortion in the estimate of the structural parameter relative to
a first-order GMM benchmark, with a multiplier-stability component, which
measures the cost of moment enforcement in the full estimator--multiplier
system. A researcher-specified weight determines the relative importance of the
two components, allowing the selection rule to prioritize structural accuracy,
multiplier stability, or a balance between them. The resulting selector is
designed to reduce second-order finite-sample distortion while discouraging
unstable multipliers, concentrated implied weights, and proximity to the
feasible-probability boundary. Monte Carlo simulations show that the selected
CRPD estimator remains approximately centered around the structural parameter
while improving finite-sample stability. An empirical illustration using Owen's
dairy-cow data shows that similar point estimates can correspond to different
implied weighting schemes, highlighting the practical role of \(\gamma\) as a
curvature parameter in moment-based estimation.

\vspace{2mm}

    \noindent Keywords: Small samples; Moment-based estimation; Cressie--Read power
divergence; Generalized empirical likelihood; Curvature parameter; Multiplier
stability; Finite-sample system risk.\\
    \noindent JEL codes: C13, C40, C51, C52.

\end{abstract}

\section{Introduction}

First-order asymptotic analysis is widely used because of its simplicity and
analytical tractability. Its usefulness, however, depends on whether the
first-order approximation provides an accurate description of the estimator in
the sample sizes relevant for empirical work. In many applications, this
requirement may not be well satisfied. Observations may be costly, rare, or
constrained by institutional or temporal factors, so that the sample size cannot
be meaningfully increased. Even when the nominal sample size is moderate, the
sample moments may still be noisy or weakly informative. In such environments,
estimation problems may display small-sample features, and first-order
approximations may provide an incomplete guide to finite-sample performance.

These considerations motivate the use of Cressie--Read power divergence (CRPD)
estimators \citep{CressieandRead1984} in moment-based estimation. CRPD
estimators are closely connected to the generalized empirical likelihood (GEL)
class through a well-known duality. The GEL literature has extensively
documented favorable finite-sample properties relative to GMM
(see, e.g., \citet{Owen1988}; \citet{QinandLawless1994};
\citet{HansenHeatonandYaron1996}; \citet{KitamuraandStutzer1997};
\citet{ImbensSpadyandJohnson1998}; \citet{NeweyandSmith2004};
\citet{Schennach2007}), particularly in small to moderate samples. Through this
duality, CRPD inherits a moment-based reweighting structure that incorporates
the finite-sample geometry of the moment restrictions into the estimation
procedure.

A key feature of CRPD is the \textit{power parameter}, denoted by \(\gamma\),
which indexes a family of divergence functions. Different values of
\(\gamma\) correspond to distinct members of the divergence family, with
canonical choices including empirical likelihood, exponential tilting, and
continuous updating. In practice, however, the power parameter is often chosen
by convention. Because fixed-\(\gamma\) estimators are first-order
asymptotically equivalent under correct specification
(see, e.g., \citet{NeweyandSmith2004}), the choice of \(\gamma\) is commonly
viewed as asymptotically irrelevant.

This paper argues that \(\gamma\) should be viewed not merely as a
conventionally fixed index of the CRPD family, but as a data-tunable curvature
parameter. We show that, although \(\gamma\) is first-order irrelevant under
correct moment specification, it governs finite-sample behavior through the
curvature of the divergence objective, implied probability weights, and
Lagrange multipliers. Building on this characterization, we propose a novel
second-order system-risk criterion that selects \(\gamma\) by balancing
structural accuracy against the stability of moment enforcement. The latter is
important because CRPD enforces the sample moment restrictions by reweighting
the empirical distribution; large multipliers or highly concentrated implied
weights indicate that moment restrictions are being satisfied in a potentially
unstable way.

The criterion includes a researcher-specified weight that determines the
relative importance of structural accuracy and moment-enforcement stability. A
larger weight on the structural component prioritizes reducing finite-sample
distortion in the estimator of the structural parameter, whereas a larger weight
on the stability component prioritizes stable Lagrange multipliers, less
concentrated implied weights, and distance from the feasible-probability
boundary. Thus, the selected value of \(\gamma\) can be tailored to the purpose
of the application while preserving the population target and first-order
inference.

The mechanism is as follows. The CRPD estimator enforces moment restrictions by
reweighting the empirical distribution. Starting from uniform probability
weights, it perturbs these weights so that the sample moments satisfy the
model's restrictions, and \(\gamma\) determines how such perturbations are
penalized. Under correct moment specification, this curvature choice does not
change population identification or the leading first-order asymptotic
distribution. Its role is instead higher-order: through the curvature of the
divergence criterion, \(\gamma\) affects the second-order behavior of both the
structural estimator and the associated Lagrange multipliers. These higher-order
effects matter for finite-sample performance because they determine bias,
sensitivity to sampling variation, implied-weight stability, and the cost of
enforcing the moment restrictions (see, e.g., \citet{Rothenberg1984};
\citet{NeweyandSmith2004}). Thus, \(\gamma\) does not change the population
target, but it affects how the estimator behaves around that target in finite
samples.

Our approach is closely related to \citet{NeweyandSmith2004}, who develop
higher-order expansions for GMM and GEL estimators and show that fixed members
of the GEL family can differ in finite samples despite sharing the same
first-order asymptotic distribution. Our contribution differs in two respects.
First, we propose a shift in the interpretation of \(\gamma\): rather than
treating it merely as a fixed index of a prespecified divergence criterion, we
view it as a finite-sample tuning parameter that governs the curvature of the
CRPD criterion. Second, building on this interpretation, we develop a
risk-based selection criterion for the optimal \(\gamma\). In our framework,
higher-order expansions are used to define a pseudo-true curvature target and estimate a risk-optimal value of
\(\gamma\).

The proposed criterion is novel because it evaluates \(\gamma\) at the level of
the full estimator--multiplier system. Specifically, we define a second-order
system-risk criterion that combines two components. The first is a structural
component, which measures finite-sample distortion in the estimate of the
structural parameter relative to a first-order GMM benchmark. The second is a
multiplier-stability component, which measures the finite-sample cost of
enforcing the moment restrictions through the Lagrange multipliers and implied
probability weights. A researcher-specified weight governs the relative
importance of these two components, allowing the selection rule to reflect
whether the application places greater emphasis on structural accuracy,
moment-enforcement stability, or a combination of both. The Lagrange multiplier
expansion is therefore not only an auxiliary step in characterizing the
structural estimator; it becomes central to the selection problem because it
quantifies the stability cost of moment enforcement.

We establish theoretical properties of the selected curvature parameter and of
the resulting CRPD estimator. In particular, we show that the selected estimator
preserves first-order validity under correct moment specification, while
allowing \(\gamma\) to affect higher-order bias, mean squared error, and
coverage distortion. The analysis therefore separates the first-order role of
the moment restrictions from the higher-order role of the divergence curvature.
This distinction is central to our interpretation: \(\gamma\) is not a
structural parameter, but a curvature parameter that governs the finite-sample
behavior of the estimator--multiplier system.

Our framework also shares a similar spirit with robust \(M\)-estimation, while
remaining grounded in a different framework. Classical robust \(M\)-estimators,
such as Huber-type estimators, modify an observation-level loss function to
reduce sensitivity to large residuals or outcome outliers
\citep{Huber1964,HuberandRonchetti2009}. CRPD/GEL, by contrast, is a
moment-based divergence estimator. Its robustness concern is therefore not
residual-level robustness in the classical Huber sense, but finite-sample
stability in the enforcement of moment restrictions. Nevertheless, both
perspectives emphasize that tuning the curvature of the estimating criterion can
matter for finite-sample performance. In robust \(M\)-estimation, tuning
constants govern the curvature of the residual loss. In CRPD, \(\gamma\) serves
as a finite-sample tuning parameter that governs the curvature of the divergence
criterion and thereby affects implied probability weights, Lagrange
multipliers, and sensitivity to moment discrepancies.

Our work is also related to \citet{Schennach2007}, who develops the
exponentially tilted empirical likelihood framework within a divergence-based
class indexed by a power parameter. While our setting differs, we share the
perspective that the power parameter can be viewed as a continuum-valued index
linking different estimators. The key distinction is that
\citet{Schennach2007} focuses on misspecified moment conditions, whereas we
maintain correct specification and show that the power parameter matters
through second-order effects. In our framework, \(\gamma\) shapes finite-sample
sensitivity and stability even when fixed-\(\gamma\) estimators are
first-order asymptotically equivalent.

Monte Carlo simulations show that the selected CRPD estimator remains
approximately centered around the structural parameter while improving
finite-sample stability. In particular, the risk-selected estimator discourages
unstable multipliers, concentrated implied probability weights, and proximity
to the feasible-probability boundary. An empirical illustration using
\citet{Owen2001}'s dairy-cow data further shows that similar point estimates can correspond
to substantially different implied weighting schemes. This finding indicates
that the choice of \(\gamma\) affects not only the structural estimate itself,
but also how the empirical distribution is reweighted to enforce the moment
restrictions. The application highlights the practical value of \(\gamma\) as a
curvature parameter: it can reveal differences in multiplier stability,
implied-weight concentration, and boundary proximity that may be obscured by
point estimates alone.

The rest of the paper is organized as follows. In Section 2, we illustrate how
the power parameter shapes the geometry of the CRPD criterion. In Section 3, we
develop the moment-based CRPD estimator and show that both the structural
parameter estimator and the associated Lagrange multipliers depend on the power
parameter at the second order, thereby affecting finite-sample properties. In
Section 4, we propose a novel finite-sample risk criterion for selecting the optimal
\(\gamma\) and develop the corresponding second-order theory. In Section 5, we
present Monte Carlo simulations. In Section 6, we provide an empirical
application. In Section 7, we conclude.

\section{The power parameter and the loss geometry}
Let $\boldsymbol{\pi} = (\pi_1, \ldots, \pi_n)$ denote the model-implied probability vector, where $\pi_i > 0$ for all $i = 1, \ldots, n$ and $\sum_{i=1}^n \pi_i = 1$.
Let $\mathbf{q} = (q_1, \ldots, q_n)$ denote a reference probability vector against which $\boldsymbol{\pi}$ is compared. For $\gamma\in\mathbb{R}\setminus\{-1,0\}$, the CRPD criterion \citep{CressieandRead1984} is defined as
\[
\mathscr I_\gamma(\boldsymbol{\pi},\mathbf{q})
=
\frac{1}{\gamma(\gamma+1)}
\sum_{i=1}^n \pi_i\left[\left(\frac{\pi_i}{q_i}\right)^\gamma-1\right].
\]

Observe that $\gamma$ governs the curvature of the divergence penalty and therefore shapes how the estimator trades off fidelity to the reference measure against satisfaction of the moment conditions. For large positive values of $\gamma$, the divergence grows rapidly for small deviations, tightly constraining the optimized weights $\pi_i$ around the reference distribution unless the sample moments strongly compel otherwise. As $\gamma$ approaches zero, the divergence converges to the Kullback--Leibler form, allowing moderate deviations at relatively low cost. When $\gamma$ is negative, the loss function becomes flatter around the center of the distribution while penalizing extreme distortions more sharply, effectively down-weighting observations that generate large moment residuals.

These changes in curvature have direct implications for estimation. In particular, the choice of $\gamma$ determines how the estimator responds to empirical features such as heavy tails, skewness, or localized moment violations. Smaller or negative values of $\gamma$ induce robustness by limiting the influence of extreme observations, while larger values enforce tighter adherence to the reference distribution. Consequently, different values of $\gamma$ correspond to different second-order sensitivities of the estimator to local misspecification.

In particular, as the reference distribution, we adopt the \textit{uniform distribution over the observed data points.} That is, we take $\mathbf{q}:=\mathbf{i}/n$ because by the Glivenko--Cantelli theorem, the empirical distribution—which assigns equal mass to each realized sample point—converges uniformly to the true data-generating distribution. Hence, uniform weighting over the sample points
(not over the population support) provides a consistent estimator of the true distribution.

The CRPD between $\boldsymbol{\pi}$ and $\mathbf{i}/n$ is defined as
\begin{equation}
\label{eq:CRPD}
\mathscr{I}_{\gamma}(\boldsymbol{\pi},\mathbf{i}/n)
=
\frac{1}{\gamma(\gamma+1)}
\sum_{i=1}^{n}
\pi_{i}\left[
\left(n\pi_i\right)^{\gamma}-1
\right],
\qquad \gamma \neq 0,-1,
\end{equation}
with continuous extensions at $\gamma=0$ and $\gamma=-1$. Up to an additive constant, $\mathscr{I}_{\gamma}(\boldsymbol{\pi},\mathbf{i}/n)$ admits the equivalent representations
\begin{equation*}
    \begin{cases}
        \mathscr{I}_{\gamma}(\boldsymbol{\pi},\mathbf{i}/n)\big|_{\gamma=0}
        =
        \sum\limits_{i=1}^{n}\pi_i \ln(\pi_i), \\[0.4em]
        \mathscr{I}_{\gamma}(\boldsymbol{\pi},\mathbf{i}/n)\big|_{\gamma=-1}
        =
        -\sum\limits_{i=1}^{n}\ln(\pi_i),
    \end{cases}
\end{equation*}
where the case $\gamma=0$ corresponds to the exponential tilting (ET) divergence, and the case $\gamma=-1$ corresponds to the empirical likelihood (EL) divergence.

Our objective is to formalize the interpretation of $\gamma$ as a parameter governing the geometry of the learning objective, rather than as a fixed modeling choice imposed ex ante. Establishing existence and uniqueness of $\gamma$ within the CRPD family is therefore essential to ensure that the learning objective is well defined and non-arbitrary.

A defining feature of the CRPD framework is that different values of $\gamma$ correspond to distinct loss functions, each characterized by a different curvature and sensitivity to deviations between the empirical distribution and the reference distribution. In this sense, $\gamma$ indexes a family of admissible learning objectives, with each value implying a different robustness profile, rather than acting as a tuning constant chosen independently of the estimation problem.

Observe that $\gamma$ determines the form of the implied loss function. For instance, when the underlying measurements are approximately Gaussian, the squared-error ($\ell_2$) loss is optimal in the sense of maximum likelihood. Within the CRPD family, this case corresponds to $\gamma = 1$, since
\begin{equation*}
    \mathscr{I}_{\gamma}(\pi,\mathbf{i}/n)\big|_{\gamma=1}
    =
    \frac{1}{2}\sum_{i=1}^{n}\pi_i\,[n\pi_i-1]
    =
    \frac{1}{2}\Bigg[
        n\sum_{i=1}^{n}\pi_i^2
        -
        \underbrace{\sum_{i=1}^{n}\pi_i}_{=1}
    \Bigg]
    =
    \frac{1}{2}\Bigg[
        n\sum_{i=1}^{n}\Big(\pi_i-\frac{1}{n}\Big)^2
    \Bigg],
\end{equation*}
which coincides with the squared Euclidean distance between the empirical weights and the uniform distribution. Other values of $\gamma$ induce alternative loss geometries, emphasizing different aspects of deviation and robustness to misspecification as summarized in Table \ref{tab:crpd_gamma_summary_corrected}.

\begin{table}[hbtp!]
\centering
\caption{CRPD loss geometry and special cases}
\label{tab:crpd_gamma_summary_corrected}
\resizebox{\textwidth}{!}{
\begin{tabular}{c c c p{6.5cm}}
\hline
$\gamma$ value 
& Core loss term 
& Geometry 
& Interpretation \\ 
\hline
$\gamma = -1$ (EL)
& $-\sum_i \log(n\pi_i)$
& Log barrier
& Empirical likelihood. The objective diverges as $\pi_i \downarrow 0$, strongly discouraging vanishing weights and enforcing interior solutions. \\

$\gamma = 0$ (ET)
& $\sum_i \pi_i \log(n\pi_i)$
& Logarithmic
& Exponential tilting. Corresponds to the Kullback--Leibler divergence from the uniform distribution; weights are smoothly reweighted and remain close to $1/n$ when the moment conditions are nearly satisfied. \\[0.6em]

$\gamma = 1$ (Pearson $\chi^2$)
& $\sum_i \pi_i^2$
& Quadratic
& Pearson $\chi^2$ divergence. Equivalent to the squared Euclidean distance from the uniform distribution on the probability simplex; in moment space it corresponds to a quadratic (Hotelling-type) distance. \\[0.6em]

\hline
\end{tabular}}
\end{table}

Interpreting the CRPD objective as a loss function clarifies why $\gamma$ should be regarded as a legitimate parameter of the learning problem rather than a fixed modeling convention. While conventional applications treat $\gamma$ as exogenously specified, such a practice implicitly fixes the curvature of the learning objective independently of the estimation problem. In contrast, our framework allows the curvature parameter $\gamma$ to be treated as an object of analysis, whose role can be formally characterized.



\section{Moment-based CRPD estimation}
We consider CRPD estimation under moment conditions.  Let $\{Z_i\}_{i=1}^n$ be an i.i.d.\ sample drawn from a probability distribution $P$
on $(\mathcal X,\mathcal A)$, where $Z_i=(Y_i,X_i)$ collects the observed outcome and covariates for unit $i$.
Let $Z=(Y,X)$ denote a generic draw from $P$.
Let $g:\mathcal X\times\Theta\to\mathbb R^q$ be a measurable moment function.
We assume that the true parameter $\theta_0\in\Theta$ satisfies the population moment condition
$
E\bigl[g(Z,\theta_0)\bigr]=0.
$ Let $g_i(\theta):=g(Z_i,\theta)\in\mathbb{R}^q$ and define
$
\bar g(\theta):=\frac{1}{n}\sum_{i=1}^n g_i(\theta),
\;
\hat\Omega(\theta):=\frac{1}{n}\sum_{i=1}^n g_i(\theta)g_i(\theta)^{\prime}.
$
Let $G_i(\theta):=\partial g_i(\theta)/\partial\theta^{\prime}$, and define
$
\bar G(\theta):=\frac{1}{n}\sum_{i=1}^n G_i(\theta),
\;
G_0:=E[G(Z,\theta_0)],
\;
\Omega_0:=E[g(Z,\theta_0)g(Z,\theta_0)^{\prime}].
$
Let $H_i(\theta)$ denote the collection of second derivatives of $g_i(\theta)$ w.r.t.\ $\theta$
(i.e.\ a $q\times p\times p$ tensor), and assume $E[\sup_{\theta\in\Theta}\|H(Z,\theta)\|]<\infty$. \vspace{2mm}

\subsection{Moment-based CRPD estimator}
Fix $\gamma\in\Gamma$ with $\gamma\neq 0$ and $\gamma\neq -1$. Define the CRPD estimator:
\begin{equation*}
    (\hat{\theta}_{\gamma},\hat{\boldsymbol{\pi}}_{\gamma})=\arg\min_{\theta,\boldsymbol{\pi}}\mathscr{I}_{\gamma}(\boldsymbol{\pi},\mathbf{i}/n) \quad \text{such that }\sum_{i=1}^{n}\pi_{i}=1, \; \sum_{i=1}^{n}\pi_{i}g_{i}(\theta)=0, \; \pi_{i}\geq0.
\end{equation*}
The Lagrange function is
\begin{equation}
\begin{split}
    \mathcal{L}(\boldsymbol{\pi},\theta,\lambda,\delta)
    &=\mathscr{I}_{\gamma}(\boldsymbol{\pi},\mathbf{i}/n)+\lambda^{\prime}\sum_{i=1}^{n}\pi_{i}g_{i}(\theta)+\delta\Big(\sum_{i=1}^{n}\pi_{i}-1\Big)\\
    &=\frac{1}{\gamma(\gamma+1)}\sum\limits_{i=1}^{n}\pi_{i}\left[(n\pi_{i})^{\gamma}-1\right]+\lambda^{\prime}\sum\limits_{i=1}^{n}\pi_{i}g_{i}+\delta\left(\sum\limits_{i=1}^{n}\pi_{i}-1\right). \label{obj_Lagrangian}
\end{split}
\end{equation}

\noindent The first derivative of \eqref{obj_Lagrangian} with respect to $\pi_{i}$ is
\begin{equation}\label{foc_pi_i}
\begin{split}
    \frac{\partial \mathcal{L}}{\partial \pi_{i}}
    &=\frac{(1+\gamma)(n\pi_{i})^{\gamma}-1}{\gamma(\gamma+1)}+\lambda^{\prime} g_{i}(\theta)+\delta \\
    &=\frac{1}{\gamma}(n\pi_{i})^{\gamma}-\frac{1}{\gamma(\gamma+1)}+\lambda^{\prime}g_{i}(\theta)+\delta.
\end{split}
\end{equation}
\noindent The first-order condition (FOC) is obtained by setting \eqref{foc_pi_i} equal to zero. Observe
\begin{equation*}
    \hspace{7mm}0=\frac{1}{\gamma}(n\pi_{i})^{\gamma}-\frac{1}{\gamma(\gamma+1)}+\delta+\lambda^{\prime}g_{i}(\theta).
\end{equation*}
Rearranging terms,
\begin{equation*}
    \frac{1}{\gamma}(n\pi_{i})^{\gamma}=\frac{1}{\gamma(\gamma+1)}-\delta-\lambda^{\prime}g_{i}(\theta).
\end{equation*}
Multiplying $\gamma$,
\begin{equation*}
    (n\pi_{i})^{\gamma}=\frac{1}{\gamma+1}-\gamma\delta-\gamma\lambda^{\prime}g_{i}(\theta).
\end{equation*}
Thus,
\begin{equation}\label{pi_i}
    \pi_{i}=\frac{1}{n}\left(\frac{1}{\gamma+1}-\gamma\delta-\gamma\lambda^{\prime}g_{i}(\theta)\right)^{1/\gamma}.
\end{equation} \vspace{2mm}

\noindent
Before proceeding to the asymptotic analysis, it is useful to characterize the
population solution of the CRPD problem under correct specification. The
following lemma establishes the benchmark values of the parameters and
multipliers around which the subsequent expansions are developed.

\begin{lemma}\label{lem:population_solution}(Population solution under correct specification)
Under correct specification, the solution to the CRPD problem lies in a neighborhood of
\[
\theta=\theta_{0}, 
\qquad 
\lambda=0, 
\qquad 
\pi_{i}=1/n, 
\qquad 
\delta_{0}=-\frac{1}{\gamma+1}.
\]
\end{lemma}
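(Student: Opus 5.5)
I will verify that the point $(\theta_0,\lambda,\boldsymbol{\pi},\delta)=(\theta_0,0,\mathbf{i}/n,-1/(\gamma+1))$ satisfies the first-order conditions at the population level. The finite-sample solution is then a perturbation of it whose size is controlled by $\bar g(\theta_0)=O_p(n^{-1/2})$. This has two stages: an algebraic check that this point solves the stationarity system when the sample moment vanishes, and a continuity/implicit-function argument showing the sample solution lies in a shrinking neighborhood of it.

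\emph{Step 1 (the normalization pins down $\delta_0$).} Setting $\lambda=0$ in \eqref{pi_i} gives $\pi_i=\frac{1}{n}\bigl(\frac{1}{\gamma+1}-\gamma\delta\bigr)^{1/\gamma}$, which is the same for every $i$. The adding-up constraint $\sum_i\pi_i=1$ then forces $\pi_i=1/n$, so $\frac{1}{\gamma+1}-\gamma\delta=1$. Solving gives
\[
\delta=\frac{1}{\gamma}\Bigl(\frac{1}{\gamma+1}-1\Bigr)=-\frac{1}{\gamma+1}.
\]
Substituting into \eqref{foc_pi_i} confirms $\partial\mathcal L/\partial\pi_i=\frac{1}{\gamma}-\frac{1}{\gamma(\gamma+1)}-\frac{1}{\gamma+1}=0$. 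The moment constraint becomes $\frac1n\sum_i g_i(\theta)=\bar g(\theta)=0$. In the population, i.e.\ replacing sample averages by expectations, this holds at $\theta_0$ by correct specification.

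\emph{Step 2 (the $\theta$ condition).} The stationarity condition in $\theta$ is $\sum_i\pi_i G_i(\theta)'\lambda=0$, and it holds trivially at $\lambda=0$. So the population system is satisfied exactly at the stated point. I would also note that $\mathscr I_\gamma\ge 0$ with equality iff $\boldsymbol{\pi}=\mathbf{i}/n$, by strict convexity of $t\mapsto t[(t)^\gamma-1]/(\gamma(\gamma+1))$ and Jensen. Hence this point is the global minimizer whenever the uniform weights are feasible, which is the population analogue.

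\emph{Step 3 (sample solution in a neighborhood; the main obstacle).} In the sample, $\bar g(\theta_0)\neq0$ generally. I would write the stationarity system as $F(\theta,\lambda;\bar g,\bar G)=0$ after profiling out $\pi$ and $\delta$ via \eqref{pi_i} and the normalization. Its Jacobian in $(\theta,\lambda)$ at the population point is the standard GEL matrix
\[
\begin{pmatrix}0&G_0'\\ G_0&\Omega_0\end{pmatrix}
\]
up to sign and $\gamma$-scaling; this uses that the second derivative of the divergence at $n\pi_i=1$ equals $1$, independent of $\gamma$. This matrix is nonsingular if $G_0$ has full column rank and $\Omega_0$ is positive definite. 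I would impose these as identification conditions, together with uniform convergence of $\bar g,\bar G,\hat\Omega$, which follows from the dominance condition on $H$ and a ULLN. The implicit function theorem then yields a solution with $\hat\theta-\theta_0=O_p(n^{-1/2})$, $\hat\lambda=O_p(n^{-1/2})$, $n\hat\pi_i-1=O_p(n^{-1/2}\max_i\|g_i\|)$, and $\hat\delta-\delta_0$ small.

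The delicate part is this last step. I must ensure $\max_i|\hat\lambda'g_i(\hat\theta)|=o_p(1)$, so that the base $\frac{1}{\gamma+1}-\gamma\delta-\gamma\lambda'g_i$ stays positive and the power $1/\gamma$ is well defined, keeping the weights interior. I would use the standard argument of Newey and Smith (2004): a moment bound $E\|g\|^{\alpha}<\infty$ for some $\alpha>2$ gives $\max_i\|g_i\|=o_p(n^{1/\alpha})$, which combined with $\hat\lambda=O_p(n^{-1/2})$ gives uniform smallness.
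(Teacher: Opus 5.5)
Your Steps 1--2 match the paper's proof. The paper also notes that under correct specification the reference distribution already satisfies the moment restriction. It invokes unique minimization of the divergence at the reference to get $\lambda_0=0$, and substitutes $\pi_i=1/n$ into \eqref{pi_i} to obtain $\delta_0=-1/(\gamma+1)$. Your Jensen argument, via the second derivative $t^{\gamma-1}>0$ of $t\mapsto t(t^{\gamma}-1)/(\gamma(\gamma+1))$, actually proves the unique-minimization fact that the paper only asserts.

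The genuine difference is the sample-level ``neighborhood'' claim. The paper handles it heuristically, citing Glivenko--Cantelli, $\bar g(\theta_0)=O_p(n^{-1/2})$, and consistency of the CRPD estimator, which is only established later in Theorem~\ref{thm:theta_consistency} and Corollary~\ref{cor:mult_consistency}. Your implicit-function argument around the nonsingular GEL Jacobian is self-contained and gives more. It yields $O_p(n^{-1/2})$ rates for $(\hat\theta,\hat\lambda)$, and it derives interior weights, which the paper instead imposes as Assumption~\ref{assumption:interior_feasibility} and then uses in Lemma~\ref{lem:exist_unique} and Theorem~\ref{thm:lambda_delta_rates}.

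The cost is stronger assumptions: full-rank $G_0$, $\Omega_0\succ 0$, uniform convergence of derivatives, and a moment of order $\alpha>2$. Two points also need tightening.
\begin{itemize}
\item The bound $\max_i|\lambda'g_i(\theta)|=o_p(1)$ must be proved uniformly over a shrinking set $\{\|\lambda\|\le n^{-\zeta}\}$ with $1/\alpha<\zeta<1/2$, before the $n^{-1/2}$ rate is available (the Newey--Smith device). Otherwise you use the rate to justify the very expansion that produces it.
\item The implicit function theorem only yields a locally unique stationary point. To identify it with \emph{the} CRPD minimizer, you still need an argmin-consistency step over compact $\Theta$, of the kind in Theorem~\ref{thm:theta_consistency}.
\end{itemize}
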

\noindent \textbf{Proof.} See Appendix \ref{subsec:pf_lem-population-solution}. \vspace{2mm}

\noindent
Lemma \ref{lem:population_solution} identifies the population reference point for the CRPD
optimization problem. Under correct specification, the population solution
corresponds to uniform weights $\pi_i = 1/n$ with $\lambda = 0$ and
$\delta_0 = -1/(\gamma + 1)$. This characterization provides the baseline
around which the sample estimators and Lagrange multipliers are expanded in the
subsequent analysis.

\subsection{Asymptotic analysis and finite sample properties}
For each $\theta$, the pair $(\lambda^{\prime},\delta)$ is determined by the two constraints
\begin{equation}\label{eq:constraints}
\sum_{i=1}^n \pi_i(\theta,\lambda,\delta)=1,
\qquad
\sum_{i=1}^n \pi_i(\theta,\lambda,\delta)\,g_i(\theta)=0.
\end{equation}
Define the stacked system
\[
\Psi_n(\theta,\lambda,\delta)
:=
\begin{pmatrix}
\Psi_{n,1}(\theta,\lambda,\delta)\\
\Psi_{n,2}(\theta,\lambda,\delta)
\end{pmatrix}
=
\begin{pmatrix}
\frac{1}{n}\sum_{i=1}^n w_i(\theta,\lambda,\delta)-1\\[4pt]
\frac{1}{n}\sum_{i=1}^n w_i(\theta,\lambda,\delta)\,g_i(\theta)
\end{pmatrix},
\]
where $w_i(\theta,\lambda,\delta):=
\Big(\frac{1}{\gamma+1}-\gamma\delta-\gamma\lambda^{\prime}g_i(\theta)\Big)^{1/\gamma}$, so that \eqref{eq:constraints} is equivalent to $\Psi_n(\theta,\lambda,\delta)=0$.

For each $\theta$, let $(\hat\lambda(\theta),\hat\delta(\theta))$ denote a solution of
$\Psi_n(\theta,\lambda,\delta)=0$ (existence/uniqueness shown below) and define
$\hat{\boldsymbol{\pi}}(\theta)=\boldsymbol{\pi}(\theta,\hat\lambda(\theta),\hat\delta(\theta))$.
Define the profiled objective
\[
L_n(\theta):=\mathscr{I}_{\gamma}\big(\hat{\boldsymbol{\pi}}(\theta),\textbf{i}/n\big),
\qquad
\hat\theta\in\arg\min_{\theta\in\Theta}L_n(\theta).
\]

We assume the followings.

\begin{assumption}\label{assumption:theta_compact}
    (Compactness) $\Theta$ is compact and $\theta_0\in\mathrm{int}(\Theta)$.
\end{assumption} 

\begin{assumption}\label{assumption:smooth_envolope}
    (Smoothness and envelope) $g(z,\theta)$ is continuously differentiable in $\theta$
for $P$-a.e.\ $z$, with Jacobian $G(z,\theta):=\partial g(z,\theta)/\partial\theta^{\prime}$.
Moreover,
\[
E\Big[\sup_{\theta\in\Theta}\|g(Z,\theta)\|^2\Big]<\infty,
\qquad
E\Big[\sup_{\theta\in\Theta}\|G(Z,\theta)\|\Big]<\infty.
\]
\end{assumption}

\begin{assumption}\label{assumption:identification}
(Identification) $E[g(Z,\theta_0)]=0$ and $E[g(Z,\theta)]\neq 0$ for $\theta\neq\theta_0$.
Let
\[
G_0:=E[G(Z,\theta_0)],\qquad
\Omega_0:=E[g(Z,\theta_0)g(Z,\theta_0)^{\prime}],
\]
and assume $G_0$ has full column rank and $\Omega_0$ is positive definite.
\end{assumption}

\begin{assumption}\label{assumption:interior_feasibility} (Interior feasibility / positivity) There exists a neighborhood $\mathcal{N}$ of $\theta_0$
and a constant $\kappa>0$ such that, with probability approaching one, for each $\theta\in\mathcal{N}$ there exist
$(\lambda,\delta)$ with $\|\lambda\|+\|\delta-\delta_0\|\le \kappa$ satisfying
\[
\frac{1}{\gamma+1}-\gamma\delta-\gamma\lambda^{\prime}g_i(\theta)\ge \kappa
\quad\text{for all } i,
\]
and the constraints \eqref{eq:constraints}.
\end{assumption}

Assumption~\ref{assumption:theta_compact} ensures that the parameter space is bounded and closed, which facilitates uniform convergence arguments and guarantees the existence of minimizers of the profiled criterion. The interior condition $\theta_0\in\mathrm{int}(\Theta)$ rules out boundary issues and permits local expansions around the true parameter. Assumption~\ref{assumption:smooth_envolope} provides the regularity required for both uniform laws of large numbers and Taylor expansions. Continuous differentiability of $g(z,\theta)$ in $\theta$ ensures that the multiplier system and the profiled objective are smooth functions of $\theta$. The envelope conditions guarantee integrable bounds that allow uniform convergence of sample moments and control stochastic equicontinuity. Assumption~\ref{assumption:identification} ensures that the population moment condition $E[g(Z,\theta)]=0$ uniquely characterizes $\theta_0$. The full column rank of $G_0$ guarantees local identification and invertibility of the Jacobian in the multiplier system, while the positive definiteness of $\Omega_0$ ensures nondegeneracy of the moment covariance matrix. Together, these conditions imply both consistency and asymptotic normality of the estimator. Assumption~\ref{assumption:interior_feasibility} is a technical regularity condition ensuring that the implied weights are well-defined and that the multiplier system remains smooth in a neighborhood of $\theta_0$. The uniform positivity bound prevents the implied probabilities from approaching the boundary of the simplex. This guarantees interior solutions, avoids singularities in the CRPD power transformation, and ensures that the mapping $(\theta,\lambda,\delta)\mapsto w_i(\theta,\lambda,\delta)$ is continuously differentiable on the relevant region.

\begin{theorem}\label{thm:uniform_conv_profiled}(Uniform convergence of the profiled objective)
Fix the hyperparameter $\gamma$. Under Assumptions 1--4,
there exists a deterministic function $L(\theta)$ such that
\[
\sup_{\theta\in\Theta}\bigl|L_n(\theta)-L(\theta)\bigr|
\xrightarrow{p}0,
\]
and $L(\theta)$ is uniquely minimized at $\theta_0$.
\end{theorem}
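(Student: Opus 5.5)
The plan is to use the GEL/CRPD duality to write $L_n(\theta)$ as an explicit smooth functional of sample averages. Then I will show that this functional converges uniformly to the analogous population functional $L(\theta)$, and that $L$ is a divergence which vanishes exactly when $E[g(Z,\theta)]=0$.

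\textbf{Step 1 (dual representation).} For fixed $\theta$, the inner problem is convex in $\boldsymbol{\pi}$. By \eqref{pi_i} and the constraints \eqref{eq:constraints}, $L_n(\theta)$ equals the value of the concave dual
\[
\sup_{\lambda,\delta}\;\frac1n\sum_{i=1}^n \rho_\gamma\bigl(\lambda,\delta; g_i(\theta)\bigr),
\]
where $\rho_\gamma$ is obtained by substituting $w_i(\theta,\lambda,\delta)$ into the Lagrangian. It is a smooth concave function of $\lambda' g_i(\theta)$ on the region where $\frac{1}{\gamma+1}-\gamma\delta-\gamma\lambda'g_i(\theta)\ge\kappa$. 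Assumption~\ref{assumption:interior_feasibility} guarantees that the maximizer lies in the compact set $\Lambda_\kappa=\{\|\lambda\|+|\delta-\delta_0|\le\kappa\}$ with the positivity bound in force.

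On this set the population analogue is $L(\theta)=\sup_{(\lambda,\delta)\in\Lambda_\kappa}E[\rho_\gamma(\lambda,\delta;g(Z,\theta))]$. Equivalently, $L(\theta)$ is the minimal $\mathscr I_\gamma$-divergence from $P$ over measures $Q\ll P$ with $E_Q[g(Z,\theta)]=0$.

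\textbf{Step 2 (uniform convergence).} I will apply a uniform law of large numbers over the compact index set $\Theta\times\Lambda_\kappa$ to $\rho_\gamma(\lambda,\delta;g(Z,\theta))$. Continuity in $(\theta,\lambda,\delta)$ follows from Assumption~\ref{assumption:smooth_envolope} together with the positivity bound. For the envelope, since $w_i$ is bounded below by $\kappa^{1/\gamma}$ and $\rho_\gamma$ has a locally bounded second derivative, a second-order Taylor expansion around $\lambda=0$ gives $|\rho_\gamma|\le C(1+\|g\|^2)$. This is integrable by $E\sup_\theta\|g\|^2<\infty$.

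The ULLN then gives $\sup_{\Theta\times\Lambda_\kappa}|\hat\rho_n-\rho|\to_p0$. Because $|\sup f-\sup h|\le\sup|f-h|$, this transfers to $\sup_\theta|L_n(\theta)-L(\theta)|\to_p0$.

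\textbf{Step 3 (unique minimum).} At $\theta_0$, the choice $Q=P$ is feasible, so $L(\theta_0)=0$; this is consistent with Lemma~\ref{lem:population_solution}, where $\lambda=0$ and $\pi_i=1/n$. For $\theta\neq\theta_0$, Assumption~\ref{assumption:identification} gives $E[g(Z,\theta)]\neq0$, so $Q=P$ is infeasible. Strict convexity of the CR divergence then yields $L(\theta)>0$. In the dual, this corresponds to taking $\lambda=-t\,E[g(Z,\theta)]$ for small $t>0$, which raises the objective strictly above its value at $\lambda=0$, because the first-order term equals $-\lambda'E[g]>0$.

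\textbf{Main obstacle.} The main difficulty lies in Steps 1 and 3 away from $\mathcal N$. Assumption~\ref{assumption:interior_feasibility} only controls $(\lambda,\delta)$ locally near $\theta_0$, whereas uniformity is required over all of $\Theta$. My first approach will be to define $L_n$ and $L$ through the dual restricted to $\Lambda_\kappa$ for every $\theta$. This is a common GEL device, as in Newey--Smith, and it is harmless because $\hat\theta$ lies in $\mathcal N$ with probability approaching one. Positivity of $L$ off $\theta_0$ then needs only the local perturbation argument above, combined with continuity of $L$ on compact $\Theta$, which yields $\inf_{\|\theta-\theta_0\|\ge\varepsilon}L(\theta)>0$.
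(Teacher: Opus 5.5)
\textbf{Verdict: genuine gap.} Your saddle-point route differs from the paper's, and as written Steps~2 and~3 rest on a hypothesis that Assumptions 1--4 do not supply.

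\textbf{How the routes differ.} The paper proves a ULLN for the constraint map $\Psi_n$ on $\mathcal N\times\mathcal H$. It then passes to the profiled multipliers by a uniform $Z$-estimation argument and finishes with uniform continuity of $\mathscr I_\gamma(\pi(\theta,\lambda,\delta),\mathbf{i}/n)$. It only reaches $\sup_{\theta\in\mathcal N}|L_n-L|\xrightarrow{p}0$, and it asserts the unique minimum rather than proving it. Your version is cleaner in two respects. The bound $|\sup f-\sup h|\le\sup|f-h|$ removes any need to control $\hat\lambda(\theta)$, and the dual perturbation actually proves identification.

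\textbf{Where Steps 2 and 3 break.} Your ULLN requires $s(\theta,\lambda,\delta;Z):=\frac{1}{\gamma+1}-\gamma\delta-\gamma\lambda'g(Z,\theta)\ge\kappa$ at \emph{every} point of $\Theta\times\Lambda_\kappa$ for a.e.\ $Z$. Only then is $\rho_\gamma$ finite, continuous, and dominated by an integrable envelope there. Assumption~\ref{assumption:interior_feasibility} gives the bound only at one feasible $(\lambda,\delta)$ per $\theta\in\mathcal N$, and only for the realized sample. Moreover, $\Lambda_\kappa$ contains both $\lambda$ and $-\lambda$, so positivity on the whole ball forces $\|g(Z,\theta)\|\le(1-\kappa)/(|\gamma|\kappa)$ a.s. That boundedness is not implied by Assumption~\ref{assumption:smooth_envolope}.

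Without boundedness the argument fails outright for $\gamma<0$. There the inner infimum over $\pi_i\ge0$ equals $-\infty$ whenever $s<0$. Take any $\theta$ where $g(Z,\theta)$ has support unbounded in every direction; nothing in Assumptions 1--4 rules this out for $\theta\notin\mathcal N$. Then $E[\rho_\gamma(\lambda,\delta;g(Z,\theta))]=-\infty$ for all $\lambda\neq0$. Your restricted population dual therefore equals $\sup_\delta E[\rho_\gamma(0,\delta;g(Z,\theta))]=0=L(\theta_0)$, even though $E[g(Z,\theta)]\neq0$. So the perturbation in Step~3 has nothing to act on. A minor sign point: with the paper's Lagrangian, $\partial\rho_\gamma/\partial\lambda=g$ at $(0,\delta_0)$, so the improving direction is $\lambda=+tE[g]$, not $-tE[g]$.

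\textbf{The redefinition off $\mathcal N$.} This does not repair the problem, and it changes the statement. The theorem concerns the actual profiled $L_n$ on $\Theta$. Justifying the substitution by ``$\hat\theta\in\mathcal N$ w.p.a.1'' is circular, because that is Theorem~\ref{thm:theta_consistency}, which the paper derives from this result.

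What a restricted dual legitimately gives is weak duality: $L_n(\theta)\ge\tilde L_n(\theta):=\sup_{\Lambda_\kappa}\hat\rho_n(\theta,\cdot)$ for all $\theta$, with equality on $\mathcal N$ w.p.a.1. If $\tilde L_n$ converges uniformly to a limit uniquely minimized at $\theta_0$, this yields consistency of $\arg\min L_n$. It does not yield two-sided uniform convergence of $L_n$ over $\Theta$.

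\textbf{What would make your route work.} Even the weaker consistency conclusion needs the true dual function. For $\gamma>0$, the constraint $\pi_i\ge0$ produces $\rho_\gamma=-\frac{1}{\gamma+1}(s)_+^{(\gamma+1)/\gamma}-\delta$, which is finite and concave everywhere. Your Steps~2--3 then go through for $\tilde L_n$ under $E\sup_\theta\|g(Z,\theta)\|^{(1+\gamma)/\gamma}<\infty$. When $0<\gamma<1$ this exceeds your $C(1+\|g\|^2)$ envelope, because a lower bound on $s$ does not control $s^{1/\gamma-1}$. For $\gamma<0$ you need bounded $g$, or a Newey--Smith shrinking-$\Lambda_n$ argument; the latter delivers consistency without any deterministic limit $L$.
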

\noindent \textbf{Proof.} See Appendix \ref{subsec:pf_thm-uniform-conv-profiled}.

\noindent
Theorem \ref{thm:uniform_conv_profiled} shows that the profiled sample objective $L_n(\theta)$ converges uniformly to a deterministic population criterion $L(\theta)$. This result ensures that the finite-sample objective approximates a well-defined population problem. Because $L(\theta)$ is uniquely minimized at $\theta_0$, the minimizer of the sample objective will concentrate near the true parameter as the sample size increases. This property forms the basis for establishing consistency of the CRPD estimator in the subsequent results.

To analyze the profiled criterion $L_n(\theta)$, we must first ensure that the 
Lagrange multipliers solving the constraint system are well defined. 
In particular, we require that for $\theta$ in a neighborhood of the true value 
$\theta_0$, the system of first-order conditions admits a unique solution for 
the multipliers. The following lemma establishes the local existence and 
uniqueness of $(\hat\lambda(\theta),\hat\delta(\theta))$.
$(\hat\lambda(\theta),\hat\delta(\theta))$.
\begin{lemma}[Local existence and uniqueness of multipliers]
\label{lem:exist_unique}
Under Assumptions \ref{assumption:theta_compact},\ref{assumption:smooth_envolope},\ref{assumption:identification},\ref{assumption:interior_feasibility}, there exists a neighborhood $\mathcal{N}$ of $\theta_0$ such that,
with probability approaching one, for every $\theta\in\mathcal{N}$ the system
$\Psi_n(\theta,\lambda,\delta)=0$ has a unique solution $(\hat\lambda(\theta),\hat\delta(\theta))$
satisfying $\|\hat\lambda(\theta)\|+\|\hat\delta(\theta)-\delta_0\|\le \kappa$.
\end{lemma}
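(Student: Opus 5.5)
Existence is handed to us by Assumption~\ref{assumption:interior_feasibility}: with probability approaching one, for every $\theta\in\mathcal N$ there is a pair $(\lambda,\delta)$ in the set
\[
B_\kappa:=\{(\lambda,\delta):\|\lambda\|+\|\delta-\delta_0\|\le\kappa\}
\]
that satisfies the constraints \eqref{eq:constraints} and keeps every base $\frac{1}{\gamma+1}-\gamma\delta-\gamma\lambda'g_i(\theta)$ at least $\kappa$. So the real work is uniqueness. I would prove it by a local inverse-function (Jacobian nondegeneracy) argument for the map $(\lambda,\delta)\mapsto\Psi_n(\theta,\lambda,\delta)$. Where needed I would shrink $\mathcal N$ and the radius $\kappa$, which the statement permits.

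\textbf{Step 1: the Jacobian at the reference point.} I would differentiate $w_i$ directly:
\[
\partial w_i/\partial\delta=-\,b_i^{1/\gamma-1},\qquad \partial w_i/\partial\lambda=-\,b_i^{1/\gamma-1}g_i(\theta),
\]
where $b_i:=\frac{1}{\gamma+1}-\gamma\delta-\gamma\lambda'g_i(\theta)$. At $\lambda=0$ and $\delta=\delta_0=-1/(\gamma+1)$ (Lemma~\ref{lem:population_solution}) we get $b_i=1$. Hence
\[
\frac{\partial\Psi_n}{\partial(\delta,\lambda')}\Big|_{(\theta,0,\delta_0)}
=-\begin{pmatrix}1 & \bar g(\theta)'\\ \bar g(\theta) & \hat\Omega(\theta)\end{pmatrix}
=:-J_n(\theta).
\]
By Assumption~\ref{assumption:smooth_envolope}, a uniform law of large numbers gives $\bar g(\theta)\to E g(Z,\theta)$ and $\hat\Omega(\theta)\to E[gg']$ uniformly on $\Theta$. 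The Schur complement of $J_n(\theta)$ is $\hat\Omega(\theta)-\bar g(\theta)\bar g(\theta)'$, which converges to $\mathrm{Var}(g(Z,\theta))$. At $\theta_0$ this limit equals $\Omega_0$, positive definite by Assumption~\ref{assumption:identification}. By continuity in $\theta$, shrinking $\mathcal N$ makes the eigenvalues of $J_n(\theta)$ bounded below by some $c>0$ uniformly on $\mathcal N$, with probability approaching one.

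\textbf{Step 2: uniform control of the Jacobian on the whole ball.} This is the main obstacle. I need the Jacobian at an arbitrary $(\lambda,\delta)\in B_\kappa$ to stay within $c/2$ of $-J_n(\theta)$, but the bases $b_i$ depend on $\lambda'g_i$ and $g_i$ may be unbounded. My first attempt would use $\max_i\sup_\theta\|g_i(\theta)\|=o_p(n^{1/2})$, which follows from the square-integrable envelope. Combined with the positivity bound $b_i\ge\kappa$ from Assumption~\ref{assumption:interior_feasibility}, this bounds $b_i^{1/\gamma-1}$ and $b_i^{1/\gamma-2}$. The mean value theorem then gives
\[
\Big\|\frac{\partial\Psi_n}{\partial(\delta,\lambda')}(\lambda,\delta)+J_n(\theta)\Big\|
\le C\,(\|\lambda\|+|\delta-\delta_0|)\,\frac1n\sum_i(1+\|g_i\|)^3 .
\]
The third-moment factor needs more than the stated envelope. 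I would therefore either truncate via the positivity region, or, if that fails, shrink $\kappa$ to $\kappa'$ and argue only on $B_{\kappa'}$, using consistency of the existence solution to show it lies in $B_{\kappa'}$. For the latter, the feasible $(\lambda,\delta)$ is $O_p(\|\bar g(\theta)\|)$, which is small uniformly on a shrunken $\mathcal N$.

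\textbf{Step 3: conclude uniqueness.} On a convex set where the symmetric part of $-\partial\Psi_n/\partial(\delta,\lambda')$ is positive definite, the map is injective. Indeed, for two roots $u_1\neq u_2$, integrating the Jacobian along the segment between them gives
\[
0=(u_1-u_2)'\{\Psi_n(u_1)-\Psi_n(u_2)\}<0,
\]
a contradiction. Combining this with the existence from Assumption~\ref{assumption:interior_feasibility} yields a unique $(\hat\lambda(\theta),\hat\delta(\theta))$ in the ball, for all $\theta\in\mathcal N$ simultaneously, with probability approaching one. Finally, the implicit function theorem, applied with the same nonsingular Jacobian, gives continuous differentiability in $\theta$, which the later expansions will need.
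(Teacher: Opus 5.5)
Your route differs from the paper's. The paper evaluates the Jacobian only at $(0,\delta_0)$ and gets nonsingularity from the Schur complement with respect to $\hat\Omega(\theta)$, using $\bar g(\theta)=O_p(n^{-1/2})$ on a shrinking neighborhood. It then invokes the implicit function theorem, which gives uniqueness only in some small neighborhood of $(0,\delta_0)$. Your existence-plus-strict-monotonicity plan aims higher: uniqueness on the whole $\kappa$-ball over a fixed $\mathcal N$. Your Step~1, with Schur complement $\hat\Omega(\theta)-\bar g(\theta)\bar g(\theta)'\to\mathrm{Var}(g(Z,\theta))$, is the right way to get a fixed neighborhood.

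However, Step~2 as written is a genuine gap, and neither fallback closes it.
\begin{itemize}
\item Your perturbation bound needs $\frac1n\sum_i(1+\|g_i\|)^3=O_p(1)$. Assumption~\ref{assumption:smooth_envolope} does not give this; it only bounds that average by $o_p(n^{1/2})$.
\item The positivity bound $b_i\ge\kappa$ in Assumption~\ref{assumption:interior_feasibility} holds only at the asserted feasible point, not throughout $B_\kappa$. So $b_i^{1/\gamma-2}$ is not controlled on the ball.
\item Shrinking to a fixed $\kappa'$ leaves the random factor untouched, so the bound is still not small.
\item It also proves uniqueness only in $B_{\kappa'}$, which does not exclude a second root in $B_\kappa\setminus B_{\kappa'}$. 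Here $\kappa$ is the constant of Assumption~\ref{assumption:interior_feasibility}, not a free parameter.
\item The claim that the feasible root is $O_p(\|\bar g(\theta)\|)$ presupposes exactly the Jacobian lower bound you have not established.
\item Closeness to $-J_n(\theta)$ can genuinely fail: for $0<\gamma<1$ the factors $b_i^{1/\gamma-1}$ grow with $\|g_i\|$.
\end{itemize}

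The missing observation is that you never need closeness to $-J_n(\theta)$. At every $(\lambda,\delta)$ with all $b_i>0$,
\[
-\frac{\partial\Psi_n}{\partial(\delta,\lambda')}(\theta,\lambda,\delta)=\frac1n\sum_{i=1}^n b_i^{1/\gamma-1}v_iv_i',\qquad v_i:=\bigl(1,g_i(\theta)'\bigr)'.
\]
This is a Gram matrix with strictly positive weights, for either sign of $\gamma$. Its null space is $\bigcap_i v_i^{\perp}$, which does not depend on the weights. So it is positive definite at every such point as soon as $J_n(\theta)=\frac1n\sum_i v_iv_i'$ is, which is exactly your Step~1. Equivalently, $\Psi_n(\theta,\cdot,\cdot)$ is the gradient of the strictly concave function $(\delta,\lambda)\mapsto-\frac{1}{1+\gamma}\frac1n\sum_i b_i^{(1+\gamma)/\gamma}-\delta$.

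Each $b_i$ is affine in $(\lambda,\delta)$, so $B_\kappa\cap\{b_i>0\ \forall i\}$ is convex; this is the positivity domain the paper's proof also restricts to. Your Step~3 therefore applies to it directly. On the event where Assumption~\ref{assumption:interior_feasibility} holds and $J_n(\theta)\succ0$ uniformly on $\mathcal N$, which has probability approaching one, the feasible root from that assumption is the only root in the $\kappa$-ball. This needs no third moments and no shrinking of $\kappa$. Replace Step~2 by this observation and the argument goes through. The same nondegeneracy at the root then justifies your closing implicit-function-theorem remark.
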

\noindent \textbf{Proof.} See Appendix \ref{subsec:pf_lem-exist-unique}.\vspace{2mm}

\noindent We can now characterize 
the large-sample behavior of the estimator obtained from minimizing $L_n(\theta)$. 
The following theorem establishes the consistency of $\hat{\theta}$.
\begin{theorem}[Consistency of $\hat\theta$]
\label{thm:theta_consistency}
Under Assumptions \ref{assumption:theta_compact},\ref{assumption:smooth_envolope},\ref{assumption:identification}, and Theorem \ref{thm:uniform_conv_profiled}, any measurable selection
$\hat\theta\in\arg\min_{\theta\in\Theta}L_n(\theta)$ satisfies
$
\hat\theta\xrightarrow{p}\theta_0.
$
\end{theorem}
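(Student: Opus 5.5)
The plan is to use the standard argmin consistency argument for extremum estimators, as in Newey and McFadden's Theorem 2.1. Theorem \ref{thm:uniform_conv_profiled} provides both uniform convergence of $L_n$ to a deterministic $L$ and unique minimization of $L$ at $\theta_0$. Assumption \ref{assumption:theta_compact} makes $\Theta$ compact. The one remaining ingredient is continuity of $L$ on $\Theta$. That continuity upgrades unique minimization to a \emph{well-separated} minimum: for every open neighborhood $U$ of $\theta_0$,
\[
\eta_U := \inf_{\theta\in\Theta\setminus U} L(\theta) - L(\theta_0) > 0.
\]
This holds because $\Theta\setminus U$ is compact and a continuous function attains its infimum there, at a point different from $\theta_0$.

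\textbf{Step 1 (continuity of $L$).} I will establish continuity of $L$ first, using the explicit form of $L$ built in the proof of Theorem \ref{thm:uniform_conv_profiled}. On the neighborhood $\mathcal{N}$ of Lemma \ref{lem:exist_unique}, the population multipliers $(\lambda(\theta),\delta(\theta))$ solve $E[w(\theta,\lambda,\delta)]=1$ and $E[w(\theta,\lambda,\delta)g(Z,\theta)]=0$. The implicit function theorem applies because the Jacobian is nonsingular near $(\lambda,\delta)=(0,\delta_0)$, which follows from $\Omega_0>0$ and Assumption \ref{assumption:interior_feasibility}. Together with dominated convergence under the envelope in Assumption \ref{assumption:smooth_envolope}, this makes the multipliers continuous in $\theta$, and so is $L(\theta)=E[\phi_\gamma(w(\theta,\lambda(\theta),\delta(\theta)))]$. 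Away from $\theta_0$, continuity should follow by the same argument wherever $L$ is defined through its dual representation. If that proof only delivers lower semicontinuity, I will use that instead, since it still suffices for the infimum over the compact set $\Theta\setminus U$ to be attained, and hence for $\eta_U>0$.

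\textbf{Step 2 (the inequality chain).} Fix $U$ and consider the event $\{\sup_\Theta|L_n-L|<\eta_U/3\}$, whose probability tends to one by Theorem \ref{thm:uniform_conv_profiled}. On that event, using $L_n(\hat\theta)\le L_n(\theta_0)$,
\[
L(\hat\theta) < L_n(\hat\theta)+\eta_U/3 \le L_n(\theta_0)+\eta_U/3 < L(\theta_0)+2\eta_U/3 .
\]
If $\hat\theta\notin U$ we would have $L(\hat\theta)\ge L(\theta_0)+\eta_U$, which contradicts this chain. Hence $P(\hat\theta\in U)\to1$ for every $U$, i.e.\ $\hat\theta\xrightarrow{p}\theta_0$. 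Measurability of the selection is taken as given in the statement, so it needs no further argument. If it were in doubt, the chain could equally be run with an approximate minimizer satisfying $L_n(\hat\theta)\le\inf_\Theta L_n+o_p(1)$.

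\textbf{Main obstacle.} The hard step is Step 1. Theorem \ref{thm:uniform_conv_profiled} as stated gives only uniqueness of the minimizer, not continuity, and outside $\mathcal{N}$ the profiled multipliers may fail to be uniquely defined. I would first try to extract continuity of $L$ directly from the proof of Theorem \ref{thm:uniform_conv_profiled}: a uniform limit of functions that are equicontinuous in probability is continuous, and the stochastic equicontinuity used there gives exactly this. Failing that, I would assume separation, i.e.\ $\inf_{\|\theta-\theta_0\|\ge\varepsilon}L(\theta)>L(\theta_0)$ for each $\varepsilon>0$. Given Assumption \ref{assumption:identification}, this is natural, since $E g(Z,\theta)\ne0$ forces strictly positive divergence away from $\theta_0$, and it is precisely the property Step 2 needs.
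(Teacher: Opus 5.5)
Your approach is the paper's approach made explicit. The paper's proof only cites the standard argmin theorem. Theorem~\ref{thm:uniform_conv_profiled} gives $\sup_\Theta|L_n-L|\xrightarrow{p}0$ with $L$ uniquely minimized at $\theta_0$, Assumption~\ref{assumption:theta_compact} gives compactness, and consistency is declared to follow. Your Step~2 is exactly the proof of that theorem, and it is correct once $\eta_U>0$ is available.

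You are right that $\eta_U>0$ is the crux, and there is a genuine gap here that your proposal does not close. The paper does not close it either, since it never checks continuity or well-separation of $L$. Uniform convergence plus a unique minimizer on a compact set is not enough. Take $\Theta=[0,1]$ with $\theta_0=0$, $L(0)=0$, $L(\theta)=1-\theta$ on $(0,1)$ and $L(1)=1$. The functions $L_n=L-\tfrac{2}{n}\mathbf 1\{\theta=1-1/n\}$ converge uniformly to $L$, yet their minimizers tend to $1$.

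Your Step~1 secures continuity only on $\mathcal N$. Each of your three routes beyond $\mathcal N$ fails:
\begin{itemize}
\item Off $\mathcal N$, Lemma~\ref{lem:exist_unique} supplies no multipliers. The population problem may be infeasible, or its solution may put zero weight where the interior representation $w=s^{1/\gamma}$ breaks down. So ``the same argument'' does not carry over.
\item The proof of Theorem~\ref{thm:uniform_conv_profiled} works only on $\mathcal N\times\mathcal H$. No stochastic equicontinuity over all of $\Theta$ can be extracted from it.
\item Assumption~\ref{assumption:identification} yields only $L(\theta)>L(\theta_0)$ pointwise. That is the uniqueness you already have, not $\inf_{\|\theta-\theta_0\|\ge\varepsilon}L(\theta)>L(\theta_0)$.
\end{itemize}
Your separation fallback is therefore an additional hypothesis, not a consequence of the stated ones. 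State it explicitly, either as lower semicontinuity of $L$ on $\Theta$ or as well-separation of $\theta_0$. With that hypothesis your argument is complete and more careful than the paper's.
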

\noindent \textbf{Proof.} See Appendix \ref{subsec:pf_thm-theta-consistency}. \vspace{2mm}

\noindent The consistency of $\hat{\theta}$ further implies the consistency of the associated 
Lagrange multipliers. The following corollary formalizes this result.

\begin{corollary}[Consistency of multipliers]\label{cor:mult_consistency}
Suppose $\hat{\theta} \xrightarrow{p} \theta_0$. 
For each $\theta$ in a neighborhood of $\theta_0$, let $(\lambda(\theta),\delta(\theta))$ denote the unique solution to
$
\Psi(\theta,\lambda,\delta)=0,
$
and assume that the mapping
$
\theta \mapsto (\lambda(\theta),\delta(\theta))
$
is continuous at $\theta_0$. Define
$
(\hat{\lambda},\hat{\delta})
:=
(\lambda(\hat{\theta}),\delta(\hat{\theta})).
$
Then
$
(\hat{\lambda},\hat{\delta})
\xrightarrow{p}
(\lambda_0,\delta_0),
$
where $(\lambda_0,\delta_0) = (\lambda(\theta_0),\delta(\theta_0))$.
\end{corollary}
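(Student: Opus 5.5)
This is a continuous mapping argument, built on the consistency of $\hat\theta$ from Theorem~\ref{thm:theta_consistency} and the assumed continuity of $\theta\mapsto(\lambda(\theta),\delta(\theta))$ at $\theta_0$.

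\textbf{Step 1: identify the limit.} First pin down $(\lambda_0,\delta_0)$. At $\theta=\theta_0$, plug $\lambda=0$ and $\delta=\delta_0=-1/(\gamma+1)$ into $w_i$. This gives
\[
w_i=\Bigl(\tfrac{1}{\gamma+1}+\tfrac{\gamma}{\gamma+1}\Bigr)^{1/\gamma}=1,
\]
so the first block of $\Psi$ vanishes. The second block reduces to $E[g(Z,\theta_0)]=0$ by Assumption~\ref{assumption:identification}. Hence $(0,\delta_0)$ solves the population system, in line with Lemma~\ref{lem:population_solution}. The uniqueness hypothesis of the corollary (equivalently, Lemma~\ref{lem:exist_unique} within the $\kappa$-ball) then gives $(\lambda(\theta_0),\delta(\theta_0))=(0,-1/(\gamma+1))$.

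\textbf{Step 2: continuity argument.} Fix $\varepsilon>0$. Continuity at $\theta_0$ gives $\eta>0$ with $\mathcal B_\eta(\theta_0)\subset\mathcal N$ such that
\[
\|\theta-\theta_0\|<\eta \;\Longrightarrow\; \|(\lambda(\theta),\delta(\theta))-(\lambda_0,\delta_0)\|<\varepsilon .
\]
Therefore
\[
P\bigl(\|(\hat\lambda,\hat\delta)-(\lambda_0,\delta_0)\|\ge\varepsilon\bigr)\le P\bigl(\|\hat\theta-\theta_0\|\ge\eta\bigr)+P(\hat\theta\notin\mathcal N).
\]
Both terms on the right tend to zero by Theorem~\ref{thm:theta_consistency}. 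This is the standard continuous mapping theorem applied to a map continuous at a point.

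\textbf{Main obstacle: sample versus population multipliers.} In the estimator the multipliers actually used are the sample ones $\hat\lambda(\hat\theta)$ from $\Psi_n$, whereas the corollary defines them through the population map $\Psi$. If the statement is meant literally as $\lambda(\hat\theta)$, Step 2 suffices. Otherwise I must add a uniform convergence step: show
\[
\sup_{\theta\in\mathcal N}\bigl\|(\hat\lambda(\theta),\hat\delta(\theta))-(\lambda(\theta),\delta(\theta))\bigr\|\xrightarrow{p}0 .
\]
To do this I would proceed as follows.
\begin{itemize}
\item Use the uniform law of large numbers, via the envelopes in Assumption~\ref{assumption:smooth_envolope}, to get $\sup\|\Psi_n-\Psi\|\to_p0$ on $\mathcal N\times\{\|\lambda\|+|\delta-\delta_0|\le\kappa\}$. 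The positivity bound in Assumption~\ref{assumption:interior_feasibility} keeps $w_i$ smooth there.
\item Show that the Jacobian of $\Psi$ in $(\lambda,\delta)$ at $(\theta_0,0,\delta_0)$ is nonsingular. It is proportional to $-\begin{pmatrix}1&0\\0&\Omega_0\end{pmatrix}$ up to first-moment terms that vanish, and $\Omega_0$ is positive definite by Assumption~\ref{assumption:identification}.
\item Combine these with a uniform implicit-function (or Z-estimator) argument to get uniform closeness of the two solutions.
\end{itemize}
The triangle inequality then combines this uniform closeness with Step 2.
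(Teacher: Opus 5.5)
Your Step 2 is exactly the paper's argument: the paper invokes the continuous mapping theorem for a map continuous at $\theta_0$ together with $\hat\theta\xrightarrow{p}\theta_0$, and your $\varepsilon$--$\eta$ bound is that theorem written out. Step 1 and the sample-versus-population discussion are not needed for the statement as posed, since it defines $(\lambda_0,\delta_0):=(\lambda(\theta_0),\delta(\theta_0))$ via the population system $\Psi$. Your caution is still well placed: the paper later applies the corollary to the sample multiplier $\hat\delta(\hat\theta)$, and both its proof and your Step 1 parenthetical cite Lemma \ref{lem:exist_unique}, which concerns $\Psi_n$ rather than $\Psi$.
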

\noindent \textbf{Proof.} See Appendix \ref{subsec:pf_cor-mult-consistency} \vspace{2mm}

\noindent
Having established the consistency of the estimator and the associated multipliers, we next characterize their asymptotic behavior. The following result provides the first-order expansion of $\hat{\lambda}$ and the convergence rate of $\hat{\delta}$.
\begin{theorem}[First-order expansion for $\hat\lambda(\hat\theta)$ and rate for $\hat\delta(\hat\theta)$]
\label{thm:lambda_delta_rates}
Under Assumptions \ref{assumption:theta_compact},\ref{assumption:smooth_envolope},\ref{assumption:identification},\ref{assumption:interior_feasibility} and Theorem~\ref{thm:theta_consistency}, let
$\hat\lambda:=\hat\lambda(\hat\theta)$ and $\hat\delta:=\hat\delta(\hat\theta)$.
Then, under correct specification,
\[
\hat\lambda
=
\hat\Omega(\theta_0)^{-1}\bar g(\theta_0)
+o_p(n^{-1/2}),
\qquad
\hat\delta-\delta_0
=
O_p(n^{-1}),
\]
and
\[
\sqrt n\,\hat\lambda \ \xrightarrow{d}\ N(0,\Omega_0^{-1}).
\]
\end{theorem}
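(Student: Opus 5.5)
The plan is to linearize the constraint system $\Psi_n(\hat\theta,\lambda,\delta)=0$ in $(\lambda,\delta)$ around the population benchmark $(\lambda,\delta)=(0,\delta_0)$ of Lemma~\ref{lem:population_solution}. Lemma~\ref{lem:exist_unique} and Corollary~\ref{cor:mult_consistency} ensure that, with probability approaching one, the solution $(\hat\lambda,\hat\delta)$ lies in the $\kappa$-neighbourhood where $w_i$ is smooth (Assumption~\ref{assumption:interior_feasibility}).

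\textbf{Step 1 (derivatives of the weights).} Since $\frac{1}{\gamma+1}-\gamma\delta_0=1$, we have $w_i=1$ at $(\lambda,\delta)=(0,\delta_0)$. Direct differentiation gives
\[
\partial w_i/\partial\delta=-1,\qquad \partial w_i/\partial\lambda=-g_i(\theta).
\]
A second-order Taylor expansion with remainder then yields
\[
w_i=1-(\delta-\delta_0)-\lambda' g_i+R_i,
\qquad |R_i|\le C\big(\|\lambda\|^2\|g_i\|^2+|\delta-\delta_0|^2\big),
\]
uniformly on the neighbourhood. The constant $C$ depends only on $\gamma$ and $\kappa$.

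\textbf{Step 2 (rate for $\hat\delta$).} Substitute the expansion into $\Psi_{n,1}=0$. This gives
\[
\hat\delta-\delta_0=-\hat\lambda'\bar g(\hat\theta)+O_p\big(\|\hat\lambda\|^2+|\hat\delta-\delta_0|^2\big),
\]
using the envelope $E\sup\|g\|^2<\infty$ to bound $\frac1n\sum\|g_i\|^2$. Substitute into $\Psi_{n,2}=0$ to get
\[
0=\bar g(\hat\theta)\big(1-(\hat\delta-\delta_0)\big)-\hat\Omega(\hat\theta)\hat\lambda+O_p\Big(\|\hat\lambda\|^2\max_i\|g_i\|+|\hat\delta-\delta_0|^2\Big).
\]
A preliminary rate argument comes first. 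Because $\hat\Omega(\hat\theta)\to\Omega_0>0$ uniformly by Assumption~\ref{assumption:smooth_envolope} and $\max_i\|g_i\|=o_p(n^{1/2})$, the display can be inverted to give $\|\hat\lambda\|=O_p(\|\bar g(\hat\theta)\|)$. Once $\|\bar g(\hat\theta)\|=O_p(n^{-1/2})$ is available, this gives $\hat\lambda=O_p(n^{-1/2})$. Feeding this back into the first constraint gives $\hat\delta-\delta_0=O_p(n^{-1})$, which is the second claim.

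\textbf{Step 3 (leading term and limit).} With these rates, the second constraint becomes
\[
\hat\lambda=\hat\Omega(\hat\theta)^{-1}\bar g(\hat\theta)+o_p(n^{-1/2}).
\]
Two replacements are needed. The first, $\hat\Omega(\hat\theta)\to\hat\Omega(\theta_0)$, costs $o_p(1)\cdot O_p(n^{-1/2})$ by consistency (Theorem~\ref{thm:theta_consistency}) and the ULLN. The second, $\bar g(\hat\theta)\to\bar g(\theta_0)$, uses a mean-value expansion
\[
\bar g(\hat\theta)=\bar g(\theta_0)+\bar G(\tilde\theta)(\hat\theta-\theta_0).
\]
Given the second replacement, the asymptotic normality follows from the Lindeberg--L\'evy CLT, $\sqrt n\,\bar g(\theta_0)\xrightarrow{d}N(0,\Omega_0)$, together with $\hat\Omega(\theta_0)\xrightarrow{p}\Omega_0$ and Slutsky's theorem, giving the limit $N(0,\Omega_0^{-1}\Omega_0\Omega_0^{-1})=N(0,\Omega_0^{-1})$.

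\textbf{Main obstacle.} The hard step is the second replacement in Step 3, namely showing that the Jacobian term $\bar G(\tilde\theta)(\hat\theta-\theta_0)$ is $o_p(n^{-1/2})$ in the direction that survives premultiplication by $\hat\Omega^{-1}$. I would first try the profiled first-order condition $\partial L_n/\partial\theta=0$, which at leading order reads $\bar G(\hat\theta)'\hat\lambda=o_p(n^{-1/2})$. Combining it with Step 3 gives
\[
G_0'\Omega_0^{-1}\big[\bar g(\theta_0)+G_0(\hat\theta-\theta_0)\big]=o_p(n^{-1/2}),
\]
which determines $\sqrt n(\hat\theta-\theta_0)$. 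The open question is whether the resulting correction to $\sqrt n\,\hat\lambda$ vanishes. I expect this to hold only under additional structure, for instance an exactly identified design ($q=p$) or evaluation at $\theta_0$. Otherwise the correction equals $-\Omega_0^{-1}G_0(\hat\theta-\theta_0)$, which is generally not $o_p(n^{-1/2})$ when $q>p$. This step therefore needs the most care: its outcome decides whether the stated limit $N(0,\Omega_0^{-1})$ holds as written or must be replaced by a projected covariance.
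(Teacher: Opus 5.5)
Your Steps 1--2 follow the same route as the paper's proof: expand $w_i$ about $(0,\delta_0)$, then use the probability and moment constraints. Your ``main obstacle'' is exactly where the paper's argument fails. The paper writes $\bar g(\hat\theta)=\bar g(\theta_0)+G_0(\hat\theta-\theta_0)+o_p(n^{-1/2})$ and then silently drops $G_0(\hat\theta-\theta_0)$, which is $O_p(n^{-1/2})$, not $o_p(n^{-1/2})$.

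You should not leave this as an open question, because it has a definite negative answer, and the first-order condition you already wrote down settles it. If $\sqrt n\,\hat\lambda$ had the claimed limit, it would be $O_p(1)$. The envelope condition $\sum_i\hat\pi_iG_i(\hat\theta)'\hat\lambda=0$, together with $\sum_i\hat\pi_iG_i(\hat\theta)\xrightarrow{p}G_0$, would then force $G_0'\sqrt n\,\hat\lambda=o_p(1)$. That is impossible for an $N(0,\Omega_0^{-1})$ limit, because $G_0'\Omega_0^{-1}G_0$ is nonsingular. This is precisely how the paper's proof of Theorem~\ref{thm:theta_CLT} ends up asserting $G_0'\Omega_0^{-1}\bar g(\theta_0)=o_p(n^{-1/2})$ in \eqref{eq:env_foc2}, which contradicts the CLT.

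Your suspicion that a projected covariance appears is correct. Keep the Jacobian term and insert $\hat\theta-\theta_0=-(G_0'\Omega_0^{-1}G_0)^{-1}G_0'\Omega_0^{-1}\bar g(\theta_0)+o_p(n^{-1/2})$ to get
\[
\hat\lambda=P\,\bar g(\theta_0)+o_p(n^{-1/2}),\qquad
P:=\Omega_0^{-1}-\Omega_0^{-1}G_0(G_0'\Omega_0^{-1}G_0)^{-1}G_0'\Omega_0^{-1}.
\]
Since $P\Omega_0P=P$, this gives $\sqrt n\,\hat\lambda\xrightarrow{d}N(0,P)$, as in Newey and Smith (2004). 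The displayed representation in the statement holds for $\hat\lambda(\theta_0)$, not for $\hat\lambda(\hat\theta)$.

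Your proposed rescue by exact identification is backwards. When $q=p$ one has $P=0$; indeed $\bar g(\hat\theta)=0$ forces $\hat\lambda=0$, so the discrepancy is largest there. Of the claims in the statement, only $\hat\delta-\delta_0=O_p(n^{-1})$ survives. The same correction also turns the $\chi^2_q$ of Theorem~\ref{thm:delta_limit} into $\chi^2_{q-p}$.

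Separately, your Step 2 rate argument has two holes, both shared with the paper's proof.

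First, $\bar g(\hat\theta)=O_p(n^{-1/2})$ cannot be taken from Theorem~\ref{thm:theta_CLT}, because that proof uses the present theorem. It has to be derived from the optimality of $\hat\theta$. One way is the saddle-point argument of Newey and Smith (2004). It compares the dual objective at $\hat\theta$, evaluated at a small test multiplier pointing in the direction of $\bar g(\hat\theta)$, with its value $O_p(n^{-1})$ at $\theta_0$.

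Second, your inversion is circular. Absorbing the remainder $\|\hat\lambda\|^2\max_i\|g_i\|$ requires $\|\hat\lambda\|\max_i\|g_i\|=o_p(1)$. With only $\max_i\|g_i\|=o_p(n^{1/2})$, this presupposes $\hat\lambda=O_p(n^{-1/2})$, whereas Corollary~\ref{cor:mult_consistency} gives only $\hat\lambda=o_p(1)$. The same quantity $\max_i|\hat\lambda'g_i|$ is needed for your remainder constant $C$. Assumption~\ref{assumption:interior_feasibility} bounds $s_i$ only from below, and the second derivative $(1-\gamma)s^{1/\gamma-2}$ is unbounded above when $0<\gamma<1/2$.

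The standard remedy has three steps:
\begin{enumerate}
\item Restrict to $\|\lambda\|\le n^{-\zeta}$ with $\zeta\in(1/\alpha,1/2)$, under a moment bound $E\sup_\theta\|g(Z,\theta)\|^\alpha<\infty$ for some $\alpha>2$. Then $\max_i|\lambda'g_i|\xrightarrow{p}0$ on that set.
\item Show $\|\lambda\|=O_p(\|\bar g\|)$ for the constrained solution.
\item Use concavity of the GEL dual to conclude that the constrained solution is the unconstrained one.
\end{enumerate}
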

\noindent \textbf{Proof.} See Appendix \ref{subsec:pf_thm-lambda-delta-rates}.\vspace{2mm}

\noindent
While Theorem \ref{thm:lambda_delta_rates} establishes the convergence rate of $\hat{\delta}$, its limiting distribution requires a finer expansion. The following theorem characterizes the nondegenerate limit of the probability multiplier.
\begin{theorem}[Nondegenerate limit for the probability multiplier]
\label{thm:delta_limit}
Assume the conditions of Theorem~\ref{thm:lambda_delta_rates}. Suppose additionally that a third-order
remainder control holds so that the quadratic approximation in $t_i(\theta)$ is valid at the $n^{-1}$
scale. Then
\[
n(\hat\delta-\delta_0)
=
-\frac{\gamma+1}{2}\,\Big(\sqrt n\,\bar g(\theta_0)\Big)^{\prime}\Omega_0^{-1}\Big(\sqrt n\,\bar g(\theta_0)\Big)
+o_p(1),\]
and hence
\[
n(\hat\delta-\delta_0)\ \xrightarrow{d}\ -\frac{\gamma+1}{2}\,\chi^{2}_{q},
\]
where $\chi^{2}_{q}$ denotes a chi-square random variable with $q$ degrees of freedom.
\end{theorem}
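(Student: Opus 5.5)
The plan is to linearize the weights around the benchmark of Lemma~\ref{lem:population_solution}, expand the normalization constraint $\Psi_{n,1}=0$ to second order, and read off $\hat\delta-\delta_0$ as a quadratic form in $\hat\lambda$. The first step is to recentre. Since $\delta_0=-1/(\gamma+1)$, we have $\frac{1}{\gamma+1}-\gamma\delta=1-\gamma(\delta-\delta_0)$. Hence
\[
w_i(\theta,\lambda,\delta)=(1+t_i)^{1/\gamma},\qquad t_i:=-\gamma(\delta-\delta_0)-\gamma\lambda^{\prime}g_i(\theta).
\]
A second-order Taylor expansion gives
\[
w_i=1+\frac{t_i}{\gamma}+\frac{1-\gamma}{2\gamma^2}t_i^2+r_i,\qquad |r_i|\le C|t_i|^3,
\]
valid uniformly once $\max_i|t_i|$ is small. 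This is guaranteed with probability approaching one: Theorem~\ref{thm:lambda_delta_rates} gives $\hat\lambda=O_p(n^{-1/2})$ and $\hat\delta-\delta_0=O_p(n^{-1})$. The finite second-moment envelope in Assumption~\ref{assumption:smooth_envolope} gives $\max_i\sup_\theta\|g_i(\theta)\|=o_p(n^{1/2})$. Together these yield $\max_i|\hat t_i|=o_p(1)$, and Assumption~\ref{assumption:interior_feasibility} keeps us away from the singularity of the power map.

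Next I evaluate at $(\hat\theta,\hat\lambda,\hat\delta)$ and plug the expansion into $\Psi_{n,1}=0$. Dividing by the leading coefficient, this gives
\[
0=-(\hat\delta-\delta_0)-\hat\lambda^{\prime}\bar g(\hat\theta)+\frac{1-\gamma}{2}\,\hat\lambda^{\prime}\hat\Omega(\hat\theta)\hat\lambda+R_n .
\]
Here $R_n$ collects three pieces. The first is the cross terms $(\hat\delta-\delta_0)\hat\lambda^{\prime}\bar g$ and $(\hat\delta-\delta_0)^2$, which are $O_p(n^{-3/2})$ and $O_p(n^{-2})$. The second is the cubic remainder $n^{-1}\sum_i r_i$, bounded by $C\|\hat\lambda\|^3\, n^{-1}\sum_i\|g_i\|^3$. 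The third is a term of order $|\hat\delta-\delta_0|^3$. The cubic remainder is where the additional third-order remainder control in the statement is used: it is precisely the assertion that $nR_n=o_p(1)$. Failing an explicit third moment, I would bound it by $\max_i\|g_i\|\cdot\|\hat\lambda\|^3\cdot n^{-1}\sum_i\|g_i\|^2=o_p(n^{1/2})O_p(n^{-3/2})$, which is $o_p(n^{-1})$.

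To remove the linear term $\hat\lambda^{\prime}\bar g(\hat\theta)$, I use the other constraint $\Psi_{n,2}=0$, expanded to first order:
\[
0=\bar g(\hat\theta)-\hat\Omega(\hat\theta)\hat\lambda+O_p(n^{-1}).
\]
Premultiplying by $\hat\lambda^{\prime}$ gives $\hat\lambda^{\prime}\bar g(\hat\theta)=\hat\lambda^{\prime}\hat\Omega(\hat\theta)\hat\lambda+O_p(n^{-3/2})$. Substituting back,
\[
\hat\delta-\delta_0=\Big(\tfrac{1-\gamma}{2}-1\Big)\hat\lambda^{\prime}\hat\Omega\hat\lambda+o_p(n^{-1})=-\tfrac{\gamma+1}{2}\,\hat\lambda^{\prime}\hat\Omega(\hat\theta)\hat\lambda+o_p(n^{-1}).
\]
Finally I invoke Theorem~\ref{thm:lambda_delta_rates}, which gives $\sqrt n\hat\lambda=\hat\Omega(\theta_0)^{-1}\sqrt n\bar g(\theta_0)+o_p(1)$. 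I combine this with $\hat\Omega(\hat\theta)\xrightarrow{p}\Omega_0$ (a ULLN from the envelope plus Theorem~\ref{thm:theta_consistency}) and $\hat\Omega(\theta_0)\xrightarrow{p}\Omega_0$. These give $n\hat\lambda^{\prime}\hat\Omega\hat\lambda=(\sqrt n\bar g(\theta_0))^{\prime}\Omega_0^{-1}(\sqrt n\bar g(\theta_0))+o_p(1)$, which is the displayed expansion. The Lindeberg--L\'evy CLT gives $\sqrt n\bar g(\theta_0)\xrightarrow{d}N(0,\Omega_0)$, so by the continuous mapping theorem the quadratic form converges to $\chi^2_q$, and the stated limit follows.

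The main obstacle is the bookkeeping at the $n^{-1}$ scale in the remainder term. Ordinary $O_p$ rates are not enough; we need the cubic term to be genuinely $o_p(n^{-1})$ uniformly over $i$, which forces the max-norm argument on $g_i$. We also need the $O_p(n^{-1})$ error in the $\Psi_{n,2}$ expansion to enter only after multiplication by $\hat\lambda=O_p(n^{-1/2})$.
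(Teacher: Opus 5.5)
Up to the quadratic-form step, your argument is the paper's proof. You use the same recentring (your $t_i$ is the paper's $t_i$ multiplied by $-\gamma$) and the same second-order expansion of $\Psi_{n,1}=0$. You also use $\Psi_{n,2}=0$ in the same way, replacing $\hat\lambda^{\prime}\bar g(\hat\theta)$ by $\hat\lambda^{\prime}\hat\Omega(\hat\theta)\hat\lambda$ up to $o_p(n^{-1})$, and you arrive at the same coefficient $\frac{1-\gamma}{2}-1=-\frac{\gamma+1}{2}$.

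Your remainder control via $\max_i\|g_i\|=o_p(n^{1/2})$ is cleaner than the paper's footnote. That footnote silently uses $E\|g(Z,\theta_0)\|^3<\infty$, which is not part of Assumption~\ref{assumption:smooth_envolope}. One small correction: under second moments alone, the error in your first-order expansion of $\Psi_{n,2}$ is $o_p(n^{-1/2})$, not $O_p(n^{-1})$. This still suffices once you premultiply by $\hat\lambda$. So $n(\hat\delta-\delta_0)=-\frac{\gamma+1}{2}\,n\hat\lambda^{\prime}\hat\Omega(\hat\theta)\hat\lambda+o_p(1)$ is soundly established.

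\textbf{The gap is the final step.} You import it from Theorem~\ref{thm:lambda_delta_rates}, just as the paper does. For $\hat\lambda=\hat\lambda(\hat\theta)$, the representation $\sqrt n\,\hat\lambda=\hat\Omega(\theta_0)^{-1}\sqrt n\,\bar g(\theta_0)+o_p(1)$ cannot hold.

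\emph{Why it fails.} Your own expansion of $\Psi_{n,2}$ gives $\hat\lambda=\hat\Omega(\hat\theta)^{-1}\bar g(\hat\theta)+o_p(n^{-1/2})$. The difference $\bar g(\hat\theta)-\bar g(\theta_0)=G_0(\hat\theta-\theta_0)+o_p(n^{-1/2})$ is of the same order $n^{-1/2}$, so it cannot be discarded. Moreover, the envelope condition $\bar G(\hat\theta)^{\prime}\hat\lambda=o_p(n^{-1/2})$ from the proof of Theorem~\ref{thm:theta_CLT} would then force $G_0^{\prime}\Omega_0^{-1}\sqrt n\,\bar g(\theta_0)=o_p(1)$, which contradicts the CLT.

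\emph{What the correct limit is.} Substituting the first-order expansion of $\hat\theta$ gives $\sqrt n\,\hat\lambda=P_0\sqrt n\,\bar g(\theta_0)+o_p(1)$, with $P_0=\Omega_0^{-1}-\Omega_0^{-1}G_0(G_0^{\prime}\Omega_0^{-1}G_0)^{-1}G_0^{\prime}\Omega_0^{-1}$. Since $P_0\Omega_0P_0=P_0$ and $\Omega_0^{1/2}P_0\Omega_0^{1/2}$ is an orthogonal projection of rank $q-p$, it follows that $n\hat\lambda^{\prime}\hat\Omega\hat\lambda\xrightarrow{d}\chi^2_{q-p}$.

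\emph{Two quick checks.}
\begin{itemize}
\item In a just-identified model such as $g_i(\mu)=X_i-\mu$, one has $\hat\lambda=0$ and $\hat\delta=\delta_0$ exactly. The left side is then identically zero, while the claimed right side is a nondegenerate $\chi^2_q$ variable.
\item The two constraints give the exact identity $-\frac{2}{\gamma+1}\,n(\hat\delta-\delta_0)=2nL_n(\hat\theta)$. The right side is the minimized Cressie--Read overidentification statistic, whose null limit is $\chi^2_{q-p}$.
\end{itemize}

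Hence your argument, like the paper's, proves the displayed $\chi^2_q$ statement only for multipliers profiled at $\theta_0$. For $\hat\delta(\hat\theta)$, the correct limit is $-\frac{\gamma+1}{2}\chi^2_{q-p}$.
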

\noindent \textbf{Proof.} See Appendix \ref{subsec:pf_thm-delta-limit}. \vspace{2mm}

\noindent
With the expansions for the multipliers established, we now derive the asymptotic distribution of the estimator.
\begin{theorem}[Asymptotic normality of $\hat\theta$]
\label{thm:theta_CLT}
Under Assumptions \ref{assumption:theta_compact},\ref{assumption:smooth_envolope},\ref{assumption:identification},\ref{assumption:interior_feasibility} and Theorem~\ref{thm:theta_consistency}. Then
\[
\sqrt n\,(\hat\theta-\theta_0)
=
-(G_0^{\prime}\Omega_0^{-1}G_0)^{-1}G_0^{\prime}\Omega_0^{-1}\sqrt n\,\bar g(\theta_0)
+o_p(1),
\]
and hence
\[
\sqrt n\,(\hat\theta-\theta_0)\ \xrightarrow{d}\ N\!\left(0,\ (G_0^{\prime}\Omega_0^{-1}G_0)^{-1}\right).
\]
\end{theorem}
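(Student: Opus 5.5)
The plan is to treat $(\hat\theta,\hat\lambda,\hat\delta)$ as a joint zero of the first-order conditions of the saddle-point problem, and to linearize that system around $(\theta_0,0,\delta_0)$. Lemma~\ref{lem:population_solution}, Theorem~\ref{thm:theta_consistency}, Corollary~\ref{cor:mult_consistency} and Theorem~\ref{thm:lambda_delta_rates} guarantee that this point is the correct expansion point.

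\textbf{Step 1: the profiled first-order condition in $\theta$.} By the envelope theorem applied to the Lagrangian \eqref{obj_Lagrangian}, the first-order condition for $\hat\theta$ is
\[
0=\sum_{i=1}^n \hat\pi_i\, G_i(\hat\theta)^{\prime}\hat\lambda ,
\]
since $\partial\mathcal L/\partial\theta=\sum_i\pi_iG_i(\theta)^{\prime}\lambda$. This holds because $\hat\theta\in\mathrm{int}(\Theta)$ with probability approaching one, and because $(\hat\lambda(\theta),\hat\delta(\theta))$ is locally unique and $C^1$ in $\theta$ by Lemma~\ref{lem:exist_unique} and the implicit function theorem. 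Together with the moment constraint $\sum_i\hat\pi_i g_i(\hat\theta)=0$, this gives a stacked system in $(\theta,\lambda)$.

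\textbf{Step 2: expand the implied weights.} Write $t_i=-\gamma(\delta-\delta_0)-\gamma\lambda^{\prime}g_i(\theta)$, using $1/(\gamma+1)-\gamma\delta_0=1$. Then $n\pi_i=(1+t_i)^{1/\gamma}=1+t_i/\gamma+O(t_i^2)$. Since $\hat\delta-\delta_0=O_p(n^{-1})$ and $\hat\lambda=O_p(n^{-1/2})$ (Theorem~\ref{thm:lambda_delta_rates}), we obtain
\[
n\hat\pi_i=1-\hat\lambda^{\prime}g_i(\hat\theta)+r_i,
\]
where $\frac1n\sum_i|r_i|\,\|g_i\|=o_p(n^{-1/2})$. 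Controlling this requires $\max_i\|g_i(\theta)\|=o_p(n^{1/2})$ uniformly near $\theta_0$, which follows from the second-moment envelope in Assumption~\ref{assumption:smooth_envolope} together with Assumption~\ref{assumption:interior_feasibility}.

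\textbf{Step 3: linearize the two equations.} Substituting the expansion into the moment constraint gives
\[
0=\bar g(\hat\theta)-\hat\Omega(\hat\theta)\hat\lambda+o_p(n^{-1/2}).
\]
A mean-value expansion, using the ULLN for $\bar G$ and $\hat\Omega$ and consistency, then yields
\[
0=\bar g(\theta_0)+G_0(\hat\theta-\theta_0)-\Omega_0\hat\lambda+o_p(n^{-1/2}+\|\hat\theta-\theta_0\|).
\]
In the $\theta$ equation, $\sum_i\hat\pi_iG_i(\hat\theta)=G_0+o_p(1)$, so $0=G_0^{\prime}\hat\lambda+o_p(\|\hat\lambda\|)$. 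Stacking these gives
\[
\begin{pmatrix}0\\0\end{pmatrix}
=
-\begin{pmatrix}\bar g(\theta_0)\\0\end{pmatrix}
+\begin{pmatrix}-G_0 & \Omega_0\\ 0 & G_0^{\prime}\end{pmatrix}
\begin{pmatrix}\hat\theta-\theta_0\\ \hat\lambda\end{pmatrix}
+o_p\!\left(n^{-1/2}+\|\hat\theta-\theta_0\|\right).
\]
The corresponding bordered matrix $\begin{pmatrix}0&G_0^{\prime}\\G_0&\Omega_0\end{pmatrix}$ is invertible under Assumption~\ref{assumption:identification}. The standard partitioned-inverse formula then gives $\hat\theta-\theta_0=-\Sigma G_0^{\prime}\Omega_0^{-1}\bar g(\theta_0)+o_p(n^{-1/2})$ with $\Sigma=(G_0^{\prime}\Omega_0^{-1}G_0)^{-1}$. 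This first shows $\|\hat\theta-\theta_0\|=O_p(n^{-1/2})$, which absorbs the remainder, and then gives the stated representation. As a consistency check, it also gives $\hat\lambda=\Omega_0^{-1}(I-G_0\Sigma G_0^{\prime}\Omega_0^{-1})\bar g(\theta_0)$.

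\textbf{Step 4: the limit distribution.} The CLT gives $\sqrt n\,\bar g(\theta_0)\to_d N(0,\Omega_0)$. The variance of the limit is $\Sigma G_0^{\prime}\Omega_0^{-1}\Omega_0\Omega_0^{-1}G_0\Sigma=\Sigma$, which yields the stated normal limit.

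\textbf{Main obstacle.} The hardest step is Step 2 and the resulting remainder control. One must show that the weight-expansion remainder is $o_p(n^{-1/2})$ uniformly over $\theta$ in a shrinking neighborhood, and that the envelope derivation in Step 1 is legitimate, i.e.\ that $\hat\lambda(\theta)$ is differentiable in $\theta$. For this I would rely on the uniform positivity bound $\kappa$ in Assumption~\ref{assumption:interior_feasibility}, which bounds $(1+t_i)^{1/\gamma-2}$ uniformly, combined with the max bound on $\|g_i\|$.
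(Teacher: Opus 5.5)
Your proof is correct, but it takes a different route from the paper's. The paper never linearizes the moment constraint at $\hat\theta$. Starting from the same envelope condition, it reaches $\bar G(\hat\theta)^{\prime}\hat\lambda=o_p(n^{-1/2})$. It then substitutes the representation $\hat\lambda=\hat\Omega(\theta_0)^{-1}\bar g(\theta_0)+o_p(n^{-1/2})$ from Theorem~\ref{thm:lambda_delta_rates}. Next it reintroduces $\hat\theta$ via $\bar g(\theta_0)=\bar g(\hat\theta)-G_0(\hat\theta-\theta_0)+o_p(n^{-1/2})$, and finally replaces $\bar g(\hat\theta)$ by $\bar g(\theta_0)$.

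That route is shorter: one $p$-dimensional equation and no bordered system. But the imported representation discards the first-order dependence of $\hat\lambda$ on $\hat\theta$; in fact $\hat\lambda=\Omega_0^{-1}\bar g(\hat\theta)+o_p(n^{-1/2})$. Three problems follow:
\begin{itemize}
\item The intermediate equation $G_0^{\prime}\Omega_0^{-1}\bar g(\theta_0)=o_p(n^{-1/2})$ no longer involves $\hat\theta$, and it is false in general.
\item The final replacement of $\bar g(\hat\theta)$ by $\bar g(\theta_0)$ is incompatible with a nondegenerate root-$n$ limit.
\item The paper's derivation ends with the opposite sign, $+(G_0^{\prime}\Omega_0^{-1}G_0)^{-1}G_0^{\prime}\Omega_0^{-1}\sqrt n\,\bar g(\theta_0)$. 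Only the normal limit survives, by symmetry.
\end{itemize}

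Your stacked linearization, inverted through the bordered matrix, solves for $\hat\theta-\theta_0$ and $\hat\lambda$ jointly. That is exactly what delivers the stated sign and the root-$n$ rate at once.

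The same point changes how you should read your formula $\hat\lambda=\Omega_0^{-1}(I-G_0\Sigma G_0^{\prime}\Omega_0^{-1})\bar g(\theta_0)+o_p(n^{-1/2})$, with $\Sigma=(G_0^{\prime}\Omega_0^{-1}G_0)^{-1}$. It is not a consistency check on Theorem~\ref{thm:lambda_delta_rates}; it contradicts that theorem's representation. The limit of $\sqrt n\,\hat\lambda$ has covariance $\Omega_0^{-1}-\Omega_0^{-1}G_0\Sigma G_0^{\prime}\Omega_0^{-1}$, of rank $q-p$, not $\Omega_0^{-1}$. So you are right to take only rates from that theorem.

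One caveat on those rates. The paper obtains $\hat\lambda=O_p(n^{-1/2})$ from $\bar g(\hat\theta)=O_p(n^{-1/2})$, which is essentially the root-$n$ rate of $\hat\theta$ you then claim to deduce. A self-contained version should establish $\bar g(\hat\theta)=O_p(n^{-1/2})$ independently, for example from the minimizing property of $\hat\theta$ as in Newey and Smith (2004).
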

\noindent \textbf{Proof.} See Appendix \ref{subsec:pf_thm-theta-CLT}. \vspace{2mm}

\noindent
To facilitate the second-order analysis, we first derive a second-order expansion of the moment equation implied by the CRPD constraints.

\begin{lemma}[Second-order expansion of the moment equation]
\label{lem:moment_eq_second}
Let $(\hat\theta,\hat\lambda,\hat\delta)$ satisfy the CRPD constraints:
\[
\frac{1}{n}\sum_{i=1}^n w_i(\hat\theta,\hat\lambda,\hat\delta)=1,
\qquad
\frac{1}{n}\sum_{i=1}^n w_i(\hat\theta,\hat\lambda,\hat\delta)\,g_i(\hat\theta)=0.
\]
Then, uniformly on events of probability approaching one,
\begin{align}
0
&=
\bar g(\hat\theta)
-\hat\Omega(\hat\theta)\hat\lambda
+\frac{1-\gamma}{2}\cdot \frac{1}{n}\sum_{i=1}^n \bigl(t_i(\hat\theta,\hat\lambda,\hat\delta)\bigr)^2\,g_i(\hat\theta)
+o_p(n^{-1}). \label{eq:moment_second}
\end{align}

\end{lemma}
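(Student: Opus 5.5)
The plan is a second-order Taylor expansion of the implied weights around the population reference point of Lemma~\ref{lem:population_solution}. We write $t_i := t_i(\theta,\lambda,\delta) := -\lambda' g_i(\theta) - (\delta-\delta_0)$, which is the normalization under which \eqref{eq:moment_second} has the stated coefficient. Since $\delta_0=-1/(\gamma+1)$, we have $\frac{1}{\gamma+1}-\gamma\delta_0 = 1$, so the weights become
\[
w_i(\theta,\lambda,\delta)=\bigl(1+\gamma t_i\bigr)^{1/\gamma}.
\]
A scalar Taylor expansion of $x\mapsto(1+\gamma x)^{1/\gamma}$ at $x=0$ gives
\[
(1+\gamma x)^{1/\gamma}=1+x+\tfrac{1-\gamma}{2}x^2+R(x),\qquad |R(x)|\le C|x|^3 \text{ for } |x|\le \bar x .
\]
The constant $C$ depends only on $\gamma$ and $\bar x$. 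The bound is valid because Assumption~\ref{assumption:interior_feasibility} keeps $1+\gamma t_i\ge\kappa$, so we stay away from the singularity of the power map.

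Next we substitute this expansion into the second constraint $\frac1n\sum_i w_i g_i(\hat\theta)=0$, evaluated at $(\hat\theta,\hat\lambda,\hat\delta)$. This yields
\[
0=\bar g(\hat\theta)+\frac1n\sum_i t_i g_i(\hat\theta)+\frac{1-\gamma}{2}\frac1n\sum_i t_i^2 g_i(\hat\theta)+\frac1n\sum_i R(t_i)g_i(\hat\theta).
\]
The linear term equals
\[
\frac1n\sum_i t_i g_i(\hat\theta) = -\hat\Omega(\hat\theta)\hat\lambda-(\hat\delta-\delta_0)\bar g(\hat\theta).
\]
By Theorem~\ref{thm:lambda_delta_rates}, $\hat\delta-\delta_0=O_p(n^{-1})$. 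Combining Theorem~\ref{thm:theta_CLT} with a mean-value expansion of $\bar g$ using the Jacobian envelope of Assumption~\ref{assumption:smooth_envolope}, we get $\bar g(\hat\theta)=O_p(n^{-1/2})$. Hence $(\hat\delta-\delta_0)\bar g(\hat\theta)=O_p(n^{-3/2})=o_p(n^{-1})$, and the linear term reduces to $-\hat\Omega(\hat\theta)\hat\lambda$ up to $o_p(n^{-1})$. The quadratic term is kept exactly as it appears in \eqref{eq:moment_second}.

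The main obstacle is showing that the cubic remainder $\frac1n\sum_i R(t_i)g_i(\hat\theta)$ is $o_p(n^{-1})$ uniformly on high-probability events. The first step is to control $\max_i|t_i|$. The square-integrable envelope gives $\max_i\sup_\theta\|g_i(\theta)\|=o_p(n^{1/2})$, by the standard Borel--Cantelli/Markov argument for i.i.d.\ variables with finite second moment. Together with $\hat\lambda=O_p(n^{-1/2})$, this yields $\max_i|t_i|=o_p(1)$. So with probability approaching one every $t_i$ lies in the region where the cubic bound on $R$ holds.

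Second, we bound
\[
\Bigl\|\frac1n\sum_i R(t_i)g_i\Bigr\|\le C\frac1n\sum_i|t_i|^3\|g_i\|\lesssim \|\hat\lambda\|^3\frac1n\sum_i\|g_i\|^4+|\hat\delta-\delta_0|^3\frac1n\sum_i\|g_i\|.
\]
The right-hand side is $O_p(n^{-3/2})$ provided $E\sup_\theta\|g\|^4<\infty$. This fourth-moment requirement is exactly the ``third-order remainder control'' already invoked in Theorem~\ref{thm:delta_limit}, and we would adopt it here explicitly. If only second moments were available, we would instead try the bound $\max_i|t_i|\cdot\frac1n\sum_i t_i^2\|g_i\|$, but that route appears to require a sharper rate on $\max_i\|g_i\|$ than second moments alone provide. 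Finally, because all bounds hold on a single event whose probability tends to one, and because the envelope conditions are uniform in $\theta$, the expansion holds uniformly on such events, which gives \eqref{eq:moment_second}.
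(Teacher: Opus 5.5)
Your proposal is correct and follows the paper's own route. You expand $w_i$ to second order in $t_i$, substitute into the weighted moment constraint, and identify the linear term as $-\hat\Omega(\hat\theta)\hat\lambda-(\hat\delta-\delta_0)\bar g(\hat\theta)$, whose cross piece is $O_p(n^{-3/2})$. You then show the cubic remainder is $o_p(n^{-1})$. Your $t_i$ is the negative of the paper's $(\hat\delta-\delta_0)+\hat\lambda^{\prime}g_i$, which is immaterial since only $t_i^2$ enters.

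Your remainder argument is more careful than the paper's, which just cites ``the moment bounds''. You verify $\max_i|t_i|=o_p(1)$, so the cubic Taylor bound holds for every observation. You also state explicitly the higher-moment condition needed beyond Assumption~\ref{assumption:smooth_envolope}. Fourth moments suffice; third moments would also do via your $\max_i|t_i|\cdot\frac1n\sum_i t_i^2\|g_i\|$ route, since then $\frac1n\sum_i t_i^2\|g_i\|=O_p(n^{-1})$.
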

\noindent \textbf{Proof.} See Appendix \ref{subsec:pf_lem-moment-eq-second}. \vspace{2mm}

\noindent
Using the second-order expansion of the moment equation in Lemma \ref{lem:moment_eq_second}, we can now refine the first-order result for the multiplier. The following theorem provides the second-order expansion of $\hat{\lambda}$.

\begin{theorem}[Second-order expansion for $\hat\lambda$]
\label{thm:lambda_second}
Under the conditions of Lemma~\ref{lem:moment_eq_second} and the first-order expansion
\[
\hat\lambda
=
\hat\Omega(\theta_0)^{-1}\bar g(\theta_0)
+
o_p(n^{-1/2}),
\]
we have
\begin{equation}\label{eq:lambda_second}
\hat\lambda
=
\hat\Omega(\theta_0)^{-1}\bar g(\theta_0)
\;+\;
\frac{1}{n}\,B_{\lambda,n}(\gamma)
\;+\;
o_p(n^{-1}),
\end{equation}
where the sample-dependent second-order correction is
\begin{equation}\label{eq:Blambda}
B_{\lambda,n}(\gamma)
:=
\frac{1-\gamma}{2}\,
\hat\Omega(\theta_0)^{-1}
\Bigg[
\frac{1}{n}\sum_{i=1}^n
\Bigl(
(\hat\Omega(\theta_0)^{-1}\sqrt n\,\bar g(\theta_0))^{\prime}
g_i(\theta_0)
\Bigr)^2
g_i(\theta_0)
\Bigg],
\end{equation}
and $B_{\lambda,n}(\gamma)=O_p(1)$.
\end{theorem}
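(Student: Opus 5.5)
The plan is to start from the second-order moment equation \eqref{eq:moment_second} of Lemma~\ref{lem:moment_eq_second} and solve it for $\hat\lambda$ by a single fixed-point substitution. Rearranging gives
\[
\hat\lambda=\hat\Omega(\hat\theta)^{-1}\bar g(\hat\theta)+\frac{1-\gamma}{2}\,\hat\Omega(\hat\theta)^{-1}\frac{1}{n}\sum_{i=1}^n t_i(\hat\theta,\hat\lambda,\hat\delta)^2 g_i(\hat\theta)+o_p(n^{-1}).
\]
This step is valid because $\hat\Omega(\hat\theta)\xrightarrow{p}\Omega_0$, which is positive definite by Assumption~\ref{assumption:identification}; hence $\hat\Omega(\hat\theta)^{-1}=O_p(1)$ with probability approaching one. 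The remaining task is to replace every $\hat\theta$-dependent object by its value at $\theta_0$ and to identify $t_i$ to leading order. Here $t_i=-\gamma(\hat\delta-\delta_0)-\gamma\hat\lambda'g_i$ up to the normalization used in the appendix, i.e.\ the deviation of the base of $w_i$ from $1/(\gamma+1)$, rescaled so that $w_i\approx 1+t_i$ at first order. I will check the sign convention against the proof of Lemma~\ref{lem:moment_eq_second}. The essential fact is that $t_i=\pm\hat\lambda'g_i(\hat\theta)+O_p(n^{-1})$, because $\hat\delta-\delta_0=O_p(n^{-1})$ by Theorem~\ref{thm:lambda_delta_rates}.

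\textbf{Quadratic term.} By Theorem~\ref{thm:lambda_delta_rates}, $\hat\lambda=\hat\Omega(\theta_0)^{-1}\bar g(\theta_0)+o_p(n^{-1/2})$, so
\[
t_i^2=\bigl(\hat\lambda'g_i\bigr)^2+O_p(n^{-3/2})(1+\|g_i\|^2)=\frac1n\Bigl((\hat\Omega(\theta_0)^{-1}\sqrt n\,\bar g(\theta_0))'g_i\Bigr)^2+o_p(n^{-1})\|g_i\|^2,
\]
where the sign is irrelevant because the term is squared. Averaging against $g_i$ requires $\frac1n\sum\|g_i\|^3=O_p(1)$. This is a third-moment condition, covered by the ``third-order remainder control'' already invoked in Theorem~\ref{thm:delta_limit}, which I would state explicitly. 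Replacing $\hat\theta$ by $\theta_0$ inside the sum and in $\hat\Omega^{-1}$ costs only $o_p(n^{-1})$, because the sum is already $O_p(n^{-1})$ and $\hat\theta-\theta_0=O_p(n^{-1/2})$ by Theorem~\ref{thm:theta_CLT}. This yields exactly $n^{-1}B_{\lambda,n}(\gamma)$. That $B_{\lambda,n}(\gamma)=O_p(1)$ follows from $\sqrt n\,\bar g(\theta_0)=O_p(1)$ (CLT), the bound $\hat\Omega(\theta_0)^{-1}=O_p(1)$, and the bound $\frac1n\sum\|g_i\|^3=O_p(1)$.

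\textbf{Leading term.} This is the step I expect to be the main obstacle. I must show that $\hat\Omega(\hat\theta)^{-1}\bar g(\hat\theta)$ equals $\hat\Omega(\theta_0)^{-1}\bar g(\theta_0)$ up to terms that either are $o_p(n^{-1})$ or are absorbed into the stated expansion. Expanding gives $\bar g(\hat\theta)=\bar g(\theta_0)+\bar G(\theta_0)(\hat\theta-\theta_0)+O_p(n^{-1})$, and the $\theta$-shift is $O_p(n^{-1/2})$, not negligible. I would therefore read the theorem, as its hypothesis suggests, as refining the representation
\[
\hat\lambda=\hat\Omega(\theta_0)^{-1}\bar g(\theta_0)+o_p(n^{-1/2})
\]
taken as given. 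Concretely, I take the difference $\hat\lambda-\hat\Omega(\hat\theta)^{-1}\bar g(\hat\theta)$ from the rearranged display above, and then use the hypothesized first-order expansion to identify the $O_p(n^{-1/2})$ part with $\hat\Omega(\theta_0)^{-1}\bar g(\theta_0)$. The consequence is that all first-order $\theta$-dependence is collected in the leading term, while the only new $O_p(n^{-1})$ contribution is the $\gamma$-dependent quadratic correction computed above.

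If the appendix treats $\hat\lambda$ as evaluated with $\bar g$ at $\hat\theta$ (so that the leading term is notation for the first-order piece), this identification is immediate. Otherwise, I would additionally verify that the cross terms $\bar G(\theta_0)(\hat\theta-\theta_0)$ and $(\hat\Omega(\hat\theta)-\hat\Omega(\theta_0))\hat\lambda$ are $\gamma$-free. If so, they can be reported as part of the first-order representation rather than $B_{\lambda,n}(\gamma)$. The expansion of $\hat\Omega(\hat\theta)^{-1}$ is controlled by the envelope conditions of Assumption~\ref{assumption:smooth_envolope} together with $E\sup\|H\|<\infty$.
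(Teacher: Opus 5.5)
\textbf{The gap is the leading term.} Your treatment of the quadratic term is sound and is essentially the paper's argument. You are also right that $\frac{1}{n}\sum_i\|g_i(\theta_0)\|^3=O_p(1)$ needs an explicit third-moment condition, since Assumption~\ref{assumption:smooth_envolope} only controls second moments. Neither remedy you propose for the leading term closes it.

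Your rearranged display gives $\hat\lambda=\hat\Omega(\hat\theta)^{-1}\bar g(\hat\theta)+n^{-1}B_{\lambda,n}(\gamma)+o_p(n^{-1})$. So the statement needs $\hat\Omega(\hat\theta)^{-1}\bar g(\hat\theta)-\hat\Omega(\theta_0)^{-1}\bar g(\theta_0)=o_p(n^{-1})$. The hypothesized first-order expansion has only an $o_p(n^{-1/2})$ error, so it cannot certify anything at the $n^{-1}$ scale. The $O_p(n^{-1})$ pieces of that difference are simply left over:
\begin{itemize}
\item the $\bar H$ curvature term;
\item $[\hat\Omega(\hat\theta)^{-1}-\hat\Omega(\theta_0)^{-1}]\bar g(\theta_0)$;
\item $\hat\Omega^{-1}\bar G$ times the second-order part of $\hat\theta-\theta_0$, which by Theorem~\ref{thm:theta_second} depends on $\gamma$, so these cross terms are not $\gamma$-free.
\end{itemize}
Moving them into the ``first-order representation'' changes the statement, whose leading term is exactly $\hat\Omega(\theta_0)^{-1}\bar g(\theta_0)$.

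Worse, the $O_p(n^{-1/2})$ piece $\hat\Omega(\theta_0)^{-1}\bar G(\theta_0)(\hat\theta-\theta_0)$ is not negligible at all. Subtracting the hypothesis from your display gives $\bar g(\hat\theta)-\bar g(\theta_0)=o_p(n^{-1/2})$. By full column rank of $G_0$ this forces $\sqrt n(\hat\theta-\theta_0)\xrightarrow{p}0$, contradicting the nondegenerate limit in Theorem~\ref{thm:theta_CLT}. For the multiplier profiled at the estimator, the correct first-order term is the projected one, $P_0\bar g(\theta_0)$ with $P_0=\Omega_0^{-1}-\Omega_0^{-1}G_0(G_0'\Omega_0^{-1}G_0)^{-1}G_0'\Omega_0^{-1}$, as in Newey and Smith (2004). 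So for $\hat\lambda(\hat\theta)$ the stated leading term is wrong already at order $n^{-1/2}$.

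\textbf{The paper's proof does not bridge this either.} Its first display rewrites \eqref{eq:moment_second} with $\theta_0$ in place of $\hat\theta$, citing only consistency (Theorem~\ref{thm:theta_consistency}). That is exactly the replacement you identified as non-negligible. The statement becomes provable if $(\hat\lambda,\hat\delta)$ is read as the multiplier pair profiled at the true parameter, $(\hat\lambda(\theta_0),\hat\delta(\theta_0))$. Under that reading:
\begin{itemize}
\item Lemma~\ref{lem:moment_eq_second} holds with $\theta_0$ throughout.
\item $\hat\lambda(\theta_0)=\hat\Omega(\theta_0)^{-1}\bar g(\theta_0)+O_p(n^{-1})$ and $\hat\delta(\theta_0)-\delta_0=O_p(n^{-1})$ follow from the two constraints at the fixed point $\theta_0$.
\item No $\theta$-shift arises, and your quadratic-term argument finishes the proof.
\end{itemize}
Adopt that reading explicitly rather than trying to absorb the $\theta$-shift.
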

\noindent \textbf{Proof.} See Appendix \ref{subsec:pf_thm-lambda-second}. \vspace{2mm}

\noindent
Finally, combining the first-order asymptotic representation of $\hat{\theta}$ with the second-order expansion of the multiplier $\hat{\lambda}$, we can now derive the second-order expansion of the CRPD estimator.

\begin{theorem}\label{thm:theta_second}(Second-order expansion for $\hat\theta$)
Assume the conditions of Theorems~\ref{thm:theta_CLT} and \ref{thm:lambda_second}. 
Then the CRPD estimator $\hat{\theta}$ admits the expansion
\begin{equation}\label{eq:theta_second}
\hat{\theta}-\theta_0
=
\big(G_0^{\prime}\Omega_0^{-1}G_0\big)^{-1}
G_0^{\prime}\Omega_0^{-1}\bar g(\hat{\theta})
+
\frac{1}{n}B_{\theta,n}(\gamma)
+
o_p(n^{-1}),
\end{equation}
where $\bar g(\theta)=\frac{1}{n}\sum_{i=1}^n g_i(\theta)$ and $
B_{\theta,n}(\gamma)
=
\big(G_0^{\prime}\Omega_0^{-1}G_0\big)^{-1}\Delta_n(\gamma)=O_p(1)$, with
$\Delta_n(\gamma)=O_p(1)$ collects all order-$n^{-1}$ terms arising from
the second-order expansion of the Lagrange multiplier,
the curvature term involving $\bar H(\theta_0)$, and plug-in effects
from replacing sample matrices with their population limits.
\end{theorem}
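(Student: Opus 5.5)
The route I would take starts from the two sample equations that characterise $(\hat\theta,\hat\lambda,\hat\delta)$. The first is the second-order moment equation of Lemma~\ref{lem:moment_eq_second}. The second is the first-order condition of the profiled problem with respect to $\theta$. Since $\partial\mathcal L/\partial\theta=\sum_i\pi_iG_i(\theta)'\lambda$, it reads
\[
0=\frac1n\sum_{i=1}^n w_i(\hat\theta,\hat\lambda,\hat\delta)\,G_i(\hat\theta)'\hat\lambda .
\]
Expanding $w_i=1+t_i+O(t_i^2)$ with $t_i=O_p(n^{-1/2})$ uniformly, and using Theorem~\ref{thm:lambda_delta_rates} together with $\bar G(\hat\theta)=G_0+O_p(n^{-1/2})$, gives $G_0'\hat\lambda=r_n/n+o_p(n^{-1})$. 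Here $r_n=O_p(1)$ collects the products $(\bar G(\hat\theta)-G_0)'\hat\lambda$ and $n^{-1}\sum_i t_iG_i'\hat\lambda$.

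Next I would solve Lemma~\ref{lem:moment_eq_second} for the multiplier, $\hat\lambda=\hat\Omega(\hat\theta)^{-1}\bar g(\hat\theta)+n^{-1}\tilde B_n+o_p(n^{-1})$, where $\tilde B_n$ is the $(1-\gamma)/2$ cubic term. Evaluated at $\theta_0$, this is exactly the content of Theorem~\ref{thm:lambda_second}. I would then replace $\hat\Omega(\hat\theta)^{-1}$ by $\Omega_0^{-1}$, recording the $O_p(n^{-1/2})\times O_p(n^{-1/2})$ plug-in product as an $n^{-1}$ term. Premultiplying by $G_0'$ and combining with the $\theta$-FOC yields the key identity
\[
G_0'\Omega_0^{-1}\bar g(\hat\theta)=\frac1n\bigl(r_n-G_0'\tilde B_n-\text{plug-in}\bigr)+o_p(n^{-1}).
\]

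To bring $\hat\theta-\theta_0$ into the display, I would add and subtract $G_0'\Omega_0^{-1}G_0(\hat\theta-\theta_0)$ and use the second-order Taylor expansion
\[
\bar g(\hat\theta)=\bar g(\theta_0)+\bar G(\theta_0)(\hat\theta-\theta_0)+\tfrac12\bar H(\theta_0)[\hat\theta-\theta_0]^2+o_p(n^{-1}).
\]
This step is where the $\bar H(\theta_0)$ curvature term named in the statement enters $\Delta_n(\gamma)$. Premultiplying by $(G_0'\Omega_0^{-1}G_0)^{-1}$ and collecting the $\bar H$ term, the $\tilde B_n$ term, $r_n$ and the plug-in products into $\Delta_n(\gamma)$ would give a display of the form \eqref{eq:theta_second}. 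The bound $\Delta_n(\gamma)=O_p(1)$ follows because each piece is either $O_p(1)$ directly, by Theorem~\ref{thm:lambda_second}, or a product of two $O_p(n^{-1/2})$ factors, by Theorems~\ref{thm:lambda_delta_rates} and~\ref{thm:theta_CLT}.

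The main obstacle is reconciling the leading term as worded, namely $+(G_0'\Omega_0^{-1}G_0)^{-1}G_0'\Omega_0^{-1}\bar g(\hat\theta)$ evaluated at $\hat\theta$, with the bookkeeping above. The key identity says this term is itself $O_p(n^{-1})$, while $\hat\theta-\theta_0$ is of exact order $n^{-1/2}$ by Theorem~\ref{thm:theta_CLT}. If $\bar g(\hat\theta)$ is read literally as the sample moment at the final estimate, the left and right sides of \eqref{eq:theta_second} therefore differ by $-(G_0'\Omega_0^{-1}G_0)^{-1}G_0'\Omega_0^{-1}\bar g(\theta_0)$, which is $O_p(n^{-1/2})$. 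That discrepancy cannot be placed in $n^{-1}B_{\theta,n}(\gamma)$ with $B_{\theta,n}=O_p(1)$.

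I would first check whether the intended reading treats the leading term as a linearisation at the current iterate, with $\bar g(\hat\theta)$ replaced by $\bar g(\theta_0)+\bar G(\theta_0)(\theta_0-\hat\theta)$. Equivalently, the sign of the $\bar G$ correction would be absorbed, so that the display reduces to Theorem~\ref{thm:theta_CLT} at first order. The proof would be written under that reading, with the $n^{-1}$ terms assembled exactly as described. If no such reading holds, the statement as worded fails at order $n^{-1/2}$, and I would record the correct leading term $-(G_0'\Omega_0^{-1}G_0)^{-1}G_0'\Omega_0^{-1}\bar g(\theta_0)$ instead.
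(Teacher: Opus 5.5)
Your diagnosis is correct, and it is the decisive point: the display is false as written, for reasons that have nothing to do with CRPD. With $A_0=(G_0'\Omega_0^{-1}G_0)^{-1}G_0'\Omega_0^{-1}$ we have $A_0G_0=I$ and $\bar g(\hat\theta)=\bar g(\theta_0)+G_0(\hat\theta-\theta_0)+O_p(n^{-1})$. The display therefore reduces to $A_0\bar g(\theta_0)=O_p(n^{-1})$, which contradicts $\sqrt n\,A_0\bar g(\theta_0)\xrightarrow{d}N(0,(G_0'\Omega_0^{-1}G_0)^{-1})$.

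The paper's proof uses the same ingredients as yours: the envelope condition, Taylor expansions of $\bar G$ and $\bar g$, and absorbing plug-in products into $\Delta_n(\gamma)$. It reaches the display because Step 4 substitutes Theorem~\ref{thm:lambda_second}, $\hat\lambda=\hat\Omega(\theta_0)^{-1}\bar g(\theta_0)+n^{-1}B_{\lambda,n}(\gamma)+o_p(n^{-1})$, into $\bar G(\hat\theta)'\hat\lambda\approx 0$. That produces \eqref{eq:envelope_FOC}, i.e.\ $G_0'\Omega_0^{-1}\bar g(\theta_0)=O_p(n^{-1})$, which the CLT rules out. The identity \eqref{eq:add-subtract} then turns that false statement into the stated display.

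The root cause is in the proof of Theorem~\ref{thm:lambda_delta_rates}. Passing from $\hat\Omega(\hat\theta)^{-1}\bar g(\hat\theta)$ to $\hat\Omega(\theta_0)^{-1}\bar g(\theta_0)$ drops $\Omega_0^{-1}G_0(\hat\theta-\theta_0)=O_p(n^{-1/2})$. The correct first-order multiplier is $\hat\lambda=P\bar g(\theta_0)+o_p(n^{-1/2})$ with $P=\Omega_0^{-1}-\Omega_0^{-1}G_0(G_0'\Omega_0^{-1}G_0)^{-1}G_0'\Omega_0^{-1}$. This satisfies $G_0'P=0$, as the $\theta$-condition requires. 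The proof of Theorem~\ref{thm:theta_CLT} leans on the same false identity and ends with the opposite sign to its own statement.

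Your route keeps the multiplier centred at $\hat\theta$ via Lemma~\ref{lem:moment_eq_second}. You get $G_0'\Omega_0^{-1}\bar g(\hat\theta)=O_p(n^{-1})$ from the $\theta$-FOC, and only then expand $\bar g(\hat\theta)$ about $\theta_0$. This is the correct route. It yields $\hat\theta-\theta_0=-A_0\bar g(\theta_0)+n^{-1}(G_0'\Omega_0^{-1}G_0)^{-1}\Delta_n(\gamma)+o_p(n^{-1})$, consistent with the statement of Theorem~\ref{thm:theta_CLT}.

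There are two things to fix.

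\textbf{Drop the ``linearisation'' reading.} Replace $\bar g(\hat\theta)$ by $\bar g(\theta_0)+\bar G(\theta_0)(\theta_0-\hat\theta)$ and the leading term becomes $A_0\bar g(\theta_0)-(\hat\theta-\theta_0)+O_p(n^{-1})$. The display would then force $\hat\theta-\theta_0=\tfrac12A_0\bar g(\theta_0)+O_p(n^{-1})$, which is also false. No reading of $\bar g(\hat\theta)$ saves the statement, so commit to your fallback with leading term $-A_0\bar g(\theta_0)$.

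\textbf{Do not equate your expansion with Theorem~\ref{thm:lambda_second}.} Calling your $\hat\theta$-centred expansion ``exactly the content of Theorem~\ref{thm:lambda_second} evaluated at $\theta_0$'' is wrong, because that recentring is precisely the paper's error. In particular, the $\gamma$-dependent cubic term in $\Delta_n(\gamma)$ must be built from $P\sqrt n\,\bar g(\theta_0)$, equivalently $\hat\Omega(\hat\theta)^{-1}\sqrt n\,\bar g(\hat\theta)$. It must not use $\hat\Omega(\theta_0)^{-1}\sqrt n\,\bar g(\theta_0)$ as in \eqref{eq:Blambda}, so it is not $G_0'B_{\lambda,n}(\gamma)$.

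Two minor points. The weight expansion is $w_i=1-t_i+\frac{1-\gamma}{2}t_i^2+O(t_i^3)$, not $1+t_i+\cdots$. Bounding the cubic term needs $E\|g(Z,\theta_0)\|^3<\infty$, which Assumption~\ref{assumption:smooth_envolope} does not provide.
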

\noindent \textbf{Proof.} See Appendix \ref{subsec:pf_thm-theta-second}. \vspace{2mm}

\noindent
Our asymptotic analysis characterizes the large-sample behavior of the CRPD estimator and its associated multipliers. The estimator $\hat{\theta}$ is consistent and asymptotically normal, sharing the same first-order limit as standard moment-based estimators. The power parameter $\gamma$ does not affect the leading asymptotic distribution, but it appears in the order-$n^{-1}$ bias term arising from the second-order expansions of the multipliers and the estimator. Consequently, the choice of $\gamma$ influences finite-sample performance while leaving first-order asymptotic efficiency unchanged.

\section{Risk-optimal selection of the CRPD power parameter}
\label{subsec:optimal_gamma}

The previous section shows that, under correct moment specification, the CRPD power
parameter $\gamma$ does not affect the first-order asymptotic distribution of the
structural estimator. For each fixed $\gamma$, the structural estimator and the
associated moment Lagrange multiplier admit the expansions in
\eqref{eq:theta_second} and \eqref{eq:lambda_second}, respectively. The leading
terms in both expansions are invariant to $\gamma$. Therefore, different values of
$\gamma$ generate first-order equivalent estimators. In this sense, $\gamma$ is not
identified by the first-order moment problem. Its role is instead higher-order:
$\gamma$ changes the order-$n^{-1}$ correction terms
$B_{\theta,n}(\gamma)$ and $B_{\lambda,n}(\gamma)$, and thereby affects
finite-sample bias, stability, and sensitivity.

This observation suggests that $\gamma$ should not be treated as a structural
parameter. The population moment condition identifies $\theta_{0}$, not $\gamma$.
Instead, $\gamma$ should be interpreted as a curvature parameter that governs the
finite-sample geometry of the CRPD problem. Hence, the relevant question is not
which $\gamma$ is correct in a structural sense, but which $\gamma$ gives a more
stable finite-sample estimating system. To emphasize this dependence, we write the
structural estimator and the moment multiplier as $\hat\theta_{\gamma}$ and
$\hat\lambda_{\gamma}$.

For a fixed $\gamma$, the CRPD solution consists not only of the structural
estimator $\hat\theta_{\gamma}$, but also of the moment multiplier
$\hat\lambda_{\gamma}$ and the implied empirical weights
$\left\{\hat\pi_{i,\gamma}\right\}_{i=1}^{n}$. The multiplier is important because
it measures the shadow cost of imposing the sample moment restrictions. In
particular, the implied weights satisfy
\begin{equation}
\hat\pi_{i,\gamma}
=
\frac{1}{n}
\left[
\frac{1}{\gamma+1}
-
\gamma \hat\delta_{\gamma}
-
\gamma \hat\lambda_{\gamma}^{\prime}
g_{i}(\hat\theta_{\gamma})
\right]^{1/\gamma}.
\label{eq:pi_gamma_weight}
\end{equation}
Thus, the scalar term
$\hat\lambda_{\gamma}^{\prime}g_{i}(\hat\theta_{\gamma})$ determines how much
observation $i$ is reweighted in order to satisfy the moment restrictions. If $\hat\lambda_{\gamma}$ is large or unstable, the bracketed term in
\eqref{eq:pi_gamma_weight} may move close to the boundary of the admissible
region for the implied probabilities. This boundary-related instability is
reflected in the concentration of the implied weights on a small number of
observations. In such cases, the estimator effectively relies on a limited subset
of the sample to satisfy the moment restrictions.

This loss of balance can be summarized by the effective sample size. In the CRPD
formulation, the implied probabilities $\{\hat\pi_i\}_{i=1}^{n}$ define a
reweighted empirical distribution. When these probabilities are close to uniform,
the estimator draws information from many observations. When they are highly
uneven, a small number of observations receive disproportionate influence. The
effective sample size,
\[
    ESS
    =
    \frac{1}{\sum_{i=1}^{n}\hat\pi_i^2},
\]
measures this concentration. If $\hat\pi_i=1/n$ for all $i$, then $ESS=n$. If the weights are concentrated on a smaller subset of observations, then
$ESS<n$, and the estimator behaves as if it were based on a smaller effective
sample. This loss of effective sample size creates both statistical and numerical
instability.

Statistically, a highly concentrated implied distribution makes the estimator
sensitive to a small number of observations. Small perturbations in those
observations, or in the corresponding sample moments, can lead to non-negligible
changes in $\hat\theta_{\gamma}$, $\hat\lambda_{\gamma}$, and the implied
probabilities. Hence, the moment restrictions may be satisfied only by assigning
excessive influence to a limited part of the sample. Such a solution is less
stable because it depends strongly on particular observations rather than on the
overall empirical distribution.

Numerically, large multipliers and uneven implied probabilities indicate that the
solution is close to the boundary of the feasible probability region. This
boundary need not appear only through exactly zero weights. In many
implementations, the weights are constrained to remain positive, so literal zero
weights may be rare. Nevertheless, the solution can still be close to the
boundary when some weights are nearly zero or when other weights become
excessively large. In this case, small changes in $\lambda$ can produce large
changes in the implied weights, making the optimization problem more fragile.

Weight concentration can also distort finite-sample inference. First-order
standard errors rely on regularity conditions under which the implied
distribution remains well behaved. If the estimator effectively relies on a much
smaller subset of observations, the finite-sample distribution of
$\hat\theta_{\gamma}$ may be more variable than the first-order approximation
suggests. This helps explain why coverage can be poor in small samples even when
the estimator remains approximately centered around the true parameter.

This multiplier-stability issue provides a direct reason to include the
multiplier component in the selection criterion. The purpose is not to treat
$\hat\lambda_{\gamma}$ as another structural parameter, nor to minimize its mean
squared error. Rather, the multiplier component measures the finite-sample cost
of enforcing the moment restrictions. A value of $\gamma$ may deliver a
reasonable structural estimate $\hat\theta_{\gamma}$ while requiring a large
multiplier adjustment. Such a solution can be fragile because small changes in
the sample moments may lead to large changes in the implied probabilities, either
through near-zero weights or through excessive upper-tail weight concentration.\footnote{A detailed discussion is provided in Appendix
\ref{subsec:boundary_problem}.}

Accordingly, the multiplier component should be interpreted as a measure of
overall moment-enforcement stability. It is related to boundary concerns, but it
is not equivalent to a lower-bound hit rate based only on $\min_i\hat\pi_i$. A
selector may produce fewer near-zero weights and still be less stable overall if
it generates larger multiplier norms, larger maximum weights, or a smaller
effective sample size. The multiplier component therefore helps select a
curvature parameter that balances structural accuracy with stable moment
enforcement in the full estimator--multiplier system.

\subsection{Second-order system-risk criterion}

We define two unnormalized second-order risk components. The first component
measures the second-order distortion of the structural estimator:
\begin{equation}
C_{\theta}^{\mathrm{raw}}(\gamma)
=
E\left[
B_{\theta,n}(\gamma)^{\prime}
K_{0}
B_{\theta,n}(\gamma)
\right],
\qquad
K_{0}=G_{0}^{\prime}\Omega_{0}^{-1}G_{0}.
\label{eq:Ctheta_raw}
\end{equation}
Although $B_{\theta,n}(\gamma)$ is already expressed in the structural-parameter
scale, we evaluate it using $K_{0}$ to obtain a standardized measure of
second-order distortion. The matrix $K_{0}$ is the first-order precision matrix of
$\hat\theta_{\gamma}$. Therefore,
$B_{\theta,n}(\gamma)^{\prime}K_{0}B_{\theta,n}(\gamma)$ measures the
second-order structural distortion in Wald-type units. Equivalently, if
$B_{\theta,n}(\gamma)=K_{0}^{-1}\Delta_{n}(\gamma)$, then
\begin{equation}
B_{\theta,n}(\gamma)^{\prime}K_{0}B_{\theta,n}(\gamma)
=
\Delta_{n}(\gamma)^{\prime}K_{0}^{-1}\Delta_{n}(\gamma).
\end{equation}
Thus, the role of $K_{0}$ is not to rederive the expansion. It is to measure the
second-order correction in units normalized by the first-order precision of
$\hat\theta_{\gamma}$.

The second component measures the second-order cost of enforcing the moment
restrictions:
\begin{equation}
C_{\lambda}^{\mathrm{raw}}(\gamma)
=
E\left[
B_{\lambda,n}(\gamma)^{\prime}
\Omega_{0}
B_{\lambda,n}(\gamma)
\right].
\label{eq:Clambda_raw}
\end{equation}
The use of $\Omega_{0}$ is natural because $\lambda$ is dual to the moment vector
$g$. Since $\Omega_{0}=E[g(\theta_{0})g(\theta_{0})^{\prime}]$, for any fixed
vector $b_{\lambda}$ with the same dimension as the moment vector,
\begin{equation}
b_{\lambda}^{\prime}\Omega_{0}b_{\lambda}
=
E
\left[
\left(
b_{\lambda}^{\prime}g(\theta_{0})
\right)^{2}
\right].
\label{eq:Clambda_interpretation}
\end{equation}
Applying this metric to the second-order multiplier correction gives
\begin{equation}
B_{\lambda,n}(\gamma)^{\prime}
\Omega_{0}
B_{\lambda,n}(\gamma),
\end{equation}
which measures the size of the adjustment pressure induced by
$B_{\lambda,n}(\gamma)$ in the natural scale of the moment restrictions.

Because the two components may have different numerical scales, we normalize them
over the candidate set $\Gamma$ before combining them. For
$j\in{\theta,\lambda}$, define
\begin{equation}
\widetilde C_{j}(\gamma)
=
\frac{
C_{j}^{\mathrm{raw}}(\gamma)
-
\min_{\tilde\gamma\in\Gamma}
C_{j}^{\mathrm{raw}}(\tilde\gamma)
}{
\max_{\tilde\gamma\in\Gamma}
C_{j}^{\mathrm{raw}}(\tilde\gamma)
-
\min_{\tilde\gamma\in\Gamma}
C_{j}^{\mathrm{raw}}(\tilde\gamma)
},
\label{eq:population_minmax_normalization}
\end{equation}
whenever the denominator is positive. If a component is constant over $\Gamma$,
then that component contains no information for selecting $\gamma$ and is set to
zero after normalization. This normalization places the structural and multiplier
components on a common $[0,1]$ scale.

The normalized second-order system-risk criterion is
\begin{equation}
C^{\mathrm{sys}}(\gamma;\tau)
=
\tau \widetilde C_{\theta}(\gamma)
+
(1-\tau)\widetilde C_{\lambda}(\gamma),
\qquad
\tau\in[0,1].
\label{eq:system_risk}
\end{equation}
The parameter $\tau$ controls how the researcher values the two parts of the CRPD
system. When $\tau=1$, the criterion focuses only on the second-order structural
distortion of $\hat\theta_{\gamma}$. When $\tau=0$, the criterion focuses only on
the multiplier-stabilization cost. Intermediate values $0<\tau<1$ balance the two.
Thus, $\tau$ does not change the population moment target. It only specifies how
the finite-sample performance of the CRPD system is evaluated.

We define the risk-optimal curvature parameter as
\begin{equation}
\gamma^{\star}(\tau)
=
\arg\min_{\gamma\in\Gamma}
C^{\mathrm{sys}}(\gamma;\tau).
\label{eq:gamma_star}
\end{equation}
This target is not a structural parameter. Rather, $\gamma^{\star}(\tau)$ is a
\textit{pseudo}-true hyperparameter target: it is the value of $\gamma$ that
minimizes the normalized second-order system risk. In this sense, $\gamma$ is
first-order unidentified in the original CRPD moment problem, but
$\gamma^{\star}(\tau)$ is identified by the second-order risk criterion.

\subsection{Consistency of the selected curvature parameter}

To connect the feasible selector to the population target, define
\begin{equation}
Q(\gamma;\tau)
=
C^{\mathrm{sys}}(\gamma;\tau),
\qquad
\widehat Q_{n}(\gamma;\tau)
=
\widehat C_{n}^{\mathrm{sys}}(\gamma;\tau).
\end{equation}
The population target is
\begin{equation}
\gamma^{\star}(\tau)
=
\arg\min_{\gamma\in\Gamma}
Q(\gamma;\tau),
\end{equation}
and the feasible selector is
\begin{equation}
\hat\gamma(\tau)
\in
\arg\min_{\gamma\in\Gamma}
\widehat Q_{n}(\gamma;\tau).
\end{equation}

\begin{assumption}[Identification of the outer risk target]
\label{ass:gamma_identification}
Fix $\tau\in[0,1]$. The candidate set $\Gamma$ is compact, the population
criterion $Q(\gamma;\tau)$ is continuous in $\gamma$, and
$Q(\gamma;\tau)$ has a unique minimizer
\begin{equation}
\gamma^{\star}(\tau)
=
\arg\min_{\gamma\in\Gamma}
Q(\gamma;\tau).
\end{equation}
\end{assumption}

\begin{assumption}[Uniform convergence of raw risk components]
\label{ass:raw_risk_uniform_convergence}
For $j\in{\theta,\lambda}$,
\begin{equation}
\sup_{\gamma\in\Gamma}
\left|
\widehat C_{j,\mathrm{raw}}(\gamma)
-
C_{j}^{\mathrm{raw}}(\gamma)
\right|
\xrightarrow{p}0.
\label{eq:raw_component_uniform_convergence}
\end{equation}
Moreover, the population ranges are nondegenerate:
\begin{equation}
R_{j}
:=
\max_{\gamma\in\Gamma} C_{j}^{\mathrm{raw}}(\gamma)
-
\min_{\gamma\in\Gamma} C_{j}^{\mathrm{raw}}(\gamma)
>
0,
\qquad
j\in{\theta,\lambda}.
\label{eq:nondegenerate_population_range}
\end{equation}
\end{assumption}

\begin{lemma}[Uniform convergence of the normalized system risk]
\label{lem:system_risk_uniform_convergence}
Under Assumption \ref{ass:raw_risk_uniform_convergence},
\begin{equation}
\sup_{\gamma\in\Gamma}
\left|
\widehat Q_{n}(\gamma;\tau)
-
Q(\gamma;\tau)
\right|
\xrightarrow{p}0.
\label{eq:uniform_convergence_gamma}
\end{equation}
\end{lemma}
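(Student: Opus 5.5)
The plan is to handle each component separately, by writing the feasible normalized component as a continuous function of a few uniformly convergent quantities. Fix $j\in\{\theta,\lambda\}$. The feasible criterion is understood to be built from the estimated raw components with the same min--max normalization over $\Gamma$:
\[
\widehat C_j(\gamma)=\frac{\widehat C_{j,\mathrm{raw}}(\gamma)-\hat m_j}{\hat M_j-\hat m_j},
\qquad
\hat m_j=\min_{\tilde\gamma\in\Gamma}\widehat C_{j,\mathrm{raw}}(\tilde\gamma),\quad
\hat M_j=\max_{\tilde\gamma\in\Gamma}\widehat C_{j,\mathrm{raw}}(\tilde\gamma).
\]
The population counterparts are $m_j$, $M_j$ and $R_j=M_j-m_j>0$.

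Write $\varepsilon_{n,j}:=\sup_{\gamma\in\Gamma}|\widehat C_{j,\mathrm{raw}}(\gamma)-C_j^{\mathrm{raw}}(\gamma)|$, which tends to $0$ in probability by Assumption \ref{ass:raw_risk_uniform_convergence}. The elementary inequalities $|\inf f-\inf h|\le\sup|f-h|$ and $|\sup f-\sup h|\le\sup|f-h|$ give $|\hat m_j-m_j|\le\varepsilon_{n,j}$ and $|\hat M_j-M_j|\le\varepsilon_{n,j}$. Hence $|\hat R_j-R_j|\le 2\varepsilon_{n,j}$, where $\hat R_j:=\hat M_j-\hat m_j$.

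Because $R_j>0$, the event $A_n=\{2\varepsilon_{n,j}\le R_j/2\}$ has probability approaching one. On $A_n$ we have $\hat R_j\ge R_j/2>0$, so the feasible normalization is well defined there and the degenerate ``set to zero'' convention is never triggered. This nondegeneracy is the only real obstacle in the argument. Without $R_j>0$ the ratio would be unstable, and it is exactly here that the second half of Assumption \ref{ass:raw_risk_uniform_convergence} is used.

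On $A_n$, I would decompose numerator and denominator as follows. Put $N(\gamma)=C_j^{\mathrm{raw}}(\gamma)-m_j$, which lies in $[0,R_j]$, and $\hat N(\gamma)=\widehat C_{j,\mathrm{raw}}(\gamma)-\hat m_j$. Then
\[
\bigl|\widehat C_j(\gamma)-\widetilde C_j(\gamma)\bigr|
=\left|\frac{\hat N(\gamma)}{\hat R_j}-\frac{N(\gamma)}{R_j}\right|
\le\frac{|\hat N(\gamma)-N(\gamma)|}{\hat R_j}+N(\gamma)\,\frac{|R_j-\hat R_j|}{\hat R_j R_j}.
\]
The first term is at most $2\varepsilon_{n,j}/(R_j/2)$. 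Since $N(\gamma)\le R_j$, the second term is at most $2\varepsilon_{n,j}/(R_j/2)$ as well. Both bounds are uniform in $\gamma$, so
\[
\sup_{\gamma\in\Gamma}\bigl|\widehat C_j(\gamma)-\widetilde C_j(\gamma)\bigr|\le \frac{8\varepsilon_{n,j}}{R_j}\xrightarrow{p}0 .
\]

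Finally, combine the two components with the triangle inequality:
\[
\sup_{\gamma\in\Gamma}\bigl|\widehat Q_n(\gamma;\tau)-Q(\gamma;\tau)\bigr|
\le\tau\sup_{\gamma\in\Gamma}\bigl|\widehat C_\theta-\widetilde C_\theta\bigr|+(1-\tau)\sup_{\gamma\in\Gamma}\bigl|\widehat C_\lambda-\widetilde C_\lambda\bigr|,
\]
and both terms are $o_p(1)$. Since $P(A_n)\to1$, the conclusion holds unconditionally in probability. This argument does not need compactness or continuity of $Q$; those hypotheses are only needed later for consistency of $\hat\gamma(\tau)$.
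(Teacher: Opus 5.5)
Your proof is correct and follows the same route as the paper. Both arguments bound the sample min, max and range by the sup deviation, use $R_j>0$ to keep the denominator away from zero with probability approaching one, and then combine the two normalized components with the triangle inequality. Your explicit ratio bound $8\varepsilon_{n,j}/R_j$ supplies the step the paper's proof covers with only a bare ``therefore.''
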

\noindent \textbf{Proof.} See Appendix \ref{subsec:pf_uniform-conv-system-risk}.

\begin{theorem}[Consistency of the selected curvature parameter]
\label{thm:gamma_hat_consistency}
Under Assumptions \ref{ass:gamma_identification} and
\ref{ass:raw_risk_uniform_convergence}, any measurable selection
\begin{equation}
\hat\gamma(\tau)
\in
\arg\min_{\gamma\in\Gamma}
\widehat Q_{n}(\gamma;\tau)
\end{equation}
satisfies
\begin{equation}
\hat\gamma(\tau)
\xrightarrow{p}
\gamma^{\star}(\tau).
\label{eq:gamma_hat_consistency}
\end{equation}
\end{theorem}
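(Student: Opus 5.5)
The argument is the standard consistency proof for an extremum estimator. It takes Lemma \ref{lem:system_risk_uniform_convergence} as given, so that $\sup_{\gamma\in\Gamma}|\widehat Q_n(\gamma;\tau)-Q(\gamma;\tau)|\xrightarrow{p}0$, and combines it with the identification structure in Assumption \ref{ass:gamma_identification}. No properties of the CRPD expansions are needed beyond what is already packaged in these two results.

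\emph{Step 1: well-separated minimum.} Fix $\varepsilon>0$ and set $\Gamma_\varepsilon:=\{\gamma\in\Gamma:|\gamma-\gamma^\star(\tau)|\ge\varepsilon\}$, which is compact as a closed subset of the compact set $\Gamma$. If $\Gamma_\varepsilon$ is empty there is nothing to prove. Otherwise $Q(\cdot;\tau)$ is continuous, so it attains its minimum on $\Gamma_\varepsilon$ at some $\bar\gamma$. Because $\gamma^\star(\tau)$ is the unique minimizer over $\Gamma$, we get
\[
\eta:=\min_{\gamma\in\Gamma_\varepsilon}Q(\gamma;\tau)-Q(\gamma^\star(\tau);\tau)>0 .
\]
Continuity of $Q$ deserves a brief remark. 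It is assumed directly, but it is also consistent with the construction, since $Q$ is a convex combination of min--max normalized continuous raw risks whose ranges $R_j>0$ by Assumption \ref{ass:raw_risk_uniform_convergence}.

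\emph{Step 2: sandwich argument.} On the event $A_n:=\{\sup_{\gamma}|\widehat Q_n-Q|<\eta/2\}$, which has probability tending to one by Lemma \ref{lem:system_risk_uniform_convergence}, take any $\gamma\in\Gamma_\varepsilon$. Then
\[
\widehat Q_n(\gamma;\tau)>Q(\gamma;\tau)-\eta/2\ge Q(\gamma^\star;\tau)+\eta/2>\widehat Q_n(\gamma^\star;\tau).
\]
So no point of $\Gamma_\varepsilon$ can minimize $\widehat Q_n$. Since $\hat\gamma(\tau)$ is a minimizer over $\Gamma$, it follows that $\hat\gamma(\tau)\notin\Gamma_\varepsilon$, i.e.\ $|\hat\gamma(\tau)-\gamma^\star(\tau)|<\varepsilon$ on $A_n$. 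Hence $P(|\hat\gamma(\tau)-\gamma^\star(\tau)|\ge\varepsilon)\le P(A_n^c)\to0$. Measurability of the selection makes these probabilities well defined.

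\emph{Main obstacle.} The only real content is Step 1, namely turning uniqueness into a positive separation gap $\eta$, and this needs compactness together with continuity. The potential subtlety is that Lemma \ref{lem:system_risk_uniform_convergence} relies on the nondegenerate ranges $R_j>0$. These ensure that the sample min--max normalization, with its random denominators, converges to the population one, and they keep $Q$ continuous. That point is absorbed in the lemma, so here it only needs to be cited. If the lemma were unavailable, the fallback is to show directly that the sample minima, maxima and ranges converge, using $|\min\widehat C-\min C|\le\sup|\widehat C-C|$ and the same bound for the maxima. One then bounds the ratio uniformly using $R_j>0$.
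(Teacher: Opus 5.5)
Your proof is correct and follows essentially the same route as the paper's: you cite Lemma~\ref{lem:system_risk_uniform_convergence}, get a positive separation gap on $\Gamma_\varepsilon$ from compactness, continuity and uniqueness of $\gamma^\star(\tau)$, and then run the sandwich argument on the high-probability event where the uniform deviation is small. The only difference is that you use the threshold $\eta/2$ with strict inequalities where the paper uses $\eta_\varepsilon/3$, which changes nothing.
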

\noindent \textbf{Proof.} See Appendix \ref{subsec:thm_consistency_gammahat}.

This consistency result is an outer-risk result. It does not mean that $\gamma$
is identified by the first-order moment condition. Rather, it means that the
estimated second-order system-risk criterion is uniformly close to its population
counterpart, so the feasible selector chooses the same risk-optimal curvature
parameter asymptotically.

If the estimated criterion has a random limiting objective rather than a
deterministic population target, then the replacement of $\hat\gamma(\tau)$ by
$\gamma^{\star}(\tau)$ is not justified. In that case, $\hat\gamma(\tau)$ should
be interpreted as optimizing the realized finite-sample geometry, and
$B_{\theta,n}(\hat\gamma(\tau))$ should remain inside the second-order expansion.

\subsection{Second-order expansion of $\hat\theta_{\hat\gamma}$}

We now characterize how the data-selected curvature parameter $\hat\gamma(\tau)$
enters the expansion of the structural estimator. The key point is that
$\gamma$ does not affect the first-order term. It enters only through the
second-order correction.

We first impose a local stability condition on the second-order correction
$B_{\theta,n}(\gamma)$.

\begin{assumption}[Local stability of the second-order correction]
\label{ass:Btheta_stochastic_equicontinuity}
For fixed $\tau\in[0,1]$, the process $B_{\theta,n}(\gamma)$ is locally
stochastically equicontinuous at $\gamma^{\star}(\tau)$. That is, for every
$\varepsilon>0$,
\begin{equation}
\lim_{\zeta\downarrow 0}
\limsup_{n\to\infty}
P\left(
\sup_{\gamma\in\Gamma:\,
|\gamma-\gamma^{\star}(\tau)|\leq \zeta}
\left|
B_{\theta,n}(\gamma)
-
B_{\theta,n}(\gamma^{\star}(\tau))
\right|
>
\varepsilon
\right)
=
0.
\label{eq:Btheta_stochastic_equicontinuity}
\end{equation}
\end{assumption}

This condition requires the second-order correction to vary smoothly, in probability,
with the curvature parameter $\gamma$ around the risk-optimal value
$\gamma^{\star}(\tau)$. A sufficient condition is that
$B_{\theta,n}(\gamma)$ is differentiable in $\gamma$ and satisfies
\begin{equation}
\sup_{\gamma\in\Gamma}
\left|
\frac{\partial B_{\theta,n}(\gamma)}{\partial\gamma}
\right|
=
O_{p}(1).
\end{equation}

Under a uniform version of the fixed-$\gamma$ expansion, the selected estimator
satisfies
\begin{equation}
\hat\theta_{\hat\gamma(\tau)}-\theta_{0}
=
A_{0}\bar g(\theta_{0})
+
\frac{1}{n}
B_{\theta,n}(\hat\gamma(\tau))
+
o_{p}(n^{-1}),
\label{eq:theta_gammahat_expansion}
\end{equation}
where
\begin{equation}
A_{0}
=
(G_{0}^{\prime}\Omega_{0}^{-1}G_{0})^{-1}
G_{0}^{\prime}\Omega_{0}^{-1}.
\end{equation}
By Theorem \ref{thm:gamma_hat_consistency},
\begin{equation}
\hat\gamma(\tau)
\xrightarrow{p}
\gamma^{\star}(\tau).
\end{equation}
Together with Assumption \ref{ass:Btheta_stochastic_equicontinuity}, this implies
\begin{equation}
B_{\theta,n}(\hat\gamma(\tau))
=
B_{\theta,n}(\gamma^{\star}(\tau))
+
o_{p}(1).
\label{eq:Btheta_gammahat_replace}
\end{equation}
Substituting \eqref{eq:Btheta_gammahat_replace} into
\eqref{eq:theta_gammahat_expansion} yields
\begin{equation}
\hat\theta_{\hat\gamma(\tau)}-\theta_{0}
=
A_{0}\bar g(\theta_{0})
+
\frac{1}{n}
B_{\theta,n}(\gamma^{\star}(\tau))
+
o_{p}(n^{-1}).
\label{eq:theta_gammahat_consistent_expansion}
\end{equation}

This representation shows that the selected curvature parameter affects
$\hat\theta_{\hat\gamma(\tau)}$ only through the second-order correction term.
The first-order term $A_{0}\bar g(\theta_{0})$ is invariant to $\gamma$. Therefore,
$\hat\gamma(\tau)$ changes the finite-sample bias and sensitivity of the estimator,
but it does not alter the first-order asymptotic distribution. In particular,
$\gamma^{\star}(\tau)$ determines the order-$n^{-1}$ correction of the selected
estimator, while first-order identification remains governed by the original
moment condition for $\theta_{0}$.

\subsection{Post-selection inference for $\hat\theta_{\hat\gamma(\tau)}$}
\label{subsec:post_selection_inference}

The post-selection expansion implies that selecting $\hat\gamma(\tau)$ does not
affect the first-order asymptotic distribution of the structural estimator. The
selected curvature parameter enters only through the second-order correction term.
This distinction is important. First-order inference remains valid, but
finite-sample bias, mean squared error, and coverage distortion may still depend
on the selected second-order risk target $\gamma^{\star}(\tau)$.

Let
\begin{equation}
V_{\theta}
=
A_{0}\Omega_{0}A_{0}^{\prime}
=
(G_{0}^{\prime}\Omega_{0}^{-1}G_{0})^{-1},
\qquad
A_{0}
=
(G_{0}^{\prime}\Omega_{0}^{-1}G_{0})^{-1}
G_{0}^{\prime}\Omega_{0}^{-1}.
\end{equation}
Under the post-selection expansion,
\begin{equation}
\hat\theta_{\hat\gamma(\tau)}-\theta_{0}
=
A_{0}\bar g(\theta_{0})
+
\frac{1}{n}B_{\theta,n}(\gamma^{\star}(\tau))
+
o_{p}(n^{-1}).
\label{eq:post_selection_theta_expansion}
\end{equation}
Multiplying by $\sqrt n$ gives
\begin{equation}
\sqrt n
\left(
\hat\theta_{\hat\gamma(\tau)}-\theta_{0}
\right)
=
\sqrt n A_{0}\bar g(\theta_{0})
+
\frac{1}{\sqrt n}B_{\theta,n}(\gamma^{\star}(\tau))
+
o_{p}(n^{-1/2}).
\label{eq:post_selection_rootn_expansion}
\end{equation}
The first term on the right-hand side is invariant to $\gamma$, while the second
term is of smaller order. Hence,
\begin{equation}
\sqrt n
\left(
\hat\theta_{\hat\gamma(\tau)}-\theta_{0}
\right)
\xrightarrow{d}
N(0,V_{\theta}).
\label{eq:post_selection_first_order_normality}
\end{equation}

\subsection{First-order validity}

\begin{proposition}[First-order validity after selecting $\hat\gamma(\tau)$]
\label{prop:post_selection_first_order_validity}
Suppose that \eqref{eq:post_selection_theta_expansion} holds and that
$\widehat V_{\theta}$ is a consistent estimator of $V_{\theta}$. Then the usual
first-order confidence interval based on $\hat\theta_{\hat\gamma(\tau)}$ has
asymptotically correct coverage. In the scalar case,
\begin{equation}
CI_{1-\alpha}
=
\left[
\hat\theta_{\hat\gamma(\tau)}
-
z_{1-\alpha/2}
\sqrt{\frac{\widehat V_{\theta}}{n}},
\;
\hat\theta_{\hat\gamma(\tau)}
+
z_{1-\alpha/2}
\sqrt{\frac{\widehat V_{\theta}}{n}}
\right]
\end{equation}
satisfies
\begin{equation}
P
\left(
\theta_{0}\in CI_{1-\alpha}
\right)
\to
1-\alpha.
\end{equation}
\end{proposition}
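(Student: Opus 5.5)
The plan is to reduce the coverage statement to the first-order limit \eqref{eq:post_selection_first_order_normality}, then apply Slutsky's theorem and the continuous mapping theorem. There are four steps.

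\textbf{Step 1: the $\sqrt n$-scaled expansion.} Starting from \eqref{eq:post_selection_theta_expansion}, multiply by $\sqrt n$ to obtain \eqref{eq:post_selection_rootn_expansion}. The correction term is $n^{-1/2}B_{\theta,n}(\gamma^{\star}(\tau))$. Because $\gamma^{\star}(\tau)$ is a fixed, nonrandom point of $\Gamma$, Theorem~\ref{thm:theta_second} applied at $\gamma=\gamma^{\star}(\tau)$ gives $B_{\theta,n}(\gamma^{\star}(\tau))=O_p(1)$. Hence this term is $O_p(n^{-1/2})=o_p(1)$. The remainder $\sqrt n\cdot o_p(n^{-1})$ is likewise $o_p(1)$. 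We conclude
\[
\sqrt n\bigl(\hat\theta_{\hat\gamma(\tau)}-\theta_0\bigr)=\sqrt n A_0\bar g(\theta_0)+o_p(1).
\]

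\textbf{Step 2: the limiting distribution.} Under i.i.d.\ sampling, $E[g(Z,\theta_0)]=0$, and $\Omega_0$ finite and positive definite (Assumptions~\ref{assumption:smooth_envolope}--\ref{assumption:identification}), the Lindeberg--L\'evy CLT gives $\sqrt n\,\bar g(\theta_0)\xrightarrow{d}N(0,\Omega_0)$. Since $A_0$ is deterministic, Slutsky's theorem yields $\sqrt n(\hat\theta_{\hat\gamma(\tau)}-\theta_0)\xrightarrow{d}N(0,A_0\Omega_0A_0')=N(0,V_\theta)$. The identity $A_0\Omega_0A_0'=(G_0'\Omega_0^{-1}G_0)^{-1}$ is direct algebra.

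\textbf{Step 3: studentization.} In the scalar case $V_\theta>0$, because $G_0$ has full rank and $\Omega_0$ is positive definite. Together with $\widehat V_\theta\xrightarrow{p}V_\theta$, the continuous mapping theorem gives $\widehat V_\theta^{-1/2}\xrightarrow{p}V_\theta^{-1/2}$. Slutsky then gives
\[
T_n:=\sqrt n(\hat\theta_{\hat\gamma(\tau)}-\theta_0)/\sqrt{\widehat V_\theta}\xrightarrow{d}N(0,1).
\]

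\textbf{Step 4: coverage.} The event $\theta_0\in CI_{1-\alpha}$ is exactly $\{|T_n|\le z_{1-\alpha/2}\}$. The limit law $N(0,1)$ is continuous, so the boundary points $\pm z_{1-\alpha/2}$ carry zero limiting mass. The portmanteau theorem (or P\'olya's theorem) therefore gives $P(|T_n|\le z_{1-\alpha/2})\to\Phi(z_{1-\alpha/2})-\Phi(-z_{1-\alpha/2})=1-\alpha$.

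\textbf{Main obstacle.} The delicate point is Step 1, namely justifying that $B_{\theta,n}(\gamma^{\star}(\tau))$ is $O_p(1)$ even though the estimator was computed at the random $\hat\gamma(\tau)$. The proposition takes \eqref{eq:post_selection_theta_expansion} as given, and that expansion already evaluates the correction at the deterministic $\gamma^{\star}(\tau)$. So only the fixed-$\gamma$ boundedness from Theorem~\ref{thm:theta_second} is needed.

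If one prefers to work from \eqref{eq:theta_gammahat_expansion} instead, the term to control is $B_{\theta,n}(\hat\gamma(\tau))$. I would first try the sufficient condition stated after Assumption~\ref{ass:Btheta_stochastic_equicontinuity}: $\sup_{\gamma\in\Gamma}|B_{\theta,n}(\gamma)|=O_p(1)$. This makes $B_{\theta,n}(\hat\gamma(\tau))=O_p(1)$ with no appeal to Theorem~\ref{thm:gamma_hat_consistency} at all. Hence first-order validity holds even if $\hat\gamma(\tau)$ is only a random selection, as in the random-limit caveat in the text. Everything else in the argument is routine.
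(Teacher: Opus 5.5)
Your argument is correct and follows the paper's proof. Both go from the $\sqrt n$-scaled post-selection expansion to $N(0,V_\theta)$ via the CLT and then obtain coverage by Slutsky's theorem with $\widehat V_\theta$; you simply make explicit the $O_p(1)$ bound on $B_{\theta,n}(\gamma^{\star}(\tau))$, the positivity of $V_\theta$, and the portmanteau step that the paper leaves implicit. One small correction to your side remark: the sufficient condition stated after Assumption \ref{ass:Btheta_stochastic_equicontinuity} is $\sup_{\gamma\in\Gamma}|\partial B_{\theta,n}(\gamma)/\partial\gamma|=O_p(1)$, not a uniform bound on $B_{\theta,n}$ itself, although together with the fixed-$\gamma$ bound at one point and $\Gamma$ being a compact interval it does imply $\sup_{\gamma\in\Gamma}|B_{\theta,n}(\gamma)|=O_p(1)$, so your conclusion there still holds.
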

\noindent \textbf{Proof.} See Appendix \ref{subsec:prop_post-selection-first-order-validity}.

\subsection{Higher-order bias and mean squared error}

The first-order validity result does not imply that the choice of $\gamma$ is
irrelevant in finite samples. The selected curvature target
$\gamma^{\star}(\tau)$ enters the post-selection expansion through the
order-$n^{-1}$ term. Hence, it can affect higher-order bias, mean squared error,
and coverage distortion even though it does not change the first-order asymptotic
distribution.

\begin{proposition}[Higher-order bias and mean squared error after selecting $\hat\gamma(\tau)$]
\label{prop:post_selection_bias_mse}
Suppose that the post-selection expansion
\begin{equation}
\hat\theta_{\hat\gamma(\tau)}-\theta_{0}
=
A_{0}\bar g(\theta_{0})
+
\frac{1}{n}B_{\theta,n}(\gamma^{\star}(\tau))
+
r_{\theta,n}
\label{eq:post_selection_expansion_with_remainder}
\end{equation}
holds, where $r_{\theta,n}=o_{p}(n^{-1})$. Let
\begin{equation}
Z_{n}
=
\sqrt n A_{0}\bar g(\theta_{0}).
\end{equation}
Assume that $E[\bar g(\theta_{0})]=0$ and that the remainder terms are uniformly
integrable so that
\begin{equation}
E[r_{\theta,n}]=o(n^{-1})
\end{equation}
and
\begin{equation}
E
\left[
r_{\theta,n}r_{\theta,n}^{\prime}
\right]
=
o(n^{-2}).
\end{equation}
Then the bias satisfies
\begin{equation}
E
\left[
\hat\theta_{\hat\gamma(\tau)}-\theta_{0}
\right]
=
\frac{1}{n}
E
\left[
B_{\theta,n}(\gamma^{\star}(\tau))
\right]
+
o(n^{-1}).
\label{eq:post_selection_bias}
\end{equation}
Moreover, the mean squared error matrix satisfies
\begin{align}
&E
\left[
\left(
\hat\theta_{\hat\gamma(\tau)}-\theta_{0}
\right)
\left(
\hat\theta_{\hat\gamma(\tau)}-\theta_{0}
\right)^{\prime}
\right]
\nonumber\\
&\quad =
\frac{1}{n}
E
\left[
Z_{n}Z_{n}^{\prime}
\right]
+
\frac{1}{n^{3/2}}
E
\left[
Z_{n}B_{\theta,n}(\gamma^{\star}(\tau))^{\prime}
+
B_{\theta,n}(\gamma^{\star}(\tau))Z_{n}^{\prime}
\right]
\nonumber\\
&\qquad
+
\frac{1}{n^{2}}
E
\left[
B_{\theta,n}(\gamma^{\star}(\tau))
B_{\theta,n}(\gamma^{\star}(\tau))^{\prime}
\right]
+
o(n^{-2}).
\label{eq:post_selection_mse_matrix}
\end{align}
If $E[Z_{n}Z_{n}^{\prime}]=V_{\theta}+o(1)$, then the leading variance component is
$V_{\theta}/n$, which is invariant to $\gamma$.
\end{proposition}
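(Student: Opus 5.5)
The plan is to take expectations directly in the decomposition \eqref{eq:post_selection_expansion_with_remainder}. Write
\[
\hat\theta_{\hat\gamma(\tau)}-\theta_0 = n^{-1/2}Z_n + n^{-1}B_n + r_{\theta,n},
\qquad B_n := B_{\theta,n}(\gamma^{\star}(\tau)).
\]
This rewriting uses only $A_0\bar g(\theta_0)=n^{-1/2}Z_n$. Because $\gamma^{\star}(\tau)$ is a deterministic target by Assumption \ref{ass:gamma_identification}, $B_n$ is a fixed-$\gamma$ object. Its moments are therefore well defined, and no randomness from $\hat\gamma(\tau)$ remains inside the leading terms. That randomness has already been absorbed into $r_{\theta,n}$ through \eqref{eq:Btheta_gammahat_replace}.

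\textbf{Bias.} Apply linearity of expectation term by term.
\begin{itemize}
\item The first term vanishes because $E[Z_n]=\sqrt n A_0E[\bar g(\theta_0)]=0$.
\item The second term is $n^{-1}E[B_n]$. This requires $E\|B_n\|<\infty$, which I would add as part of the uniform integrability hypothesis; it is natural since $B_n=O_p(1)$.
\item The third term is $E[r_{\theta,n}]=o(n^{-1})$ by assumption.
\end{itemize}
This gives \eqref{eq:post_selection_bias}.

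\textbf{MSE matrix.} Expand the outer product of the three-term sum into nine blocks.
\begin{itemize}
\item The blocks not involving $r_{\theta,n}$ give exactly $n^{-1}E[Z_nZ_n']$, $n^{-3/2}E[Z_nB_n'+B_nZ_n']$ and $n^{-2}E[B_nB_n']$.
\item The pure-remainder block is $E[r_{\theta,n}r_{\theta,n}']=o(n^{-2})$ by hypothesis.
\item The cross blocks with $r_{\theta,n}$ are handled with the matrix Cauchy--Schwarz inequality $\|E[XY']\|\le (E\|X\|^2)^{1/2}(E\|Y\|^2)^{1/2}$.
\end{itemize}
For the cross blocks, $E\|n^{-1}B_n\|^2=O(n^{-2})$ provided $\sup_nE\|B_n\|^2<\infty$. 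Together with $E\|r_{\theta,n}\|^2=\operatorname{tr}E[r_{\theta,n}r_{\theta,n}']=o(n^{-2})$, this yields $n^{-1}E[B_nr_{\theta,n}']=O(n^{-1})\,o(n^{-1})=o(n^{-2})$.

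\textbf{Main obstacle.} The delicate block is $n^{-1/2}E[Z_nr_{\theta,n}']$. Cauchy--Schwarz only gives $O(n^{-1/2})\cdot o(n^{-1})=o(n^{-3/2})$, not $o(n^{-2})$. To close this gap I would first try to reinterpret the uniform integrability hypothesis so that it implies $E[Z_nr_{\theta,n}']=o(n^{-3/2})$. This is the standard requirement in Rothenberg-type/Nagar expansions (cf.\ \citet{NeweyandSmith2004}), where the remainder is $O_p(n^{-3/2})$ with bounded moments, so the cross term is $O(n^{-2})$ and is absorbed appropriately. If that reading is not available, I would state this requirement explicitly as part of the assumed uniform integrability. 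A fallback is to weaken the conclusion to an $o(n^{-3/2})$ error in \eqref{eq:post_selection_mse_matrix}, which still identifies $\gamma^{\star}(\tau)$'s role in the $n^{-3/2}$ cross term.

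\textbf{Leading variance.} If $E[Z_nZ_n']=V_\theta+o(1)$, the leading component is $V_\theta/n+o(n^{-1})$. This is invariant to $\gamma$ because $Z_n$ does not depend on $\gamma$. Indeed, with i.i.d.\ data, $E[Z_nZ_n']=A_0\Omega_0A_0'=V_\theta$ exactly.
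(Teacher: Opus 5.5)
Your route is the paper's route: rewrite the expansion as $n^{-1/2}Z_n+n^{-1}B_n+r_{\theta,n}$, take expectations for the bias, and expand the outer product for the MSE. The only difference is that the paper disposes of every block containing $r_{\theta,n}$ in one sentence (``the uniform integrability and mean-square negligibility assumptions on $r_{\theta,n}$ imply that the terms involving $r_{\theta,n}$ are $o(n^{-2})$''), whereas you check the blocks one at a time.

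Your Cauchy--Schwarz bookkeeping is correct. The block you single out is a real hole in the paper's sentence, not an artifact of a crude bound. Nothing in the listed hypotheses excludes $r_{\theta,n}=\varepsilon_n n^{-1}Z_n$ with $\varepsilon_n=1/\log n$. For this choice, $r_{\theta,n}=o_p(n^{-1})$, $nr_{\theta,n}\to 0$ in $L^2$ (so any uniform integrability requirement holds), $E[r_{\theta,n}]=0$, and $E[r_{\theta,n}r_{\theta,n}']=\varepsilon_n^2n^{-2}V_\theta=o(n^{-2})$. Yet $n^{-1/2}E[Z_nr_{\theta,n}'+r_{\theta,n}Z_n']=2\varepsilon_n n^{-3/2}V_\theta$, which is not $o(n^{-2})$. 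So your bias argument is complete as written, but the MSE display genuinely needs an extra hypothesis. You also correctly add $\sup_nE\|B_n\|^2<\infty$, which the MSE step needs as well.

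Be careful about which extra hypothesis closes the gap, because the Rothenberg/Nagar reading you invoke does not. If $r_{\theta,n}=n^{-3/2}C_n$ with $C_n$ having bounded moments, then the block $n^{-2}E[Z_nC_n'+C_nZ_n']$ is $O(n^{-2})$ but in general not $o(n^{-2})$. In Nagar-type MSE calculations such as Newey and Smith (2004), the covariance between the $n^{-1/2}$ and $n^{-3/2}$ terms is one of the genuine order-$n^{-2}$ contributions. It therefore cannot be absorbed into the $o(n^{-2})$ error; it would have to appear in the display. What actually delivers the stated formula is an explicit condition such as $E[Z_nr_{\theta,n}']=o(n^{-3/2})$, implied for instance by $E\|r_{\theta,n}\|^2=o(n^{-3})$.

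Your fallback with an $o(n^{-3/2})$ error is valid but weak, for two reasons. First, it swallows the $n^{-2}E[B_nB_n']$ term, which the paper interprets right after the proposition as the leading $\gamma$-dependent MSE component. Second, in this setting $B_n$ is built from products of two centered root-$n$ sample averages (e.g.\ the quadratic form in $\sqrt n\,\bar g(\theta_0)$ inside $B_{\lambda,n}$). Hence $E[Z_nB_n']$ is typically only $O(n^{-1/2})$, so the nominal $n^{-3/2}$ cross term is usually $O(n^{-2})$ too. The fallback would then retain essentially no information about $\gamma^{\star}(\tau)$.
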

\noindent \textbf{Proof.} See Appendix \ref{subsec:pf_prop-post-selection-bias-mse}.

Proposition \ref{prop:post_selection_bias_mse} shows that
$\gamma^{\star}(\tau)$ determines the leading higher-order bias through
$E[B_{\theta,n}(\gamma^{\star}(\tau))]/n$. It also shows that the leading variance
term is invariant to $\gamma$, while the $\gamma$-dependent contribution to mean
squared error appears only through higher-order terms. If the cross term in
\eqref{eq:post_selection_mse_matrix} is zero or negligible, then the leading
$\gamma$-dependent MSE component is
\begin{equation}
\frac{1}{n^{2}}
E
\left[
B_{\theta,n}(\gamma^{\star}(\tau))
B_{\theta,n}(\gamma^{\star}(\tau))^{\prime}
\right].
\end{equation}
This is the sense in which the selected curvature parameter governs finite-sample
risk without changing first-order efficiency.

\subsection{Coverage distortion}

The same second-order correction can affect coverage distortion. Consider the
scalar studentized statistic
\begin{equation}
T_{n}(\hat\gamma(\tau))
=
\frac{
\sqrt n
\left(
\hat\theta_{\hat\gamma(\tau)}-\theta_{0}
\right)
}{
\sqrt{\widehat V_{\theta}}
}.
\end{equation}
Using \eqref{eq:post_selection_rootn_expansion},
\begin{equation}
T_{n}(\hat\gamma(\tau))
=
Z_{n}
+
\frac{1}{\sqrt n}
\frac{
B_{\theta,n}(\gamma^{\star}(\tau))
}{
\sqrt{V_{\theta}}
}
+
o_{p}(n^{-1/2}),
\label{eq:studentized_second_order_expansion}
\end{equation}
where $Z_{n}$ denotes the first-order standardized component. The leading
component $Z_{n}$ determines first-order coverage, while the second term affects
higher-order finite-sample behavior.

\begin{assumption}[Edgeworth expansion of the post-selection statistic]
\label{ass:edgeworth_post_selection}
For fixed $\tau\in[0,1]$, the distribution of
$T_{n}(\hat\gamma(\tau))$ admits the expansion
\begin{equation}
P
\left(
T_{n}(\hat\gamma(\tau))\leq t
\right)
=
\Phi(t)
+
n^{-1/2}
p_{1}(t;\gamma^{\star}(\tau))
\phi(t)
+
o(n^{-1/2}),
\label{eq:edgeworth_generic}
\end{equation}
uniformly for $t$ in a neighborhood of
${-z_{1-\alpha/2},z_{1-\alpha/2}}$. Here $\Phi(\cdot)$ and $\phi(\cdot)$ denote
the standard normal distribution function and density, respectively. The function
$p_{1}(t;\gamma^{\star}(\tau))$ is the first Edgeworth correction term and
collects the higher-order distributional effects of the studentized statistic,
including the second-order correction associated with the selected curvature
target $\gamma^{\star}(\tau)$.
\end{assumption}

\begin{proposition}[Coverage-error representation]
\label{prop:coverage_error_representation}
Under Assumption \ref{ass:edgeworth_post_selection}, the coverage error of the
usual two-sided first-order confidence interval satisfies
\begin{align}
&P
\left(
|T_{n}(\hat\gamma(\tau))|\leq z_{1-\alpha/2}
\right)
-
(1-\alpha)
\nonumber\\
&\quad =
n^{-1/2}
\left[
p_{1}(z_{1-\alpha/2};\gamma^{\star}(\tau))
\phi(z_{1-\alpha/2})
-
p_{1}(-z_{1-\alpha/2};\gamma^{\star}(\tau))
\phi(-z_{1-\alpha/2})
\right]
+
o(n^{-1/2}),
\label{eq:coverage_error_generic}
\end{align}
where $z_{1-\alpha/2}$ is the $(1-\alpha/2)$ quantile of the standard normal
distribution.
\end{proposition}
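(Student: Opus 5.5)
The argument is a direct computation from the Edgeworth expansion in Assumption \ref{ass:edgeworth_post_selection}. Write $z:=z_{1-\alpha/2}$ and $F_n(t):=P(T_n(\hat\gamma(\tau))\le t)$. First, I will express the event of interest through the distribution function: $P(|T_n|\le z)=F_n(z)-P(T_n<-z)$. To replace $P(T_n<-z)$ by $F_n(-z)$, I need the probability of the single point to vanish at the right rate, $P(T_n=-z)=o(n^{-1/2})$. This follows from the uniform validity of \eqref{eq:edgeworth_generic} in a neighborhood of $-z$. For small $\eta>0$, the interval $(-z-\eta,-z]$ has probability at most
$$\Phi(-z)-\Phi(-z-\eta)+n^{-1/2}\bigl[p_1(-z)\phi(-z)-p_1(-z-\eta)\phi(-z-\eta)\bigr]+o(n^{-1/2}).$$
Letting $\eta=\eta_n\downarrow 0$ slowly, and using continuity of $p_1(\cdot;\gamma^\star(\tau))\phi(\cdot)$ near $-z$ (a standard property of Edgeworth polynomials, which I will state as implicit in the assumption), this bound is $o(n^{-1/2})$.

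Next, I will substitute \eqref{eq:edgeworth_generic} at $t=z$ and at $t=-z$. This gives
$$P(|T_n|\le z)=\Phi(z)-\Phi(-z)+n^{-1/2}\bigl[p_1(z;\gamma^\star(\tau))\phi(z)-p_1(-z;\gamma^\star(\tau))\phi(-z)\bigr]+o(n^{-1/2}).$$
Since $\Phi(z)-\Phi(-z)=1-\alpha$ by the definition of $z_{1-\alpha/2}$, subtracting $1-\alpha$ yields \eqref{eq:coverage_error_generic}. The two remainders are each $o(n^{-1/2})$ by uniformity at the two points, so their combination is as well.

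The only step requiring care is the passage from $P(T_n<-z)$ to $F_n(-z)$, that is, showing that no atom of mass of order $n^{-1/2}$ sits at $-z$. I would handle it with the uniform-neighborhood bracketing argument above. Everything else is bookkeeping.

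Finally, I would add a remark on the symmetric case. If $p_1(\cdot;\gamma^\star(\tau))$ is even, as is typical for studentized statistics, then the bracket cancels because $\phi$ is even, and the $n^{-1/2}$ term vanishes. In that case the dependence on $\gamma^\star(\tau)$ would appear only at order $n^{-1}$. The proposition as stated holds without assuming this symmetry.
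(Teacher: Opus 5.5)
Your argument is the paper's proof: with $F_n(t)=P(T_n(\hat\gamma(\tau))\le t)$ and $z=z_{1-\alpha/2}$, you substitute the expansion of Assumption~\ref{ass:edgeworth_post_selection} at $t=\pm z$ and use $\Phi(z)-\Phi(-z)=1-\alpha$.

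The paper simply writes $P(|T_n(\hat\gamma(\tau))|\le z)=F_n(z)-F_n(-z)$ and ignores a possible atom at $-z$. Your extra step therefore adds rigor the paper lacks. You are also right that this step needs (left-)continuity of $p_1(\cdot;\gamma^{\star}(\tau))\phi(\cdot)$ at $-z$, and the assumption does not literally supply it. An atom of mass $c\,n^{-1/2}$ at $-z$ is compatible with the assumption if $p_1\phi$ jumps by $c$ there, and in that case the displayed formula is off by $c\,n^{-1/2}$.

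There is one slip in that step: $\eta$ must shrink fast, not slowly. Since $\Phi(-z)-\Phi(-z-\eta_n)\asymp\eta_n$, this term is $o(n^{-1/2})$ only when $\eta_n=o(n^{-1/2})$. Because the remainder is uniform over the neighborhood, the cleanest fix is to let $\eta\downarrow 0$ for each fixed $n$. This gives $P(T_n(\hat\gamma(\tau))=-z)\le 2\sup_t\bigl|F_n(t)-\Phi(t)-n^{-1/2}p_1(t;\gamma^{\star}(\tau))\phi(t)\bigr|=o(n^{-1/2})$, with the supremum taken over that neighborhood.

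Your closing remark on even $p_1$ is correct and worth keeping: it shows that the $n^{-1/2}$ term in the proposition typically vanishes for two-sided intervals.
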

\noindent \textbf{Proof.} See Appendix \ref{subsec:pf_prop-coverage-error-representation}.

Proposition \ref{prop:coverage_error_representation} shows that
$\gamma^{\star}(\tau)$ can affect finite-sample coverage through the Edgeworth
correction term $p_{1}(\cdot;\gamma^{\star}(\tau))$. Thus, although
$\hat\gamma(\tau)$ does not alter the first-order distribution, its risk target
$\gamma^{\star}(\tau)$ determines the second-order correction to the studentized
statistic and can therefore affect coverage distortion.

\section{Monte Carlo simulations}

We conduct Monte Carlo simulations with 10,000 replications and sample size $n\in\{30,50,100\}$ to evaluate the finite-sample behavior of the proposed data-selected CRPD estimator. The simulations are designed to illustrate three implications of the theory. First, the CRPD power parameter $\gamma$ does not change first-order identification, but it can affect finite-sample performance through second-order terms. Second, the selected value $\hat\gamma(\tau)$ should be understood as a system-risk choice, rather than as a structural parameter. Third, the multiplier component of the selection criterion is informative because it captures the finite-sample cost of enforcing the moment restrictions.

\subsection{Simulation design}

We consider
\[
    Y_{i}
    =
    W_i^{\prime}\beta^{0}
    +
    u_{i}
    =
    \beta^{0}_{0}
    +
    \beta^{0}_{1}X_i
    +
    u_{i},
\]
where $Y_{i}$ is the outcome, $W_i=(1,X_i)^{\prime}$, and $u_{i}$ is the error term.
Let the true target parameter $\beta^{0}$ be
\[
    \beta_0=(\beta^{0}_{0},\beta^{0}_{1})^{\prime}=(1,2)^{\prime}.
\]

The CRPD estimator is based on the overidentified moment condition
\[
    E
    \left[
        Z_i
        \left(
            Y_i-W_i^{\prime}\beta_0
        \right)
    \right]
    =
    0,
\]
where $Z_i$ contains the regressors and additional nonlinear instruments. Specifically,
we set
\[
    Z_i
    =
    \left(
        1,\,
        X_i,\,
        \widetilde H_2(X_i),\,
        \widetilde H_3(X_i)
    \right)^{\prime},
\]
where
\[
    H_2(X_i)=X_i^2-1,
    \qquad
    H_3(X_i)=X_i^3-3X_i.
\]
The variables $\widetilde H_2(X_i)$ and $\widetilde H_3(X_i)$ denote centered and
scaled versions of $H_2(X_i)$ and $H_3(X_i)$ in the realized fixed sample. That is,
for $j\in\{2,3\}$,
\[
    \widetilde H_j(X_i)
    =
    \frac{
        H_j(X_i)
        -
        n^{-1}\sum_{\ell=1}^{n}H_j(X_{\ell})
    }{
        \left[
            (n-1)^{-1}
            \sum_{\ell=1}^{n}
            \left\{
                H_j(X_{\ell})
                -
                n^{-1}\sum_{m=1}^{n}H_j(X_m)
            \right\}^{2}
        \right]^{1/2}
    }.
\]
These Hermite-type polynomial instruments are orthogonal to lower-order terms under
a standard normal design and reduce collinearity relative to raw polynomial
instruments. Since $q=4$ and $p=2$, the moment system is overidentified. This
overidentification is important because, in just-identified settings, the CRPD power
parameter has little scope to affect the implied weights.

All error designs satisfy
\[
    E[u_{ir}\mid X_i]=0,
\]
so the instrumental moment conditions are valid. 

We consider five DGPs, as summarized in Table \ref{tab:mc_dgp_design}. The first two
are homoskedastic benchmark cases: standard normal errors and standardized
$t_{10}$ errors. The remaining three introduce stronger finite-sample features while
preserving conditional mean zero. 

Specifically, the five DGPs are as follows: first, under the normal DGP,
\[
    u_{ir}\sim N(0,1).
\]
Second, under the standardized $t_{10}$ DGP,
\[
    u_{ir}
    =
    \frac{e_{ir}}{\sqrt{10/(10-2)}},
    \qquad
    e_{ir}\sim t_{10},
\]
so that the variance is normalized to one. Third, under the contaminated-normal DGP,
\[
    u_{ir}
    =
    \frac{
        (1-\xi_{ir})e_{ir,0}
        +
        \xi_{ir}e_{ir,1}
    }{
        \sqrt{0.90\cdot 1^2+0.10\cdot 5^2}
    },
\]
where
\[
    \xi_{ir}\sim \mathrm{Bernoulli}(0.10),
    \qquad
    e_{ir,0}\sim N(0,1),
    \qquad
    e_{ir,1}\sim N(0,5^2).
\]
This design introduces occasional large shocks while keeping the unconditional
variance normalized.

Fourth, under the heteroskedastic $t_{10}$ DGP,
\[
    u_{ir}
    =
    \sigma(X_i)e_{ir},
    \qquad
    e_{ir}
    =
    \frac{\tilde e_{ir}}{\sqrt{10/(10-2)}},
    \qquad
    \tilde e_{ir}\sim t_{10},
\]
where
\[
    \sigma(X_i)
    =
    \frac{
        0.50+0.75|X_i|
    }{
        \left[
            n^{-1}\sum_{\ell=1}^{n}
            \left(
                0.50+0.75|X_{\ell}|
            \right)^2
        \right]^{1/2}
    }.
\]
This design makes the conditional variance larger for high-$|X_i|$ observations while
maintaining average variance equal to one.

Finally, under the $X$-dependent contamination DGP, high-leverage observations are
more likely to receive large shocks. Let
\[
    p_i
    =
    0.04
    +
    0.28\cdot
    1\left\{
        |X_i|
        \geq
        Q_{0.80}(|X|)
    \right\},
\]
where $Q_{0.80}(|X|)$ is the $80$th percentile of $\{|X_i|\}_{i=1}^{n}$. Then
\[
    u_{ir}
    =
    \frac{
        (1-\xi_{ir})e_{ir,0}
        +
        \xi_{ir}e_{ir,1}
    }{
        \left[
            n^{-1}\sum_{\ell=1}^{n}
            \left\{
                (1-p_{\ell})\cdot 1^2
                +
                p_{\ell}\cdot 6^2
            \right\}
        \right]^{1/2}
    },
\]
where
\[
    \xi_{ir}\mid X_i\sim \mathrm{Bernoulli}(p_i),
    \qquad
    e_{ir,0}\sim N(0,1),
    \qquad
    e_{ir,1}\sim N(0,6^2).
\]
This design preserves $E[u_{ir}\mid X_i]=0$ but creates stronger moment tension
because large shocks are more likely to occur at observations with high leverage.

Notably, the more difficult DGPs, especially the heteroskedastic and $X$-dependent
contamination designs, are included so that the selected curvature parameter
$\hat\gamma(\tau)$ can respond visibly to changes in tail behavior, heteroskedasticity,
and leverage-dependent shocks.

\small
\begin{longtable}{p{0.19\linewidth}p{0.43\linewidth}p{0.30\linewidth}}
\caption{Data-generating processes in the fixed-design regression simulation}
\label{tab:mc_dgp_design}\\
\toprule
DGP & Error design & Purpose \\
\midrule
\endfirsthead
\multicolumn{3}{c}{\tablename\ \thetable{} -- continued from previous page}\\
\toprule
DGP & Error design & Purpose \\
\midrule
\endhead
\midrule
\multicolumn{3}{r}{Continued on next page}\\
\endfoot

\bottomrule
\multicolumn{3}{p{0.95\linewidth}}{\footnotesize \textit{Notes:} All DGPs satisfy $E[u_{ir}\mid X_i]=0$, so the instrumental moment restrictions remain valid. The contaminated and heteroskedastic designs create stronger finite-sample tension in the moment system.}\\
\endlastfoot
Normal & $u_{ir}\sim N(0,1)$ & Homoskedastic benchmark \\
$t_{10}$ & $u_{ir}=e_{ir}/\sqrt{10/(10-2)},\ e_{ir}\sim t_{10}$ & Heavy-tailed benchmark with unit variance \\
Contaminated normal & $0.90N(0,1)+0.10N(0,5^2)$, scaled to unit variance & Independent tail contamination \\
Heteroskedastic $t_{10}$ & $u_{ir}=\sigma(X_i)e_{ir}$, $e_{ir}$ standardized $t_{10}$ & Variance increases with $|X_i|$ \\
$X$-dependent contamination & High-$|X_i|$ observations have larger contamination probability & Leverage-dependent tail shocks \\
\end{longtable}
\normalsize

Because the simulation focuses on higher-order finite-sample properties, a relatively
large number of Monte Carlo replications is required. Quantities such as bias,
coverage distortion, multiplier stability, and the distribution of the selected
curvature parameter are more sensitive to simulation noise than first-order averages.
We therefore use $10{,}000$ Monte Carlo replications and consider sample sizes
\[
    n\in\{30,50,100\}.
\]

The candidate set for the CRPD power parameter is
\[
    \Gamma=[-2,2],
\]
with grid spacing $0.1$. The special values $\gamma=-1$ and $\gamma=0$ are handled
numerically by small perturbations, since the implementation uses the general
power-divergence formula. We report results for
\[
    \tau\in\{1,0.5,0\}.
\]
The parameter $\tau$ controls how the finite-sample performance of the CRPD system
is evaluated. Larger values place more weight on the second-order behavior of the
structural estimator, while smaller values place more weight on the stability of
moment enforcement through the Lagrange multipliers and implied weights. The case
$\tau=1$ uses only the structural second-order risk component, $\tau=0$ uses only
the multiplier-stabilization component, and $\tau=0.5$ gives equal weight to the two.
Thus, reporting $\tau\in\{1,0.5,0\}$ allows us to examine how the selected curvature
parameter changes as the criterion moves from a purely structural-risk objective to a
purely multiplier-stability objective.

\subsection{Implementation of the system-risk selector}
\label{subsec:simulation_system_risk_selector}

We implement the proposed system-risk selector within each Monte Carlo replication. For each replication, we first compute a common first-order GMM-style benchmark. Let $\hat\beta_{\mathrm{GMM}}$ denote this benchmark. Let
\begin{equation}
\hat\lambda^{(1)}
=
\widehat\Omega^{-1}\bar g(\hat\beta_{\mathrm{GMM}})
\end{equation}
denote the common first-order multiplier benchmark, where $\widehat\Omega$ is the sample covariance matrix of the moments evaluated at the benchmark. This benchmark corresponds to the leading term in the multiplier expansion and is therefore independent of $\gamma$.

For each candidate value $\gamma\in\Gamma$, the CRPD estimator gives the structural estimate $\hat\beta_{\gamma}$ and the associated moment multiplier $\hat\lambda_{\gamma}$. We approximate the second-order corrections by comparing each CRPD solution with the common first-order benchmark:
\begin{equation}
\widehat B_{\theta,n}(\gamma)
=
n
\left\{
\hat\beta_{\gamma}
-
\hat\beta_{\mathrm{GMM}}
\right\},
\qquad
\widehat B_{\lambda,n}(\gamma)
=
n
\left\{
\hat\lambda_{\gamma}
-
\hat\lambda^{(1)}
\right\}.
\label{eq:feasible_B_definitions}
\end{equation}
The first object is a feasible plug-in proxy for the second-order structural correction. The second object is a feasible plug-in proxy for the second-order multiplier correction.

Using these quantities, we compute the raw sample analogues of the two system-risk components:
\begin{equation}
\widehat C_{\theta,\mathrm{raw}}(\gamma)
=
\widehat B_{\theta,n}(\gamma)^{\prime}
\widehat K
\widehat B_{\theta,n}(\gamma),
\label{eq:Ctheta_sample_raw}
\end{equation}
and
\begin{equation}
\widehat C_{\lambda,\mathrm{raw}}(\gamma)
=
\widehat B_{\lambda,n}(\gamma)^{\prime}
\widehat\Omega
\widehat B_{\lambda,n}(\gamma).
\label{eq:Clambda_sample_raw}
\end{equation}
The first component measures standardized second-order structural distortion. The second component measures the second-order cost of moment stabilization, expressed in the natural covariance scale of the moment vector.

Because the two components may have different numerical scales, we normalize them over the candidate set $\Gamma$ before combining them. For $j\in{\theta,\lambda}$, define
\begin{equation}
\widehat{\widetilde C}_{j}(\gamma)
=
\frac{
\widehat C_{j,\mathrm{raw}}(\gamma)
-
\min_{\tilde\gamma\in\Gamma}
\widehat C_{j,\mathrm{raw}}(\tilde\gamma)
}{
\max_{\tilde\gamma\in\Gamma}
\widehat C_{j,\mathrm{raw}}(\tilde\gamma)
-
\min_{\tilde\gamma\in\Gamma}
\widehat C_{j,\mathrm{raw}}(\tilde\gamma)
},
\qquad
j\in{\theta,\lambda}.
\label{eq:sample_minmax_normalization}
\end{equation}
If the denominator is zero, the corresponding component is constant over $\Gamma$ and is set to zero after normalization. This places the structural and multiplier components on a common $[0,1]$ scale.

The feasible system-risk criterion is then
\begin{equation}
\widehat C_{n}^{\mathrm{sys}}(\gamma;\tau)
=
\tau
\widehat{\widetilde C}_{\theta}(\gamma)
+
(1-\tau)
\widehat{\widetilde C}_{\lambda}(\gamma),
\qquad
\tau\in[0,1].
\label{eq:sample_system_risk}
\end{equation}
The selected CRPD power parameter is
\begin{equation}
\hat\gamma(\tau)
=
\arg\min_{\gamma\in\Gamma}
\widehat C_{n}^{\mathrm{sys}}(\gamma;\tau).
\label{eq:gamma_hat}
\end{equation}

This implementation differs from a conventional structural-loss selector. A structural-loss criterion evaluates $\gamma$ only through the behavior of $\hat\beta_{\gamma}$. In contrast, the proposed criterion evaluates the finite-sample behavior of the CRPD estimating system. The multiplier component is included because it measures how much adjustment is required to enforce the moment restrictions. A value of $\gamma$ may produce a reasonable structural estimate while requiring an unstable multiplier and highly distorted weights. The system-risk criterion therefore selects $\gamma$ by balancing standardized structural distortion and multiplier stability.

To monitor numerical stability and the implied reweighting mechanism, we also report multiplier and weight diagnostics. These include the multiplier norm $|\hat\lambda_{\hat\gamma}|$, the quadratic multiplier measure $\hat\lambda_{\hat\gamma}^{\prime}\widehat\Omega\hat\lambda_{\hat\gamma}$, and the effective sample size
\begin{equation}
ESS(\hat\gamma)
=
\frac{1}{
\sum_{i=1}^{n}
\hat\pi_{i,\hat\gamma}^{2}
}.
\end{equation}
We also report the normalized effective sample size $ESS(\hat\gamma)/n$ and boundary-hit rates based on whether the minimum estimated weight falls below a small absolute or relative threshold. These diagnostics are directly connected to the multiplier-stability motivation of the system-risk criterion.

\subsection{Simulation results}

The simulation results are summarized in Tables \ref{tab:mc_selected_summary},
\ref{tab:mc_convergence_by_n_tau},
\ref{tab:mc_gamma_distribution},
\ref{tab:mc_system_diagnostics},
\ref{tab:mc_raw_risk_weight_diagnostics},
\ref{tab:mc_tau05_comparison}, and
\ref{tab:mc_n100_dgp_sensitivity}. The parameter of interest for inference is the slope coefficient $\beta_1$. The tables are organized to show, respectively, the main finite-sample performance of the selected estimator, the average convergence pattern across sample sizes, the distribution of the selected curvature parameter, the multiplier-stability diagnostics, the raw risk and weight diagnostics, the behavior of the balanced selector relative to benchmark selectors, and the DGP sensitivity of the results at $n=100$.

Table \ref{tab:mc_selected_summary} reports the main Monte Carlo performance of the selected CRPD estimator for each DGP, sample size, and value of $\tau$. Across all designs, the bias of $\hat\beta_1$ is small relative to the RMSE. The largest reported biases are close to zero in magnitude, indicating that the selected curvature parameter does not change the structural target of estimation. This is consistent with the theory: $\gamma$ affects the higher-order behavior of the estimator--multiplier system, but it does not alter the first-order identification of $\beta_1$.

Table \ref{tab:mc_selected_summary} also shows that the finite-sample dispersion of the estimator declines as $n$ increases. Across DGPs and values of $\tau$, the RMSE of $\hat\beta_1$ ranges approximately from $0.19$ to $0.33$ when $n=30$, from $0.14$ to $0.22$ when $n=50$, and from $0.09$ to $0.15$ when $n=100$. The improvement is observed not only under the normal and $t_{10}$ designs, but also under the contaminated, heteroskedastic, and $X$-dependent contamination designs. Thus, the selected CRPD estimator becomes more stable as the sample size grows.

The coverage and SD/SE columns in Table \ref{tab:mc_selected_summary} show the expected finite-sample pattern. First-order confidence intervals under-cover substantially when $n=30$, with empirical coverage between roughly $0.60$ and $0.69$. Coverage improves when $n=50$, rising to approximately $0.80$--$0.84$, and improves further when $n=100$, reaching approximately $0.88$--$0.90$. The SD/SE ratio follows the same pattern: it is well above one in small samples and moves closer to one as $n$ increases. Hence, the first-order standard error approximation improves with the sample size, while the remaining under-coverage reflects the higher-order distortions emphasized by the theory.

Table \ref{tab:mc_convergence_by_n_tau} averages the results across the five DGPs and highlights the joint role of $n$ and $\tau$. The convergence pattern is clear. Average RMSE falls from about $0.25$--$0.26$ when $n=30$ to about $0.11$--$0.12$ when $n=100$. Average coverage rises from about $0.62$--$0.67$ when $n=30$ to about $0.89$ when $n=100$, and the SD/SE ratio falls toward one. These averages summarize the improvement in the first-order approximation as the sample size grows.

Table \ref{tab:mc_convergence_by_n_tau} also illustrates the effect of the system-risk weight $\tau$. When $\tau=1$, the selector uses only the structural second-order component and chooses smaller values of $\gamma$. Averaged across DGPs, $\bar{\hat\gamma}$ is $-0.849$ when $n=30$, $-0.090$ when $n=50$, and $0.346$ when $n=100$. When the multiplier-stability component is included, the selected values move toward more positive values. For $\tau=0.5$, the corresponding averages are $0.459$, $0.673$, and $0.811$. This shows that adding multiplier stability pulls the selected curvature parameter toward a region of the CRPD family associated with smoother moment enforcement.

Table \ref{tab:mc_gamma_distribution} reports the Monte Carlo distribution of the selected curvature parameter $\hat\gamma(\tau)$. The distributional evidence shows that $\hat\gamma(\tau)$ is genuinely data-dependent. When $\tau=1$, the selected values are more dispersed and more sensitive to the DGP. In small samples, the structural-only selector often selects negative values of $\gamma$, especially under the contaminated, heteroskedastic, and $X$-dependent contamination designs. For example, when $n=30$, the median selected value is $-1.4$ under independent contamination, $-1.5$ under heteroskedastic $t_{10}$ errors, and $-1.5$ under $X$-dependent contamination.

By contrast, Table \ref{tab:mc_gamma_distribution} shows that the selectors with $\tau=0.5$ and $\tau=0$ choose more positive values of $\gamma$ and exhibit less dispersion. The difference between $\tau=0.5$ and $\tau=0$ is modest but visible when $n=30$ or $n=50$: the balanced selector often chooses slightly smaller values of $\gamma$ than the purely multiplier-based selector. This indicates that the structural component still contributes to the selected curvature parameter when finite-sample distortions are relatively large. As $n$ increases to $100$, the selected values under $\tau=0.5$ and $\tau=0$ become much closer, while the structural-only selector with $\tau=1$ remains distinct.

Table \ref{tab:mc_system_diagnostics} reports normalized system-risk components and multiplier-stability diagnostics. The table shows that moving from $\tau=1$ to $\tau=0.5$ generally lowers the multiplier norm $\|\hat\lambda\|$, lowers the quadratic multiplier measure $\hat\lambda^{\prime}\widehat\Omega\hat\lambda$, and raises the normalized effective sample size $ESS/n$. These improvements are especially visible in the more difficult DGPs. For example, under $X$-dependent contamination with $n=30$, $\|\hat\lambda\|$ falls from $10.019$ under $\tau=1$ to $3.277$ under $\tau=0.5$, while $ESS/n$ rises from $0.759$ to $0.900$.

The implication of Table \ref{tab:mc_system_diagnostics} is that the structural-only criterion can select curvature values that perform well according to structural second-order risk but require stronger moment-enforcement pressure. Adding the multiplier component reduces this pressure and improves the stability of the implied weighting system. The table also clarifies why $\tau=0.5$ is useful: it does not simply minimize the multiplier component, but it substantially improves multiplier stability relative to $\tau=1$ while retaining the structural component of the criterion.

Table \ref{tab:mc_raw_risk_weight_diagnostics} reports the unnormalized risk components and selected-weight diagnostics. This table shows the scale of the multiplier-stability gains before normalization. In the more difficult designs, $C_{\lambda,\mathrm{raw}}$ can be very large under the structural-only selector. For example, under $X$-dependent contamination with $n=30$, $C_{\lambda,\mathrm{raw}}$ is $2.58\times 10^{6}$ under $\tau=1$, but falls to $3.06\times 10^{5}$ under $\tau=0.5$ and to $3.45\times 10^{4}$ under $\tau=0$. At the same time, the maximum implied weight falls and the effective sample size rises.

Table \ref{tab:mc_raw_risk_weight_diagnostics} therefore clarifies the role of the multiplier-stability component. The purpose of $C_{\lambda}$ is to reduce overall instability in moment enforcement, including large multiplier adjustments and upper-tail concentration of implied weights. The decline in $C_{\lambda,\mathrm{raw}}$, the reduction in $\max_i\hat\pi_i$, and the increase in $ESS$ show that $\tau<1$ produces a more stable weighting system, especially in small samples and under high-stress DGPs. The table also shows the tradeoff: the raw structural component $C_{\theta,\mathrm{raw}}$ can increase when the selector places weight on multiplier stability.

Table \ref{tab:mc_tau05_comparison} compares the balanced selector $\tau=0.5$ with the two benchmark selectors, $\tau=1$ and $\tau=0$. Relative to the structural-only selector, the balanced selector usually selects a larger value of $\gamma$, lowers the multiplier norm, and often delivers similar or better RMSE and coverage. The reductions in $\|\hat\lambda\|$ relative to $\tau=1$ are especially large when $n=30$, where moment enforcement is most fragile. For example, under $X$-dependent contamination with $n=30$, the balanced selector lowers the multiplier norm by $6.742$ relative to the structural-only selector.

Relative to the multiplier-only selector, Table \ref{tab:mc_tau05_comparison} shows that the balanced selector is much closer, especially when $n=100$. However, the two selectors are not identical in smaller samples. The differences in $\Delta\gamma_{.5-0}$ are modest but visible for $n=30$ and $n=50$, and the balanced selector often delivers higher coverage than the multiplier-only selector. This supports the interpretation of $\tau=0.5$ as a compromise: it retains structural second-order information while substantially improving multiplier stability relative to $\tau=1$.

Finally, Table \ref{tab:mc_n100_dgp_sensitivity} isolates the larger-sample case. Even when $n=100$, the structural-only selector remains sensitive to the DGP. Under $\tau=1$, the mean selected value of $\hat\gamma$ is approximately $0.49$ under normal errors and $0.46$ under $t_{10}$ errors, but falls to $0.28$ under independent contamination and to $0.13$ under $X$-dependent contamination. Thus, the more difficult DGPs continue to push the structural-risk selector toward smaller values of $\gamma$.

Table \ref{tab:mc_n100_dgp_sensitivity} also shows that the selected values under $\tau=0.5$ and $\tau=0$ are much closer at $n=100$ than in smaller samples. For these two selectors, $\bar{\hat\gamma}$ is generally between $0.74$ and $0.86$, and $ESS/n$ remains close to one across DGPs. This does not imply that the structural and multiplier objectives are fully aligned, because the structural-only selector remains clearly different. Rather, it indicates that once the multiplier-stability component is included, even with equal weight, the selected curvature parameter is pulled toward the multiplier-stable region of the CRPD family.

Overall, the Monte Carlo evidence is consistent with the theoretical results. The CRPD power parameter does not change first-order identification, and the selected estimator remains centered around the true structural parameter. However, the choice of $\gamma$ affects finite-sample performance through second-order structural distortion and multiplier stability. The selected curvature parameter responds to the finite-sample geometry of the moment system: it is more sensitive to the DGP when the structural component dominates, and it moves toward more stable positive values when the multiplier component receives greater weight. These results support the proposed interpretation of $\gamma$ as a data-selected curvature parameter governing the higher-order behavior of the CRPD estimator--multiplier system.

\footnotesize
\begin{longtable}{lrrrrrrrrr}
\caption{Monte Carlo performance of the selected CRPD estimator}
\label{tab:mc_selected_summary}\\
\toprule
DGP & $n$ & $\tau$ & $\bar{\hat\gamma}$ & Med.$(\hat\gamma)$ & Bias & RMSE & Coverage & Cov. dist. & SD/SE \\
\midrule
\endfirsthead
\multicolumn{10}{c}{\tablename\ \thetable{} -- continued from previous page}\\
\toprule
DGP & $n$ & $\tau$ & $\bar{\hat\gamma}$ & Med.$(\hat\gamma)$ & Bias & RMSE & Coverage & Cov. dist. & SD/SE \\
\midrule
\endhead
\midrule
\multicolumn{10}{r}{Continued on next page}\\
\endfoot

\bottomrule
\endlastfoot
Normal & 30 & 1.0 & -0.623 & -1.000 & -0.0025 & 0.264 & 0.649 & -0.301 & 1.970 \\
Normal & 30 & 0.5 & 0.518 & 0.800 & -0.0013 & 0.251 & 0.661 & -0.289 & 1.922 \\
Normal & 30 & 0.0 & 0.527 & 1.000 & -0.0016 & 0.243 & 0.600 & -0.350 & 2.136 \\
$t_{10}$ & 30 & 1.0 & -0.675 & -1.000 & 0.0027 & 0.256 & 0.653 & -0.297 & 1.974 \\
$t_{10}$ & 30 & 0.5 & 0.502 & 0.700 & 0.0030 & 0.245 & 0.662 & -0.288 & 1.940 \\
$t_{10}$ & 30 & 0.0 & 0.527 & 0.900 & 0.0041 & 0.238 & 0.603 & -0.347 & 2.154 \\
Contam. & 30 & 1.0 & -0.984 & -1.400 & -0.0038 & 0.224 & 0.655 & -0.295 & 2.128 \\
Contam. & 30 & 0.5 & 0.425 & 0.400 & -0.0034 & 0.208 & 0.670 & -0.280 & 2.063 \\
Contam. & 30 & 0.0 & 0.474 & 0.700 & -0.0018 & 0.191 & 0.634 & -0.316 & 2.189 \\
Hetero. $t_{10}$ & 30 & 1.0 & -0.910 & -1.500 & 0.0051 & 0.312 & 0.687 & -0.263 & 1.764 \\
Hetero. $t_{10}$ & 30 & 0.5 & 0.426 & 0.500 & 0.0054 & 0.308 & 0.690 & -0.260 & 1.754 \\
Hetero. $t_{10}$ & 30 & 0.0 & 0.522 & 0.700 & 0.0051 & 0.330 & 0.633 & -0.317 & 1.967 \\
$X$-contam. & 30 & 1.0 & -1.054 & -1.500 & 0.0032 & 0.259 & 0.675 & -0.275 & 2.091 \\
$X$-contam. & 30 & 0.5 & 0.422 & 0.400 & 0.0025 & 0.251 & 0.682 & -0.268 & 2.078 \\
$X$-contam. & 30 & 0.0 & 0.509 & 0.700 & 0.0016 & 0.250 & 0.642 & -0.308 & 2.201 \\
\hline
Normal & 50 & 1.0 & 0.150 & 1.000 & 0.0005 & 0.173 & 0.808 & -0.142 & 1.461 \\
Normal & 50 & 0.5 & 0.727 & 1.000 & -0.0005 & 0.167 & 0.815 & -0.135 & 1.428 \\
Normal & 50 & 0.0 & 0.725 & 1.000 & -0.0018 & 0.167 & 0.802 & -0.147 & 1.455 \\
$t_{10}$ & 50 & 1.0 & 0.126 & 1.000 & -0.0008 & 0.171 & 0.799 & -0.151 & 1.488 \\
$t_{10}$ & 50 & 0.5 & 0.732 & 1.000 & -0.0018 & 0.164 & 0.809 & -0.141 & 1.449 \\
$t_{10}$ & 50 & 0.0 & 0.735 & 1.000 & -0.0016 & 0.165 & 0.798 & -0.151 & 1.478 \\
Contam. & 50 & 1.0 & -0.249 & 0.300 & 0.0018 & 0.147 & 0.798 & -0.151 & 1.608 \\
Contam. & 50 & 0.5 & 0.673 & 1.000 & 0.0021 & 0.137 & 0.814 & -0.137 & 1.525 \\
Contam. & 50 & 0.0 & 0.727 & 1.000 & 0.0023 & 0.136 & 0.803 & -0.147 & 1.542 \\
Hetero. $t_{10}$ & 50 & 1.0 & -0.095 & 0.500 & 0.0019 & 0.205 & 0.840 & -0.110 & 1.334 \\
Hetero. $t_{10}$ & 50 & 0.5 & 0.624 & 0.900 & 0.0008 & 0.206 & 0.835 & -0.115 & 1.345 \\
Hetero. $t_{10}$ & 50 & 0.0 & 0.648 & 1.000 & 0.0006 & 0.220 & 0.806 & -0.144 & 1.452 \\
$X$-contam. & 50 & 1.0 & -0.383 & -0.400 & 0.0013 & 0.185 & 0.825 & -0.126 & 1.529 \\
$X$-contam. & 50 & 0.5 & 0.609 & 1.000 & 0.0016 & 0.178 & 0.833 & -0.117 & 1.485 \\
$X$-contam. & 50 & 0.0 & 0.667 & 1.000 & 0.0030 & 0.188 & 0.805 & -0.145 & 1.577 \\
\hline
Normal & 100 & 1.0 & 0.486 & 1.000 & 0.0012 & 0.111 & 0.886 & -0.064 & 1.212 \\
Normal & 100 & 0.5 & 0.858 & 1.000 & 0.0014 & 0.109 & 0.888 & -0.062 & 1.202 \\
Normal & 100 & 0.0 & 0.852 & 1.000 & 0.0017 & 0.109 & 0.887 & -0.063 & 1.205 \\
$t_{10}$ & 100 & 1.0 & 0.461 & 1.000 & -0.0021 & 0.108 & 0.890 & -0.060 & 1.205 \\
$t_{10}$ & 100 & 0.5 & 0.862 & 1.000 & -0.0022 & 0.106 & 0.893 & -0.057 & 1.193 \\
$t_{10}$ & 100 & 0.0 & 0.845 & 1.000 & -0.0018 & 0.108 & 0.891 & -0.059 & 1.207 \\
Contam. & 100 & 1.0 & 0.278 & 1.000 & 0.0007 & 0.093 & 0.886 & -0.064 & 1.247 \\
Contam. & 100 & 0.5 & 0.828 & 1.000 & 0.0009 & 0.091 & 0.889 & -0.061 & 1.222 \\
Contam. & 100 & 0.0 & 0.829 & 1.000 & 0.0008 & 0.093 & 0.884 & -0.066 & 1.250 \\
Hetero. $t_{10}$ & 100 & 1.0 & 0.375 & 1.000 & 0.0016 & 0.137 & 0.899 & -0.051 & 1.156 \\
Hetero. $t_{10}$ & 100 & 0.5 & 0.766 & 1.000 & 0.0014 & 0.138 & 0.898 & -0.052 & 1.164 \\
Hetero. $t_{10}$ & 100 & 0.0 & 0.740 & 1.000 & 0.0012 & 0.145 & 0.883 & -0.067 & 1.220 \\
$X$-contam. & 100 & 1.0 & 0.130 & 1.000 & -0.0001 & 0.124 & 0.901 & -0.049 & 1.218 \\
$X$-contam. & 100 & 0.5 & 0.740 & 1.000 & -0.0003 & 0.124 & 0.901 & -0.049 & 1.217 \\
$X$-contam. & 100 & 0.0 & 0.736 & 1.000 & -0.0012 & 0.131 & 0.883 & -0.067 & 1.287 \\
\end{longtable}
\vspace{-2.5em}
\begin{center}
\begin{minipage}{0.85\linewidth}
\footnotesize Note: Results are based on 10,000 Monte Carlo replications. The target parameter is the slope coefficient $\beta_1$. Coverage is for nominal 95\% first-order confidence intervals. SD/SE is the ratio of the empirical standard deviation of $\hat\beta_1$ to the average estimated standard error.
\end{minipage}
\end{center}
\vspace{0.5em}
\normalsize

\small
\begin{longtable}{rrrrrrrrr}
\caption{Average Monte Carlo performance by sample size and system-risk weight}
\label{tab:mc_convergence_by_n_tau}\\
\toprule
$n$ & $\tau$ & $\bar{\hat\gamma}$ & RMSE & Coverage & Cov. dist. & SD/SE & $\|\hat\lambda\|$ & $ESS/n$ \\
\midrule
\endfirsthead

\multicolumn{9}{c}{\tablename\ \thetable{} -- continued from previous page}\\
\toprule
$n$ & $\tau$ & $\bar{\hat\gamma}$ & RMSE & Coverage & Cov. dist. & SD/SE & $\|\hat\lambda\|$ & $ESS/n$ \\
\midrule
\endhead

\midrule
\multicolumn{9}{r}{Continued on next page}\\
\endfoot

\bottomrule
\endlastfoot

30 & 1.0 & -0.849 & 0.263 & 0.664 & -0.286 & 1.986 & 6.988 & 0.789 \\
30 & 0.5 & 0.459 & 0.253 & 0.673 & -0.277 & 1.951 & 2.474 & 0.906 \\
30 & 0.0 & 0.512 & 0.250 & 0.623 & -0.327 & 2.129 & 1.350 & 0.908 \\
\midrule
50 & 1.0 & -0.090 & 0.176 & 0.814 & -0.136 & 1.484 & 0.848 & 0.906 \\
50 & 0.5 & 0.673 & 0.170 & 0.821 & -0.129 & 1.446 & 0.441 & 0.951 \\
50 & 0.0 & 0.700 & 0.175 & 0.803 & -0.147 & 1.501 & 0.360 & 0.948 \\
\midrule
100 & 1.0 & 0.346 & 0.114 & 0.893 & -0.057 & 1.207 & 0.237 & 0.965 \\
100 & 0.5 & 0.811 & 0.114 & 0.894 & -0.056 & 1.199 & 0.185 & 0.977 \\
100 & 0.0 & 0.800 & 0.117 & 0.886 & -0.064 & 1.234 & 0.179 & 0.975 \\
\end{longtable}
\vspace{-2em}
\begin{center}
\begin{minipage}{0.85\linewidth}
\footnotesize Note: Each entry averages the corresponding statistic across the five DGPs. This table summarizes the convergence pattern as $n$ increases and the tradeoff as $\tau$ changes.
\end{minipage}
\end{center}
\vspace{0.5em}
\normalsize

\footnotesize
\begin{longtable}{lrrrrrrr}
\caption{Distribution of the selected CRPD curvature parameter}
\label{tab:mc_gamma_distribution}\\
\toprule
DGP & $n$ & $\tau$ & Mean & SD & Q1 & Median & Q3 \\
\midrule
\endfirsthead
\multicolumn{8}{c}{\tablename\ \thetable{} -- continued from previous page}\\
\toprule
DGP & $n$ & $\tau$ & Mean & SD & Q1 & Median & Q3 \\
\midrule
\endhead
\midrule
\multicolumn{8}{r}{Continued on next page}\\
\endfoot

\bottomrule
\endlastfoot
Normal & 30 & 1.0 & -0.623 & 1.215 & -1.800 & -1.000 & 1.000 \\
Normal & 30 & 0.5 & 0.518 & 0.604 & 0.000 & 0.800 & 1.000 \\
Normal & 30 & 0.0 & 0.527 & 0.716 & 0.000 & 1.000 & 1.000 \\
$t_{10}$ & 30 & 1.0 & -0.675 & 1.202 & -1.800 & -1.000 & 1.000 \\
$t_{10}$ & 30 & 0.5 & 0.502 & 0.616 & 0.000 & 0.700 & 1.000 \\
$t_{10}$ & 30 & 0.0 & 0.527 & 0.703 & 0.000 & 0.900 & 1.000 \\
Contam. & 30 & 1.0 & -0.984 & 1.062 & -1.900 & -1.400 & -0.100 \\
Contam. & 30 & 0.5 & 0.425 & 0.635 & 0.000 & 0.400 & 1.000 \\
Contam. & 30 & 0.0 & 0.474 & 0.669 & 0.000 & 0.700 & 1.000 \\
Hetero. $t_{10}$ & 30 & 1.0 & -0.910 & 1.183 & -1.900 & -1.500 & 0.300 \\
Hetero. $t_{10}$ & 30 & 0.5 & 0.426 & 0.648 & 0.000 & 0.500 & 1.000 \\
Hetero. $t_{10}$ & 30 & 0.0 & 0.522 & 0.592 & 0.000 & 0.700 & 1.000 \\
$X$-contam. & 30 & 1.0 & -1.054 & 1.071 & -1.900 & -1.500 & -0.600 \\
$X$-contam. & 30 & 0.5 & 0.422 & 0.630 & 0.000 & 0.400 & 1.000 \\
$X$-contam. & 30 & 0.0 & 0.509 & 0.610 & 0.000 & 0.700 & 1.000 \\
\hline
Normal & 50 & 1.0 & 0.150 & 1.151 & -1.000 & 1.000 & 1.000 \\
Normal & 50 & 0.5 & 0.727 & 0.528 & 0.400 & 1.000 & 1.000 \\
Normal & 50 & 0.0 & 0.725 & 0.655 & 0.500 & 1.000 & 1.100 \\
$t_{10}$ & 50 & 1.0 & 0.126 & 1.172 & -1.000 & 1.000 & 1.000 \\
$t_{10}$ & 50 & 0.5 & 0.732 & 0.521 & 0.400 & 1.000 & 1.000 \\
$t_{10}$ & 50 & 0.0 & 0.735 & 0.652 & 0.500 & 1.000 & 1.100 \\
Contam. & 50 & 1.0 & -0.249 & 1.277 & -1.600 & 0.300 & 1.000 \\
Contam. & 50 & 0.5 & 0.673 & 0.550 & 0.300 & 1.000 & 1.000 \\
Contam. & 50 & 0.0 & 0.727 & 0.569 & 0.400 & 1.000 & 1.100 \\
Hetero. $t_{10}$ & 50 & 1.0 & -0.095 & 1.230 & -1.500 & 0.500 & 1.000 \\
Hetero. $t_{10}$ & 50 & 0.5 & 0.624 & 0.502 & 0.200 & 0.900 & 1.000 \\
Hetero. $t_{10}$ & 50 & 0.0 & 0.648 & 0.566 & 0.300 & 1.000 & 1.000 \\
$X$-contam. & 50 & 1.0 & -0.383 & 1.276 & -1.700 & -0.400 & 1.000 \\
$X$-contam. & 50 & 0.5 & 0.609 & 0.535 & 0.200 & 1.000 & 1.000 \\
$X$-contam. & 50 & 0.0 & 0.667 & 0.550 & 0.300 & 1.000 & 1.000 \\
\hline
Normal & 100 & 1.0 & 0.486 & 0.924 & 0.500 & 1.000 & 1.000 \\
Normal & 100 & 0.5 & 0.858 & 0.439 & 0.900 & 1.000 & 1.000 \\
Normal & 100 & 0.0 & 0.852 & 0.565 & 0.800 & 1.000 & 1.100 \\
$t_{10}$ & 100 & 1.0 & 0.461 & 0.948 & 0.400 & 1.000 & 1.000 \\
$t_{10}$ & 100 & 0.5 & 0.862 & 0.420 & 0.900 & 1.000 & 1.000 \\
$t_{10}$ & 100 & 0.0 & 0.845 & 0.578 & 0.900 & 1.000 & 1.100 \\
Contam. & 100 & 1.0 & 0.278 & 1.117 & -1.000 & 1.000 & 1.000 \\
Contam. & 100 & 0.5 & 0.828 & 0.434 & 0.800 & 1.000 & 1.000 \\
Contam. & 100 & 0.0 & 0.829 & 0.538 & 0.800 & 1.000 & 1.100 \\
Hetero. $t_{10}$ & 100 & 1.0 & 0.375 & 0.956 & 0.200 & 1.000 & 1.000 \\
Hetero. $t_{10}$ & 100 & 0.5 & 0.766 & 0.451 & 0.600 & 1.000 & 1.000 \\
Hetero. $t_{10}$ & 100 & 0.0 & 0.740 & 0.583 & 0.600 & 1.000 & 1.000 \\
$X$-contam. & 100 & 1.0 & 0.130 & 1.169 & -1.100 & 1.000 & 1.000 \\
$X$-contam. & 100 & 0.5 & 0.740 & 0.449 & 0.500 & 1.000 & 1.000 \\
$X$-contam. & 100 & 0.0 & 0.736 & 0.553 & 0.600 & 1.000 & 1.000 \\
\end{longtable}
\vspace{-2em}
\begin{center}
\begin{minipage}{0.6\linewidth}
\footnotesize Note: The table summarizes the Monte Carlo distribution of the selected curvature parameter $\hat\gamma(\tau)$. The candidate set is $\Gamma=[-2,2]$ with grid spacing $0.1$.
\end{minipage}
\end{center}
\vspace{0.5em}
\normalsize

\footnotesize
\begin{longtable}{lrrrrrrrr}
\caption{System-risk, multiplier, and weight-stability diagnostics}
\label{tab:mc_system_diagnostics}\\
\toprule
DGP & $n$ & $\tau$ & $\widehat C_{\theta}$ & $\widehat C_{\lambda}$ & $\widehat C^{\mathrm{sys}}$ & $\|\hat\lambda\|$ & $\hat\lambda^{\prime}\widehat\Omega\hat\lambda$ & $ESS/n$ \\
\midrule
\endfirsthead

\multicolumn{9}{c}{\tablename\ \thetable{} -- continued from previous page}\\
\toprule
DGP & $n$ & $\tau$ & $\widehat C_{\theta}$ & $\widehat C_{\lambda}$ & $\widehat C^{\mathrm{sys}}$ & $\|\hat\lambda\|$ & $\hat\lambda^{\prime}\widehat\Omega\hat\lambda$ & $ESS/n$ \\
\midrule
\endhead

\midrule
\multicolumn{9}{r}{Continued on next page}\\
\endfoot

\bottomrule
\endlastfoot

Normal & 30 & 1.0 & 0.000 & 0.519 & 0.000 & 4.546 & 480.914 & 0.822 \\
Normal & 30 & 0.5 & 0.038 & 0.046 & 0.042 & 1.745 & 71.801 & 0.914 \\
Normal & 30 & 0.0 & 0.309 & 0.000 & 0.000 & 0.992 & 10.102 & 0.915 \\
$t_{10}$ & 30 & 1.0 & 0.000 & 0.539 & 0.000 & 4.939 & 522.315 & 0.818 \\
$t_{10}$ & 30 & 0.5 & 0.037 & 0.047 & 0.042 & 1.865 & 105.250 & 0.913 \\
$t_{10}$ & 30 & 0.0 & 0.310 & 0.000 & 0.000 & 1.045 & 18.154 & 0.915 \\
Contam. & 30 & 1.0 & 0.000 & 0.638 & 0.000 & 9.705 & 2924.035 & 0.784 \\
Contam. & 30 & 0.5 & 0.043 & 0.053 & 0.048 & 3.544 & 1016.443 & 0.900 \\
Contam. & 30 & 0.0 & 0.341 & 0.000 & 0.000 & 1.828 & 35.943 & 0.907 \\
Hetero. $t_{10}$ & 30 & 1.0 & 0.000 & 0.607 & 0.000 & 5.729 & 1596.393 & 0.763 \\
Hetero. $t_{10}$ & 30 & 0.5 & 0.031 & 0.037 & 0.034 & 1.940 & 257.593 & 0.904 \\
Hetero. $t_{10}$ & 30 & 0.0 & 0.263 & 0.000 & 0.000 & 1.123 & 29.782 & 0.905 \\
$X$-contam. & 30 & 1.0 & 0.000 & 0.662 & 0.000 & 10.019 & 2872.326 & 0.759 \\
$X$-contam. & 30 & 0.5 & 0.032 & 0.043 & 0.038 & 3.277 & 342.657 & 0.900 \\
$X$-contam. & 30 & 0.0 & 0.308 & 0.000 & 0.000 & 1.763 & 39.063 & 0.899 \\
\midrule
Normal & 50 & 1.0 & 0.000 & 0.278 & 0.000 & 0.600 & 2.444 & 0.923 \\
Normal & 50 & 0.5 & 0.038 & 0.031 & 0.034 & 0.364 & 0.489 & 0.954 \\
Normal & 50 & 0.0 & 0.170 & 0.000 & 0.000 & 0.303 & 0.218 & 0.951 \\
$t_{10}$ & 50 & 1.0 & 0.000 & 0.289 & 0.000 & 0.624 & 2.610 & 0.922 \\
$t_{10}$ & 50 & 0.5 & 0.037 & 0.032 & 0.035 & 0.372 & 0.503 & 0.954 \\
$t_{10}$ & 50 & 0.0 & 0.174 & 0.000 & 0.000 & 0.307 & 0.223 & 0.951 \\
Contam. & 50 & 1.0 & 0.000 & 0.409 & 0.000 & 1.081 & 10.622 & 0.903 \\
Contam. & 50 & 0.5 & 0.043 & 0.036 & 0.040 & 0.560 & 1.537 & 0.950 \\
Contam. & 50 & 0.0 & 0.191 & 0.000 & 0.000 & 0.425 & 0.525 & 0.950 \\
Hetero. $t_{10}$ & 50 & 1.0 & 0.000 & 0.351 & 0.000 & 0.718 & 8.016 & 0.896 \\
Hetero. $t_{10}$ & 50 & 0.5 & 0.039 & 0.024 & 0.031 & 0.360 & 1.441 & 0.950 \\
Hetero. $t_{10}$ & 50 & 0.0 & 0.164 & 0.000 & 0.000 & 0.315 & 0.651 & 0.945 \\
$X$-contam. & 50 & 1.0 & 0.000 & 0.459 & 0.000 & 1.217 & 26.448 & 0.884 \\
$X$-contam. & 50 & 0.5 & 0.042 & 0.031 & 0.036 & 0.546 & 3.212 & 0.949 \\
$X$-contam. & 50 & 0.0 & 0.209 & 0.000 & 0.000 & 0.447 & 1.016 & 0.942 \\
\midrule
Normal & 100 & 1.0 & 0.000 & 0.157 & 0.000 & 0.196 & 0.090 & 0.972 \\
Normal & 100 & 0.5 & 0.018 & 0.016 & 0.017 & 0.174 & 0.048 & 0.977 \\
Normal & 100 & 0.0 & 0.076 & 0.000 & 0.000 & 0.168 & 0.040 & 0.976 \\
$t_{10}$ & 100 & 1.0 & 0.000 & 0.169 & 0.000 & 0.206 & 0.109 & 0.971 \\
$t_{10}$ & 100 & 0.5 & 0.019 & 0.016 & 0.017 & 0.178 & 0.053 & 0.978 \\
$t_{10}$ & 100 & 0.0 & 0.082 & 0.000 & 0.000 & 0.171 & 0.041 & 0.976 \\
Contam. & 100 & 1.0 & 0.000 & 0.231 & 0.000 & 0.292 & 0.360 & 0.963 \\
Contam. & 100 & 0.5 & 0.025 & 0.017 & 0.021 & 0.224 & 0.094 & 0.976 \\
Contam. & 100 & 0.0 & 0.096 & 0.000 & 0.000 & 0.213 & 0.056 & 0.975 \\
Hetero. $t_{10}$ & 100 & 1.0 & 0.000 & 0.173 & 0.000 & 0.183 & 0.289 & 0.965 \\
Hetero. $t_{10}$ & 100 & 0.5 & 0.017 & 0.015 & 0.016 & 0.147 & 0.102 & 0.976 \\
Hetero. $t_{10}$ & 100 & 0.0 & 0.082 & 0.000 & 0.000 & 0.143 & 0.080 & 0.974 \\
$X$-contam. & 100 & 1.0 & 0.000 & 0.285 & 0.000 & 0.309 & 1.080 & 0.953 \\
$X$-contam. & 100 & 0.5 & 0.029 & 0.020 & 0.025 & 0.203 & 0.185 & 0.976 \\
$X$-contam. & 100 & 0.0 & 0.134 & 0.000 & 0.000 & 0.200 & 0.102 & 0.972 \\
\end{longtable}
\vspace{-2.5em}
\begin{center}
\begin{minipage}{0.7\linewidth}
\footnotesize Note: The reported $\widehat C_{\theta}$, $\widehat C_{\lambda}$, and $\widehat C^{\mathrm{sys}}$ are normalized components evaluated at the selected $\hat\gamma(\tau)$. $ESS/n$ is the effective sample size divided by the nominal sample size.
\end{minipage}
\end{center}
\vspace{0.5em}
\normalsize

\scriptsize
\begingroup
\footnotesize
\setlength{\tabcolsep}{3.5pt}
\renewcommand{\arraystretch}{1.05}

\begin{longtable}{lrrrrrrrr}
\caption{Raw risk components and selected-weight diagnostics}
\label{tab:mc_raw_risk_weight_diagnostics}\\
\toprule
DGP & $n$ & $\tau$ & $C_{\theta,\mathrm{raw}}$ & $C_{\lambda,\mathrm{raw}}$ & $\min_i\hat\pi_i$ & $\mathrm{Med.}(\hat\pi_i)$ & $\max_i\hat\pi_i$ & $ESS$ \\
\midrule
\endfirsthead

\multicolumn{9}{c}{\tablename\ \thetable{} -- continued from previous page}\\
\toprule
DGP & $n$ & $\tau$ & $C_{\theta,\mathrm{raw}}$ & $C_{\lambda,\mathrm{raw}}$ & $\min_i\hat\pi_i$ & $\mathrm{Med.}(\hat\pi_i)$ & $\max_i\hat\pi_i$ & $ESS$ \\
\midrule
\endhead

\midrule
\multicolumn{9}{r}{Continued on next page}\\
\endfoot

\bottomrule
\endlastfoot

Normal & 30 & 1.0 & $3.09\times 10^{0}$ & $4.30\times 10^{5}$ & $7.42\times 10^{-3}$ & $3.11\times 10^{-2}$ & $8.29\times 10^{-2}$ & 24.67 \\
Normal & 30 & 0.5 & $9.72\times 10^{0}$ & $6.36\times 10^{4}$ & $5.58\times 10^{-3}$ & $3.44\times 10^{-2}$ & $5.30\times 10^{-2}$ & 27.42 \\
Normal & 30 & 0.0 & $5.45\times 10^{1}$ & $8.71\times 10^{3}$ & $8.22\times 10^{-3}$ & $3.42\times 10^{-2}$ & $5.47\times 10^{-2}$ & 27.45 \\
$t_{10}$ & 30 & 1.0 & $3.52\times 10^{0}$ & $4.67\times 10^{5}$ & $7.59\times 10^{-3}$ & $3.10\times 10^{-2}$ & $8.50\times 10^{-2}$ & 24.54 \\
$t_{10}$ & 30 & 0.5 & $9.37\times 10^{0}$ & $9.37\times 10^{4}$ & $5.70\times 10^{-3}$ & $3.44\times 10^{-2}$ & $5.34\times 10^{-2}$ & 27.40 \\
$t_{10}$ & 30 & 0.0 & $5.71\times 10^{1}$ & $1.59\times 10^{4}$ & $8.20\times 10^{-3}$ & $3.42\times 10^{-2}$ & $5.47\times 10^{-2}$ & 27.45 \\
Contam. & 30 & 1.0 & $5.12\times 10^{0}$ & $2.63\times 10^{6}$ & $7.50\times 10^{-3}$ & $3.04\times 10^{-2}$ & $9.66\times 10^{-2}$ & 23.53 \\
Contam. & 30 & 0.5 & $1.32\times 10^{1}$ & $9.13\times 10^{5}$ & $5.39\times 10^{-3}$ & $3.44\times 10^{-2}$ & $5.66\times 10^{-2}$ & 26.99 \\
Contam. & 30 & 0.0 & $7.59\times 10^{1}$ & $3.17\times 10^{4}$ & $7.54\times 10^{-3}$ & $3.43\times 10^{-2}$ & $5.62\times 10^{-2}$ & 27.20 \\
Hetero. $t_{10}$ & 30 & 1.0 & $1.48\times 10^{0}$ & $1.43\times 10^{6}$ & $4.87\times 10^{-3}$ & $2.98\times 10^{-2}$ & $1.00\times 10^{-1}$ & 22.88 \\
Hetero. $t_{10}$ & 30 & 0.5 & $6.18\times 10^{0}$ & $2.30\times 10^{5}$ & $3.18\times 10^{-3}$ & $3.47\times 10^{-2}$ & $5.49\times 10^{-2}$ & 27.11 \\
Hetero. $t_{10}$ & 30 & 0.0 & $5.78\times 10^{1}$ & $2.61\times 10^{4}$ & $4.12\times 10^{-3}$ & $3.48\times 10^{-2}$ & $5.44\times 10^{-2}$ & 27.16 \\
$X$-contam. & 30 & 1.0 & $2.51\times 10^{0}$ & $2.58\times 10^{6}$ & $6.03\times 10^{-3}$ & $2.98\times 10^{-2}$ & $1.04\times 10^{-1}$ & 22.76 \\
$X$-contam. & 30 & 0.5 & $1.01\times 10^{1}$ & $3.06\times 10^{5}$ & $4.06\times 10^{-3}$ & $3.45\times 10^{-2}$ & $5.59\times 10^{-2}$ & 27.00 \\
$X$-contam. & 30 & 0.0 & $1.05\times 10^{2}$ & $3.45\times 10^{4}$ & $5.32\times 10^{-3}$ & $3.44\times 10^{-2}$ & $5.63\times 10^{-2}$ & 26.96 \\

\midrule

Normal & 50 & 1.0 & $5.34\times 10^{0}$ & $5.39\times 10^{3}$ & $6.53\times 10^{-3}$ & $1.99\times 10^{-2}$ & $3.78\times 10^{-2}$ & 46.14 \\
Normal & 50 & 0.5 & $1.32\times 10^{1}$ & $8.66\times 10^{2}$ & $5.66\times 10^{-3}$ & $2.04\times 10^{-2}$ & $2.94\times 10^{-2}$ & 47.69 \\
Normal & 50 & 0.0 & $4.47\times 10^{1}$ & $3.10\times 10^{2}$ & $6.12\times 10^{-3}$ & $2.04\times 10^{-2}$ & $3.07\times 10^{-2}$ & 47.53 \\
$t_{10}$ & 50 & 1.0 & $5.17\times 10^{0}$ & $5.79\times 10^{3}$ & $6.58\times 10^{-3}$ & $1.99\times 10^{-2}$ & $3.81\times 10^{-2}$ & 46.11 \\
$t_{10}$ & 50 & 0.5 & $1.29\times 10^{1}$ & $9.02\times 10^{2}$ & $5.66\times 10^{-3}$ & $2.04\times 10^{-2}$ & $2.95\times 10^{-2}$ & 47.71 \\
$t_{10}$ & 50 & 0.0 & $4.43\times 10^{1}$ & $3.28\times 10^{2}$ & $6.13\times 10^{-3}$ & $2.04\times 10^{-2}$ & $3.07\times 10^{-2}$ & 47.56 \\
Contam. & 50 & 1.0 & $6.50\times 10^{0}$ & $2.53\times 10^{4}$ & $6.63\times 10^{-3}$ & $1.96\times 10^{-2}$ & $4.40\times 10^{-2}$ & 45.17 \\
Contam. & 50 & 0.5 & $1.59\times 10^{1}$ & $3.33\times 10^{3}$ & $5.16\times 10^{-3}$ & $2.05\times 10^{-2}$ & $3.04\times 10^{-2}$ & 47.51 \\
Contam. & 50 & 0.0 & $5.50\times 10^{1}$ & $1.04\times 10^{3}$ & $5.45\times 10^{-3}$ & $2.05\times 10^{-2}$ & $3.08\times 10^{-2}$ & 47.50 \\
Hetero. $t_{10}$ & 50 & 1.0 & $2.76\times 10^{0}$ & $1.87\times 10^{4}$ & $4.69\times 10^{-3}$ & $1.97\times 10^{-2}$ & $4.39\times 10^{-2}$ & 44.82 \\
Hetero. $t_{10}$ & 50 & 0.5 & $9.28\times 10^{0}$ & $3.06\times 10^{3}$ & $3.63\times 10^{-3}$ & $2.06\times 10^{-2}$ & $2.97\times 10^{-2}$ & 47.50 \\
Hetero. $t_{10}$ & 50 & 0.0 & $3.71\times 10^{1}$ & $1.22\times 10^{3}$ & $3.86\times 10^{-3}$ & $2.06\times 10^{-2}$ & $3.09\times 10^{-2}$ & 47.26 \\
$X$-contam. & 50 & 1.0 & $3.99\times 10^{0}$ & $6.41\times 10^{4}$ & $5.31\times 10^{-3}$ & $1.95\times 10^{-2}$ & $4.90\times 10^{-2}$ & 44.18 \\
$X$-contam. & 50 & 0.5 & $1.27\times 10^{1}$ & $7.34\times 10^{3}$ & $3.94\times 10^{-3}$ & $2.05\times 10^{-2}$ & $3.04\times 10^{-2}$ & 47.46 \\
$X$-contam. & 50 & 0.0 & $6.36\times 10^{1}$ & $2.16\times 10^{3}$ & $4.18\times 10^{-3}$ & $2.05\times 10^{-2}$ & $3.17\times 10^{-2}$ & 47.11 \\

\midrule

Normal & 100 & 1.0 & $5.65\times 10^{0}$ & $4.53\times 10^{2}$ & $3.85\times 10^{-3}$ & $1.01\times 10^{-2}$ & $1.57\times 10^{-2}$ & 97.17 \\
Normal & 100 & 0.5 & $1.13\times 10^{1}$ & $1.33\times 10^{2}$ & $3.59\times 10^{-3}$ & $1.01\times 10^{-2}$ & $1.43\times 10^{-2}$ & 97.73 \\
Normal & 100 & 0.0 & $2.90\times 10^{1}$ & $8.87\times 10^{1}$ & $3.65\times 10^{-3}$ & $1.01\times 10^{-2}$ & $1.46\times 10^{-2}$ & 97.61 \\
$t_{10}$ & 100 & 1.0 & $6.39\times 10^{0}$ & $6.07\times 10^{2}$ & $3.91\times 10^{-3}$ & $1.01\times 10^{-2}$ & $1.58\times 10^{-2}$ & 97.12 \\
$t_{10}$ & 100 & 0.5 & $1.25\times 10^{1}$ & $1.66\times 10^{2}$ & $3.63\times 10^{-3}$ & $1.01\times 10^{-2}$ & $1.42\times 10^{-2}$ & 97.75 \\
$t_{10}$ & 100 & 0.0 & $3.30\times 10^{1}$ & $1.00\times 10^{2}$ & $3.71\times 10^{-3}$ & $1.01\times 10^{-2}$ & $1.47\times 10^{-2}$ & 97.60 \\
Contam. & 100 & 1.0 & $5.87\times 10^{0}$ & $2.86\times 10^{3}$ & $3.77\times 10^{-3}$ & $1.00\times 10^{-2}$ & $1.80\times 10^{-2}$ & 96.33 \\
Contam. & 100 & 0.5 & $1.46\times 10^{1}$ & $4.98\times 10^{2}$ & $3.23\times 10^{-3}$ & $1.01\times 10^{-2}$ & $1.47\times 10^{-2}$ & 97.65 \\
Contam. & 100 & 0.0 & $4.08\times 10^{1}$ & $2.16\times 10^{2}$ & $3.32\times 10^{-3}$ & $1.01\times 10^{-2}$ & $1.50\times 10^{-2}$ & 97.50 \\
Hetero. $t_{10}$ & 100 & 1.0 & $3.69\times 10^{0}$ & $2.14\times 10^{3}$ & $2.88\times 10^{-3}$ & $1.01\times 10^{-2}$ & $1.72\times 10^{-2}$ & 96.49 \\
Hetero. $t_{10}$ & 100 & 0.5 & $8.96\times 10^{0}$ & $5.16\times 10^{2}$ & $2.62\times 10^{-3}$ & $1.02\times 10^{-2}$ & $1.45\times 10^{-2}$ & 97.65 \\
Hetero. $t_{10}$ & 100 & 0.0 & $3.23\times 10^{1}$ & $3.57\times 10^{2}$ & $2.71\times 10^{-3}$ & $1.01\times 10^{-2}$ & $1.51\times 10^{-2}$ & 97.38 \\
$X$-contam. & 100 & 1.0 & $4.31\times 10^{0}$ & $9.56\times 10^{3}$ & $3.01\times 10^{-3}$ & $1.00\times 10^{-2}$ & $2.02\times 10^{-2}$ & 95.35 \\
$X$-contam. & 100 & 0.5 & $1.39\times 10^{1}$ & $1.26\times 10^{3}$ & $2.47\times 10^{-3}$ & $1.02\times 10^{-2}$ & $1.48\times 10^{-2}$ & 97.60 \\
$X$-contam. & 100 & 0.0 & $5.77\times 10^{1}$ & $5.62\times 10^{2}$ & $2.58\times 10^{-3}$ & $1.01\times 10^{-2}$ & $1.55\times 10^{-2}$ & 97.20 \\

\end{longtable}

\endgroup
\vspace{-2.5em}
\begin{center}
\begin{minipage}{0.9\linewidth}
\footnotesize Note: This table reports unnormalized risk components and selected-weight diagnostics. Because raw components can be on different scales, the selection criterion combines their min--max normalized versions.
\end{minipage}
\end{center}
\vspace{0.5em}
\normalsize

\footnotesize
\begin{longtable}{lrrrrrrrrr}
\caption{Balanced selector relative to the two benchmark selectors}
\label{tab:mc_tau05_comparison}\\
\toprule
DGP & $n$ & $\Delta\gamma_{.5-1}$ & $\Delta\gamma_{.5-0}$ & $\Delta RMSE_{.5-1}$ & $\Delta RMSE_{.5-0}$ & $\Delta Cov_{.5-1}$ & $\Delta Cov_{.5-0}$ & $\Delta\|\lambda\|_{.5-1}$ & $\Delta\|\lambda\|_{.5-0}$ \\
\midrule
\endfirsthead
\multicolumn{10}{c}{\tablename\ \thetable{} -- continued from previous page}\\
\toprule
DGP & $n$ & $\Delta\gamma_{.5-1}$ & $\Delta\gamma_{.5-0}$ & $\Delta RMSE_{.5-1}$ & $\Delta RMSE_{.5-0}$ & $\Delta Cov_{.5-1}$ & $\Delta Cov_{.5-0}$ & $\Delta\|\lambda\|_{.5-1}$ & $\Delta\|\lambda\|_{.5-0}$ \\
\midrule
\endhead
\midrule
\multicolumn{10}{r}{Continued on next page}\\
\endfoot

\bottomrule
\endlastfoot
$X$-contam. & 30 & 1.476 & -0.086 & -0.007 & 0.002 & 0.006 & 0.039 & -6.742 & 1.514 \\
$t_{10}$ & 30 & 1.178 & -0.025 & -0.011 & 0.008 & 0.009 & 0.058 & -3.074 & 0.820 \\
Contam. & 30 & 1.409 & -0.049 & -0.016 & 0.018 & 0.014 & 0.036 & -6.162 & 1.716 \\
Hetero. $t_{10}$ & 30 & 1.335 & -0.097 & -0.004 & -0.022 & 0.003 & 0.056 & -3.789 & 0.817 \\
Normal & 30 & 1.141 & -0.008 & -0.013 & 0.008 & 0.012 & 0.061 & -2.801 & 0.753 \\
\hline
$X$-contam. & 50 & 0.992 & -0.058 & -0.007 & -0.010 & 0.009 & 0.028 & -0.671 & 0.099 \\
$t_{10}$ & 50 & 0.606 & -0.003 & -0.006 & -0.001 & 0.010 & 0.011 & -0.252 & 0.065 \\
Contam. & 50 & 0.922 & -0.054 & -0.010 & 0.001 & 0.015 & 0.010 & -0.521 & 0.135 \\
Hetero. $t_{10}$ & 50 & 0.720 & -0.023 & 0.001 & -0.015 & -0.005 & 0.029 & -0.358 & 0.045 \\
Normal & 50 & 0.577 & 0.001 & -0.006 & 0.000 & 0.007 & 0.012 & -0.235 & 0.062 \\
\hline
$X$-contam. & 100 & 0.609 & 0.003 & -0.000 & -0.007 & -0.000 & 0.018 & -0.106 & 0.003 \\
$t_{10}$ & 100 & 0.400 & 0.016 & -0.001 & -0.001 & 0.003 & 0.002 & -0.028 & 0.007 \\
Contam. & 100 & 0.550 & -0.001 & -0.002 & -0.002 & 0.003 & 0.006 & -0.068 & 0.011 \\
Hetero. $t_{10}$ & 100 & 0.390 & 0.026 & 0.001 & -0.007 & -0.001 & 0.015 & -0.036 & 0.004 \\
Normal & 100 & 0.372 & 0.006 & -0.001 & -0.000 & 0.002 & 0.002 & -0.023 & 0.006 \\
\end{longtable}
\vspace{-3.5em}
\begin{center}
\begin{minipage}{1.1\linewidth}
\footnotesize Note: For any statistic $M$, 
$\Delta M_{0.5-b}=M(\tau=0.5)-M(\tau=b)$, where $b\in\{1,0\}$.
``Cov.'' denotes the empirical coverage probability of the nominal 95\% confidence
interval for the slope parameter $\beta_1$; hence a positive value of
$\Delta \mathrm{Cov}_{0.5-b}$ means that the balanced selector has higher coverage
than benchmark $b$. The term $\|\hat\lambda\|$ denotes the Euclidean norm of the
CRPD Lagrange multiplier evaluated at the selected $\hat\gamma(\tau)$. Larger values
of $\|\hat\lambda\|$ indicate stronger moment-enforcement pressure and potentially
less stable implied weights. Therefore, a negative value of
$\Delta\|\hat\lambda\|_{0.5-b}$ means that the balanced selector lowers the multiplier
norm relative to benchmark $b$.
\end{minipage}
\end{center}
\vspace{0.5em}
\normalsize

\small
\begin{longtable}{lrrrrrrr}
\caption{DGP sensitivity of the selected estimator at $n=100$}
\label{tab:mc_n100_dgp_sensitivity}\\
\toprule
DGP & $\tau$ & $\bar{\hat\gamma}$ & RMSE & Coverage & SD/SE & $\|\hat\lambda\|$ & $ESS/n$ \\
\midrule
\endfirsthead
\multicolumn{8}{c}{\tablename\ \thetable{} -- continued from previous page}\\
\toprule
DGP & $\tau$ & $\bar{\hat\gamma}$ & RMSE & Coverage & SD/SE & $\|\hat\lambda\|$ & $ESS/n$ \\
\midrule
\endhead
\midrule
\multicolumn{8}{r}{Continued on next page}\\
\endfoot

\bottomrule
\endlastfoot
Normal & 1.0 & 0.486 & 0.111 & 0.886 & 1.212 & 0.196 & 0.972 \\
Normal & 0.5 & 0.858 & 0.109 & 0.888 & 1.202 & 0.174 & 0.977 \\
Normal & 0.0 & 0.852 & 0.109 & 0.887 & 1.205 & 0.168 & 0.976 \\
$t_{10}$ & 1.0 & 0.461 & 0.108 & 0.890 & 1.205 & 0.206 & 0.971 \\
$t_{10}$ & 0.5 & 0.862 & 0.106 & 0.893 & 1.193 & 0.178 & 0.978 \\
$t_{10}$ & 0.0 & 0.845 & 0.108 & 0.891 & 1.207 & 0.171 & 0.976 \\
Contam. & 1.0 & 0.278 & 0.093 & 0.886 & 1.247 & 0.292 & 0.963 \\
Contam. & 0.5 & 0.828 & 0.091 & 0.889 & 1.222 & 0.224 & 0.976 \\
Contam. & 0.0 & 0.829 & 0.093 & 0.884 & 1.250 & 0.213 & 0.975 \\
Hetero. $t_{10}$ & 1.0 & 0.375 & 0.137 & 0.899 & 1.156 & 0.183 & 0.965 \\
Hetero. $t_{10}$ & 0.5 & 0.766 & 0.138 & 0.898 & 1.164 & 0.147 & 0.976 \\
Hetero. $t_{10}$ & 0.0 & 0.740 & 0.145 & 0.883 & 1.220 & 0.143 & 0.974 \\
$X$-contam. & 1.0 & 0.130 & 0.124 & 0.901 & 1.218 & 0.309 & 0.953 \\
$X$-contam. & 0.5 & 0.740 & 0.124 & 0.901 & 1.217 & 0.203 & 0.976 \\
$X$-contam. & 0.0 & 0.736 & 0.131 & 0.883 & 1.287 & 0.200 & 0.972 \\
\end{longtable}
\vspace{-1.4em}
\begin{center}
\begin{minipage}{0.8\linewidth}
\footnotesize Note: This table isolates the larger-sample case to show how the selected curvature parameter varies with the DGP and with the system-risk weight.
\end{minipage}
\end{center}
\vspace{0.5em}
\normalsize

\section{Empirical Application}

To illustrate the proposed method in a simple empirical setting, we use the
dataset from Owen (2001, Table 3.2, p. 44), which reports the pounds of milk
produced and the number of days milked in 1936 for $n=22$ dairy cows. The
parameter of interest is the mean daily milk yield,
\[
    \theta = \mu = E[X_i],
    \qquad
    X_i = \frac{\textit{Milk}_i}{\textit{Days}_i}.
\]

Table \ref{tab:mpd_desc} reports descriptive statistics for milk production per
day. The sample average is 12.43 lb/day, with a standard deviation of 3.08. The
minimum and maximum values are 7.55 and 18.66, respectively, indicating
substantial heterogeneity in daily milk productivity across cows.

\begin{table}[!htbp] 
\centering 
\caption{Descriptive statistics for milk production per day} 
\label{tab:mpd_desc} 
\begin{tabular}{@{\extracolsep{5pt}}cccccc} 
\toprule
Statistic 
& \multicolumn{1}{c}{$n$} 
& \multicolumn{1}{c}{Mean} 
& \multicolumn{1}{c}{St.\ Dev.} 
& \multicolumn{1}{c}{Minimum} 
& \multicolumn{1}{c}{Maximum} \\ 
\midrule
mpd & 22 & 12.4250 & 3.0750 & 7.5470 & 18.6610 \\ 
\bottomrule
\multicolumn{6}{l}{%
\begin{minipage}{11cm}\vspace{1mm}
    \footnotesize 
    Note: mpd denotes milk production per day (lb/day). $n$ is the number of
    observations. St.\ Dev.\ denotes the standard deviation.
\end{minipage}}     
\end{tabular} 
\end{table} 

\paragraph{Empirical specification.}
Let $X_i=\textit{mpd}_i$ denote milk production per day for cow $i$. The
parameter of interest is the population mean
\[
    \mu = E[X_i],
\]
with the baseline identifying moment condition
\[
    E[X_i-\mu]=0.
\]
With only this single moment condition, however, the empirical mean satisfies the
sample analogue exactly under uniform weights. Hence, the CRPD problem does not
generate a nontrivial overidentified reweighting problem.

To obtain an overidentified moment system, we introduce an auxiliary moment based
on the number of days milked. Let $D_i=\textit{Days}_i$ and define the
standardized variable
\[
    \widetilde D_i = \frac{D_i-\bar D}{s_D},
\]
where $\bar D$ and $s_D$ are the sample mean and sample standard deviation of
$D_i$. The moment vector is
\[
    g_i(\mu)
    =
    Z_i(X_i-\mu),
    \qquad
    Z_i =
    \begin{pmatrix}
        1 \\
        \widetilde D_i
    \end{pmatrix}.
\]
Equivalently,
\[
    g_i(\mu)
    =
    \begin{pmatrix}
        X_i-\mu \\
        (X_i-\mu)\widetilde D_i
    \end{pmatrix}.
\]
The second moment requires deviations of daily milk production from the
population mean not to vary systematically with the number of days milked. A
Pearson correlation test between $X_i$ and $D_i$ yields a $p$-value of 0.32,
suggesting that this auxiliary moment is not strongly contradicted by the sample.
The resulting system has two moments for one structural parameter, so that
$q=2>p=1$.

\paragraph{Data-driven selection of $\gamma$.}
The CRPD estimator depends on the power parameter $\gamma$, which controls the
curvature of the divergence and therefore the implied probability weights. Rather
than selecting $\gamma$ by an out-of-sample prediction criterion, we select
$\gamma$ by the system-risk criterion used in the main simulation design.

For each candidate value of $\gamma$ on the grid
\[
    \mathcal G = \{-2.0,-1.9,\ldots,2.0\},
\]
the CRPD estimator is computed subject to the probability and moment restrictions.
Let $\hat\theta_{\gamma}$ and $\hat\lambda_{\gamma}$ denote the resulting
structural estimate and multiplier. We compare these quantities with a
first-order GMM benchmark. Let $\hat\theta_{\mathrm{GMM}}$ denote the GMM
estimate and let $\hat\lambda_{\mathrm{FO}}$ denote the corresponding first-order
multiplier approximation. Define
\[
    B_{\theta}(\gamma)
    =
    n\bigl(\hat\theta_{\gamma}-\hat\theta_{\mathrm{GMM}}\bigr),
    \qquad
    B_{\lambda}(\gamma)
    =
    n\bigl(\hat\lambda_{\gamma}-\hat\lambda_{\mathrm{FO}}\bigr).
\]
The structural and multiplier components are computed as
\[
    C_{\theta}^{\mathrm{raw}}(\gamma)
    =
    B_{\theta}(\gamma)^{\prime} \widehat K B_{\theta}(\gamma),
    \qquad
    C_{\lambda}^{\mathrm{raw}}(\gamma)
    =
    B_{\lambda}(\gamma)^{\prime} \widehat\Omega B_{\lambda}(\gamma),
\]
where $\widehat K$ and $\widehat\Omega$ are feasible weighting matrices. These
raw components are normalized over the feasible gamma path to obtain
$C_{\theta}(\gamma)$ and $C_{\lambda}(\gamma)$. The selected value of $\gamma$ is
then
\[
    \hat\gamma(\tau)
    =
    \arg\min_{\gamma\in\mathcal G_{\mathrm{feas}}}
    C_{\mathrm{sys}}(\gamma;\tau),
\]
where
\[
    C_{\mathrm{sys}}(\gamma;\tau)
    =
    \tau C_{\theta}(\gamma)
    +
    (1-\tau)C_{\lambda}(\gamma).
\]
The tuning parameter $\tau\in[0,1]$ determines the relative weight placed on
structural accuracy versus multiplier stability. When $\tau=1$, the selection
criterion uses only the structural component. When $\tau=0$, it uses only the
multiplier-stability component. Intermediate values balance the two components.

\paragraph{Estimation results.}
Table \ref{tab:empirics_selected_gamma} reports the selected CRPD estimates for
$\tau\in\{1,0.5,0\}$. The first-order GMM benchmark is
\[
    \hat\theta_{\mathrm{GMM}} = 12.8264,
\]
with a GMM criterion value of 0.0724. When $\tau=1$, the selector chooses
$\hat\gamma=0.7$, yielding $\hat\theta_{\gamma}=12.8264$. This coincides with the
GMM benchmark up to the reported precision, so the normalized structural
component is zero. When $\tau=0.5$ or $\tau=0$, the selector chooses
$\hat\gamma=1.4$, yielding $\hat\theta_{\gamma}=12.8291$. Thus, the point
estimate is highly stable across the selected curvature choices.

\begin{table}[!htbp]
\centering
\caption{Selected CRPD estimates and diagnostics}
\label{tab:empirics_selected_gamma}
\begin{tabular}{ccccccccc}
\toprule
$\tau$
& $\hat\gamma$
& $\hat\theta$
& SE
& $C_{\theta}$
& $C_{\lambda}$
& $\|\hat\lambda\|$
& $ESS/n$
& $(\min_i\hat\pi_i,\max_i\hat\pi_i)$ \\
\midrule
1.0 & 0.7 & 12.8264 & 0.6025 & 0.000000 & 0.000547 & 0.1138 & 0.9271 & $(0.0158,0.0767)$ \\
0.5 & 1.4 & 12.8291 & 0.6025 & 0.000007 & 0.000000 & 0.1047 & 0.9265 & $(0.0072,0.0708)$ \\
0.0 & 1.4 & 12.8291 & 0.6025 & 0.000007 & 0.000000 & 0.1047 & 0.9265 & $(0.0072,0.0708)$ \\
\bottomrule
\multicolumn{9}{l}{%
\begin{minipage}{15cm}\vspace{1mm}
    \footnotesize
    Note: $\tau$ controls the weight placed on the structural component in
    $C_{\mathrm{sys}}(\gamma;\tau)$. $C_{\theta}$ and $C_{\lambda}$ are the
    normalized structural and multiplier-stability components. $ESS/n$ denotes
    the effective sample size divided by the nominal sample size. The uniform
    probability benchmark is $1/n=0.0455$.
\end{minipage}}
\end{tabular}
\end{table}

The main difference across $\tau$ is therefore not the estimated mean but the
criterion used to choose the curvature parameter. The structural-only selector
chooses $\gamma=0.7$ because this value minimizes the structural discrepancy from
the first-order GMM benchmark. In contrast, the multiplier-stability selector
chooses $\gamma=1.4$, because this value minimizes the multiplier component over
the feasible gamma path. The balanced criterion with $\tau=0.5$ also selects
$\gamma=1.4$, indicating that the small increase in structural discrepancy is
more than offset by the gain in multiplier stability.

Figure \ref{fig:owen_gamma_path} plots the normalized gamma-path criterion for
each value of $\tau$. The criterion is large for highly negative values of
$\gamma$, especially near the left end of the grid. This indicates that those
curvature choices require substantially larger structural or multiplier
adjustments to enforce the moment restrictions. For moderate and positive values
of $\gamma$, the criterion is close to zero, and the estimator is much more
stable. The selected value is marked by the vertical dashed line in each panel.

Figure \ref{fig:owen_pi_hist_selected} reports the implied probability weights
for the selected specifications. The dashed vertical line denotes the uniform
benchmark $1/n$. Across the selected values of $\gamma$, most weights remain
close to the uniform benchmark. The effective sample size remains approximately
20.4, or about 93\% of the nominal sample size. This indicates that the selected
CRPD estimators do not rely on a small subset of observations to satisfy the
moment restrictions.

This empirical illustration highlights the role of $\gamma$ as a curvature
parameter in the CRPD estimator. The estimated mean daily milk yield is robust
across the selected values of $\gamma$, indicating that the structural estimate is
not materially affected by the curvature choice in this dataset. However, the
implied multipliers and probability weights vary across selected values of
$\gamma$, showing that similar point estimates can be obtained through different
degrees of moment-enforcement adjustment. Thus, the empirical application shows why the choice of $\gamma$ should not be
evaluated only by the stability of $\hat\theta_{\gamma}$. Even when
$\hat\theta_{\gamma}$ is nearly unchanged, different curvature choices can imply
different finite-sample costs in the full estimator--multiplier system.

\section{Conclusion}

This paper develops a data-tunable CRPD framework for moment-based estimation in
finite samples. The CRPD family is indexed by the power parameter $\gamma$, which
has often been treated as a researcher-chosen tuning index, with empirical
likelihood and exponential tilting arising as familiar benchmark cases. We
interpret $\gamma$ instead as a curvature parameter of the divergence objective.
Through this curvature channel, $\gamma$ governs the sensitivity of the implied
probabilities to sample moment discrepancies and therefore affects the
finite-sample behavior of the full estimator--multiplier system.

The central implication is that the choice of $\gamma$ should be evaluated not
only through the structural estimate $\hat\theta_{\gamma}$, but also through the
multiplier and implied-probability system used to enforce the moment restrictions.
Even when different values of $\gamma$ deliver similar structural estimates, they
can imply different multiplier adjustments, different degrees of weight
concentration, and different proximity to the boundary of the feasible
probability region. These features matter for finite-sample stability because
they determine how the empirical distribution is reweighted to satisfy the moment
conditions.

Motivated by this observation, we propose a system-risk criterion for selecting
$\gamma$. The criterion is both estimation-oriented and system-oriented: a
researcher-specified weight determines the relative emphasis placed on
structural accuracy and multiplier stability, allowing the criterion to target
either objective or a balance between the two. Its
structural component measures the finite-sample distortion of
$\hat\theta_{\gamma}$ relative to a first-order GMM benchmark, while its
multiplier component measures the stability cost associated with enforcing the
moment restrictions. The selected value of $\gamma$ is therefore chosen to reduce
second-order finite-sample distortion in the structural estimate while maintaining
stable moment enforcement in the full estimator--multiplier system.

The simulation evidence supports this interpretation. Across the designs
considered, the selected CRPD estimator remains approximately centered around the
true structural parameter, while the system-risk criterion helps avoid curvature
choices that require large multiplier adjustments or highly distorted implied
weights. The results show that finite-sample performance depends not only on the
point estimate, but also on the stability of the implied probability
distribution. In this sense, the proposed selector uses information from the
entire CRPD system rather than selecting $\gamma$ from the structural estimate
alone.

The empirical application to the \citet{Owen2001}'s cow data provides a simple illustration of
the same mechanism. The estimated mean daily milk yield is robust across the
selected values of $\gamma$, but the implied multipliers and probability weights
reveal differences in moment-enforcement cost. Thus, the application shows why
$\gamma$ should be interpreted as a curvature parameter governing the
finite-sample geometry of the estimator, rather than as a fixed index chosen only
from conventional benchmark values.

For future research, the
role of $\gamma$ under moment misspecification deserves separate analysis. Under
misspecification, the first-order equivalence across members of the CRPD class can
break down, so $\gamma$ may affect not only higher-order behavior but also the
leading asymptotic approximation. Characterizing the resulting pseudo-true
targets, robustness properties, and inference procedures under misspecification
would further clarify the role of CRPD curvature in empirical moment-based
estimation.

\begin{landscape}
\begin{figure}[hbtp]
\centering

\begin{subfigure}{0.48\linewidth}
    \centering
    \includegraphics[width=\linewidth]{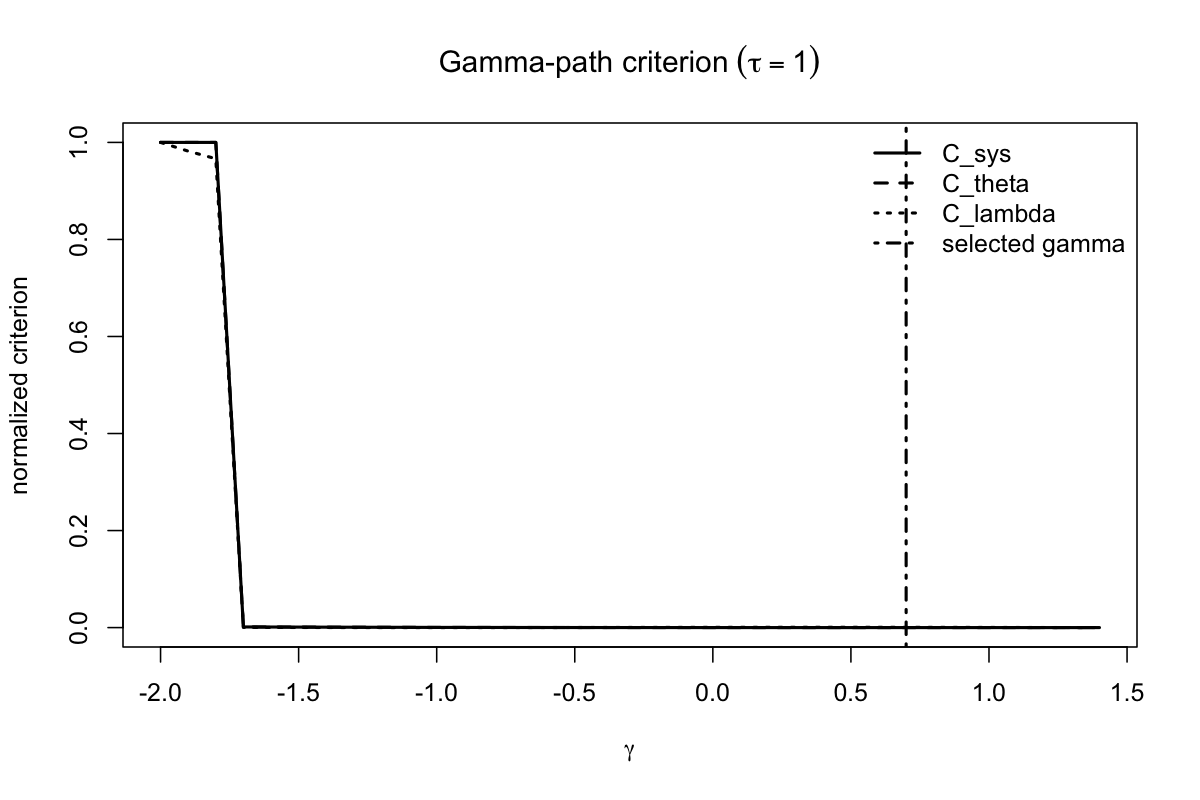}
    \caption{$\tau=1$}
\end{subfigure}
\hfill
\begin{subfigure}{0.48\linewidth}
    \centering
    \includegraphics[width=\linewidth]{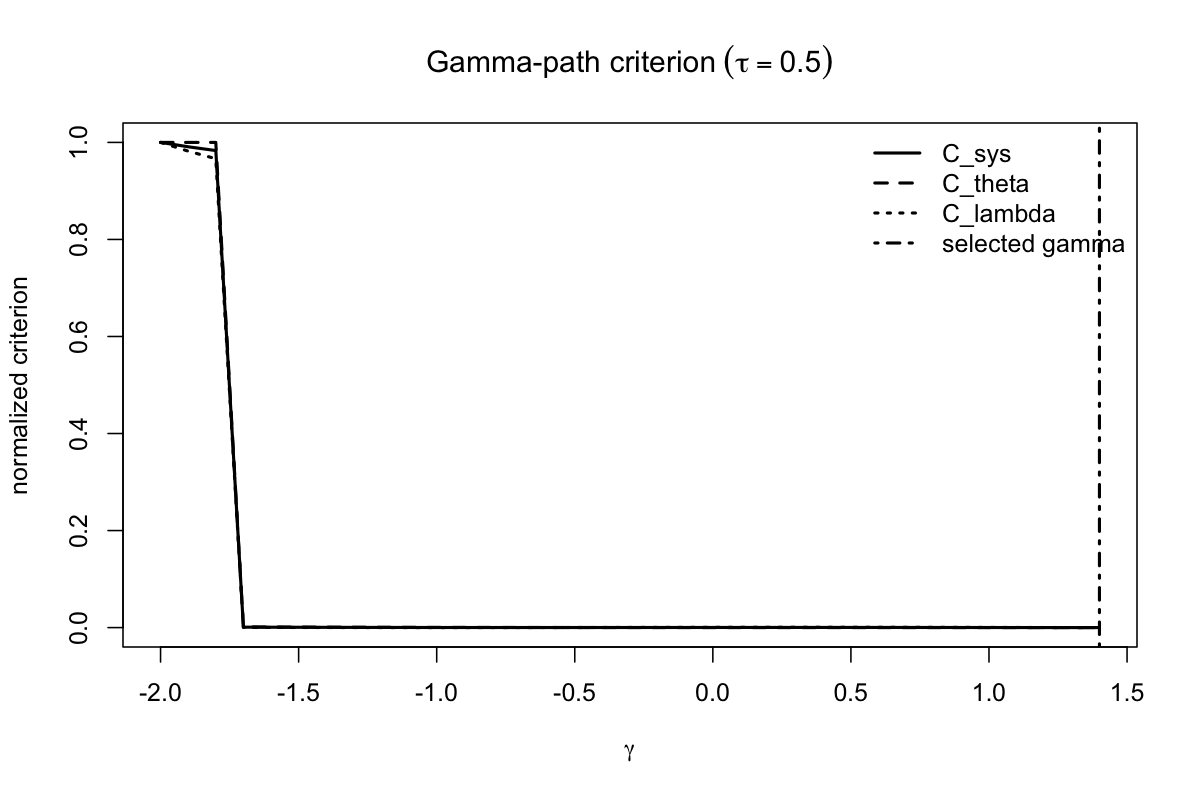}
    \caption{$\tau=0.5$}
\end{subfigure}

\vspace{0.6em}

\begin{subfigure}{0.48\linewidth}
    \centering
    \includegraphics[width=\linewidth]{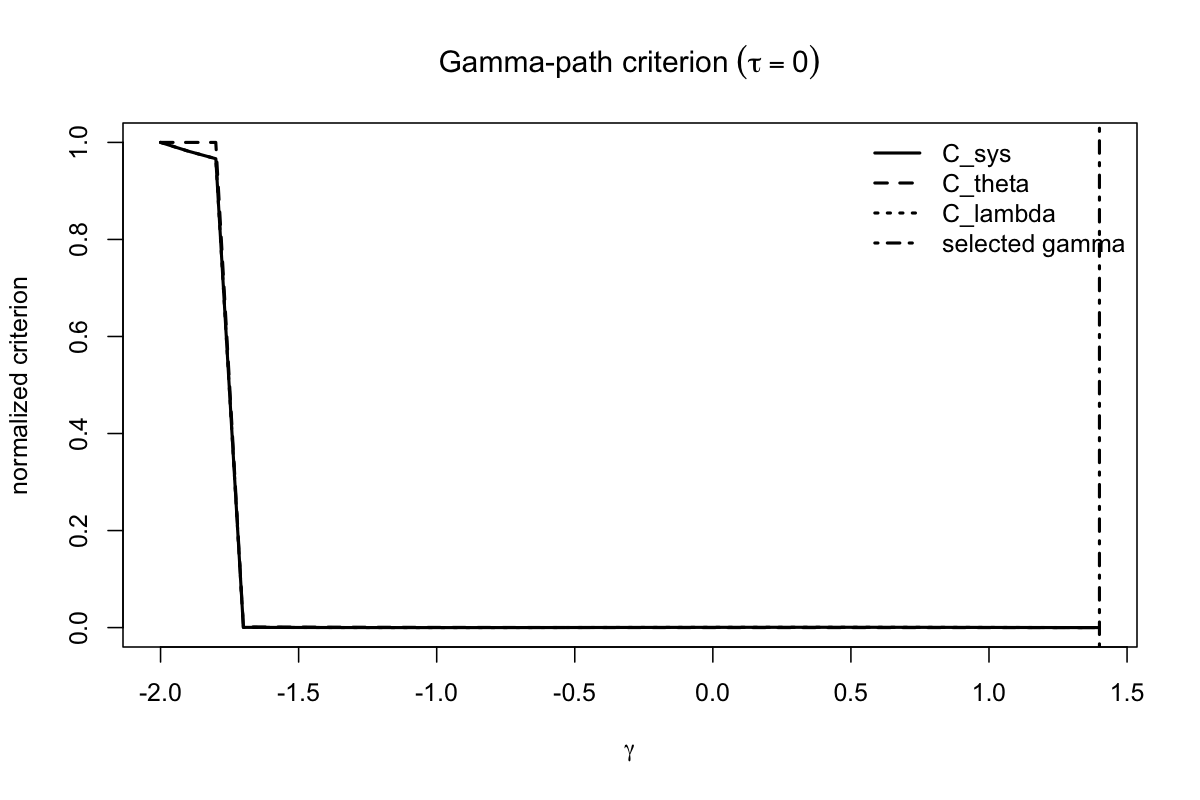}
    \caption{$\tau=0$}
\end{subfigure}

\caption{Gamma-path criterion for the Owen cow data}
\label{fig:owen_gamma_path}
\caption*{\footnotesize
Note: The solid line reports $C_{\mathrm{sys}}(\gamma;\tau)$, the dashed line
reports $C_{\theta}(\gamma)$, and the dotted line reports
$C_{\lambda}(\gamma)$. The vertical dashed line marks the selected value of
$\gamma$.}
\end{figure}
\end{landscape}

\begin{landscape}
\begin{figure}[p]
\centering

\begin{subfigure}{0.48\linewidth}
    \centering
    \includegraphics[width=\linewidth]{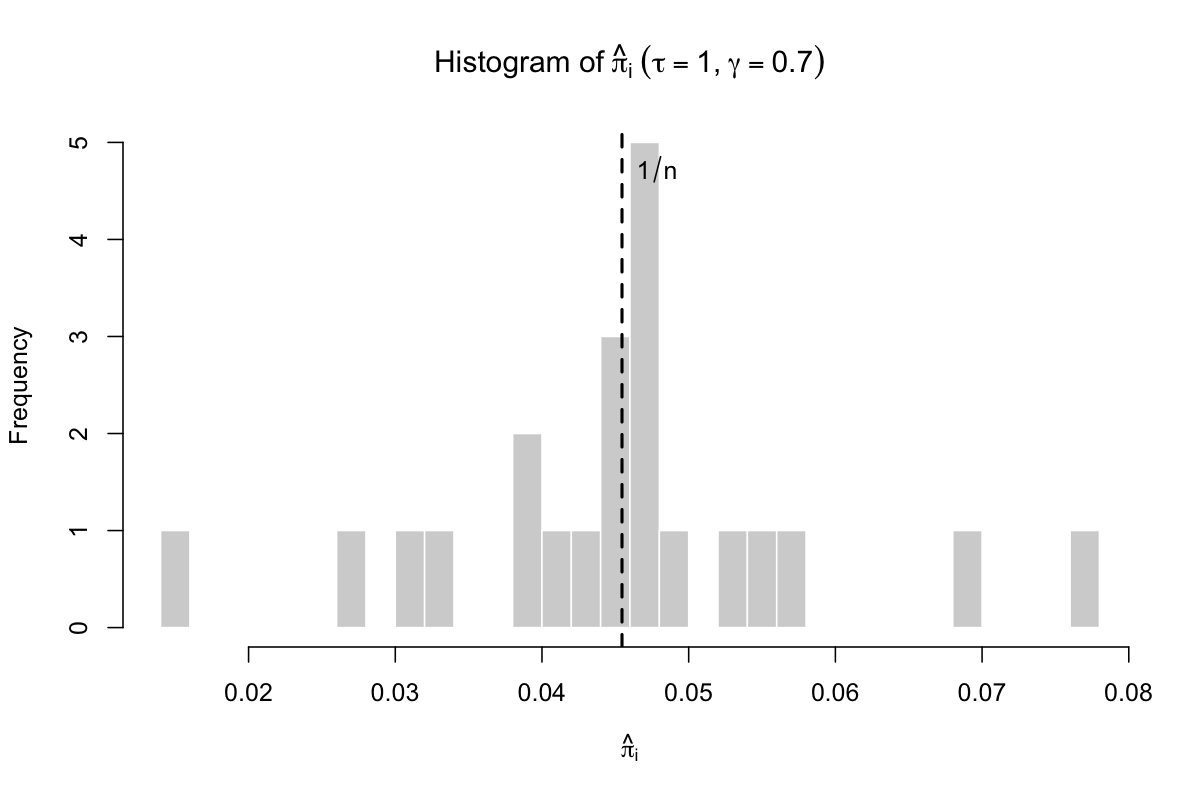}
    \caption{$\tau=1$, $\hat\gamma=0.7$}
\end{subfigure}
\hfill
\begin{subfigure}{0.48\linewidth}
    \centering
    \includegraphics[width=\linewidth]{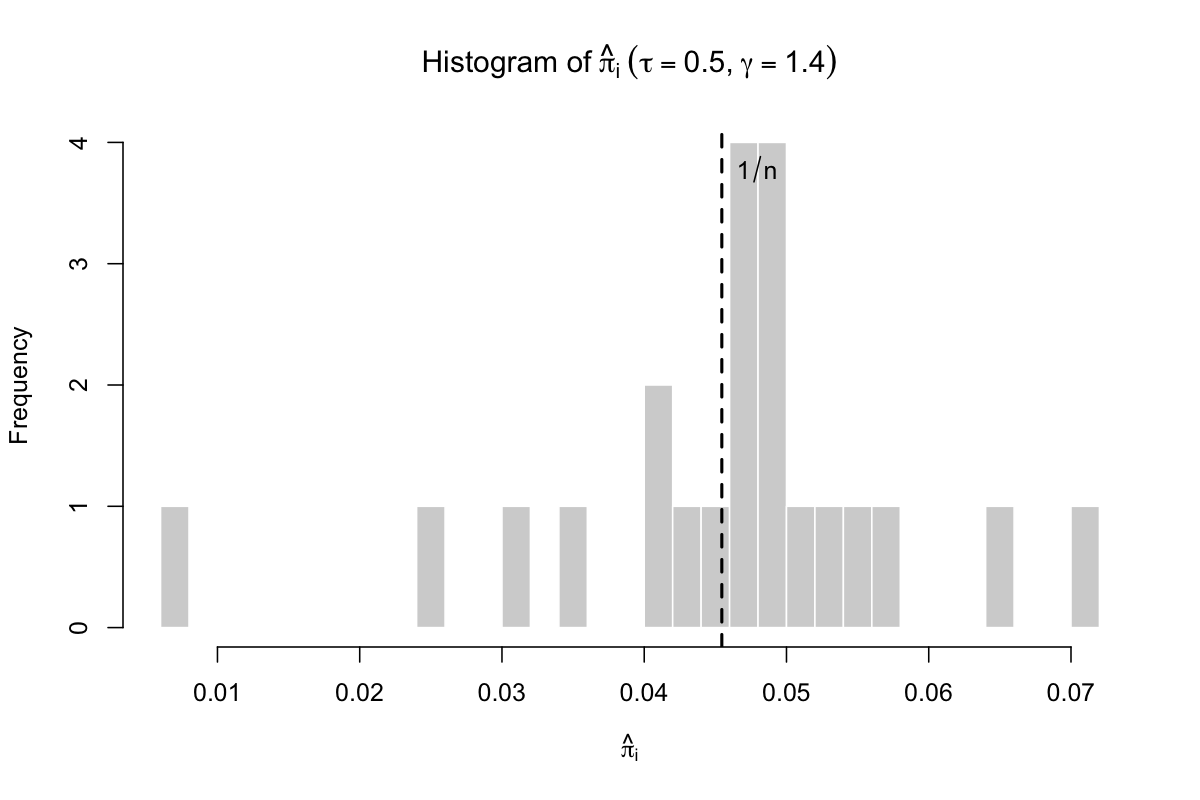}
    \caption{$\tau=0.5$, $\hat\gamma=1.4$}
\end{subfigure}

\vspace{0.6em}

\begin{subfigure}{0.48\linewidth}
    \centering
    \includegraphics[width=\linewidth]{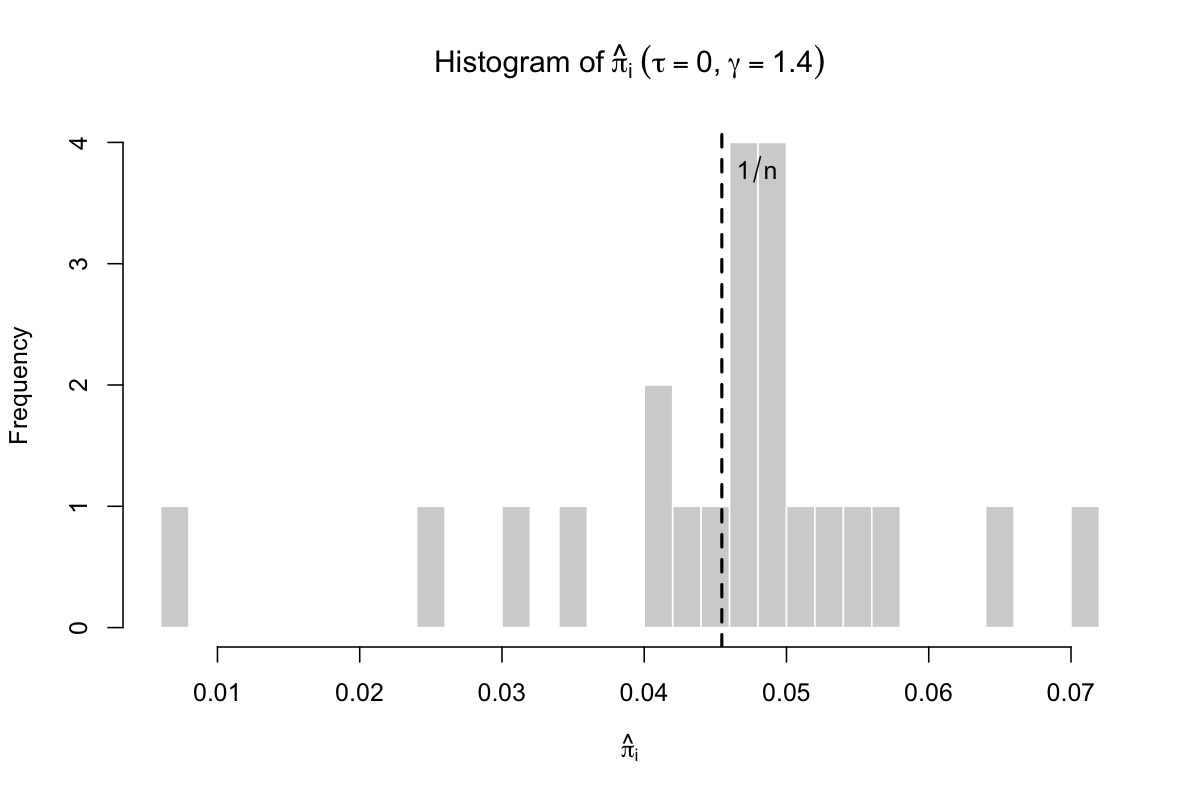}
    \caption{$\tau=0$, $\hat\gamma=1.4$}
\end{subfigure}

\caption{Estimated implied probabilities for selected values of $\gamma$}
\label{fig:owen_pi_hist_selected}
\caption*{\footnotesize
Note: The dashed vertical line denotes the uniform benchmark $1/n$. The selected
specifications have no boundary hits under the absolute threshold $10^{-6}$ or
the relative threshold $0.01/n$.}
\end{figure}
\end{landscape}

\section*{Acknowledgements}
We thank Xiaofeng Shao (Washington University in St. Louis), Seojeong (Jay) Lee (Seoul National University), Yuya Sasaki (Vanderbilt University), and the audience at the Econometrics Brown Bag Seminar at the University of Illinois Urbana–Champaign for helpful comments. All remaining errors are our own.

\newpage
\section*{Appendix}
\setcounter{equation}{0}
\renewcommand{\theequation}{A.\arabic{equation}}
\renewcommand{\thesection}{A}
\renewcommand{\thesubsection}{A.\arabic{subsection}}

\subsection{Proof for Lemma \ref{lem:population_solution}}\label{subsec:pf_lem-population-solution}
To characterize the population solution, consider the population CRPD problem obtained by replacing sample averages with expectations:
$
\min_{P} \mathscr{I}_{\gamma}(P,P_{0})
\quad \text{subject to}$ $ 
\mathbb{E}_{P}[g(X,\theta)] = 0,
$
where $P_{0}$ denotes the true data-generating distribution.

Under correct specification,
$
\mathbb{E}_{P_{0}}[g(X,\theta_{0})]=0,
$
so the true distribution $P_{0}$ already satisfies the moment restriction. Since CRPD divergences are uniquely minimized at the reference distribution, the population minimizer is
$
P=P_{0},
\;
\lambda_{0}=0,
$
that is, no distortion of the distribution is required when the moment condition holds exactly.

At the sample level, the empirical distribution with uniform weights $\pi_i=1/n$ is the natural analogue of $P_{0}$. By the Glivenko–Cantelli theorem, the empirical distribution converges uniformly to $P_{0}$. Hence, consistency of the CRPD estimator implies
$
\hat{\lambda} \xrightarrow{p} 0.
$

More explicitly, under correct specification the population moment condition satisfies
$
\mathbb{E}[g(X,\theta_{0})]=0.
$
By the Law of Large Numbers,
$
\frac{1}{n}\sum_{i=1}^n g_i(\theta_{0}) = O_p(n^{-1/2}),
$
so the sample moment violation at $\theta=\theta_{0}$ is asymptotically negligible.

Substituting $\pi_i=1/n$ into \eqref{pi_i} yields
$
\frac{1}{\gamma+1}-\gamma \delta_0 = 1,
$
which implies
$
\delta_0 = -\frac{1}{\gamma+1}.
$ Therefore, under correct specification the solution lies in a neighborhood of \\
$
(\theta_0,\lambda_0,\pi_i,\delta_0)
=
\left(\theta_0,0,\frac{1}{n},-\frac{1}{\gamma+1}\right),
$
as claimed. \qed

\subsection{Proof for Theorem \ref{thm:uniform_conv_profiled}} \label{subsec:pf_thm-uniform-conv-profiled}

\textbf{Step 1: Uniform LLN for the multiplier system.}

By Assumptions \ref{assumption:theta_compact} and \ref{assumption:smooth_envolope}, $\Theta$ is compact and
$g(z,\theta)$ is continuously differentiable with square-integrable envelope.
By Assumption \ref{assumption:interior_feasibility}, with probability approaching one,
there exists a compact set
$
\mathcal H
=
\bigl\{(\lambda,\delta):
\|\lambda\|+\|\delta-\delta_0\|\le \kappa
\bigr\}
$
such that for all $(\theta,\lambda,\delta)\in \mathcal N\times\mathcal H$,
$
\frac{1}{\gamma+1}
-
\gamma\delta
-
\gamma\lambda^{\prime} g_i(\theta)
\ge \kappa>0.
$ On this interior region,
$w_i(\theta,\lambda,\delta)$
is continuously differentiable in $(\theta,\lambda,\delta)$
and, by the positivity bound in Assumption \ref{assumption:interior_feasibility} together with Assumption \ref{assumption:smooth_envolope},
is uniformly bounded by an integrable envelope function.
Consequently, the empirical averages of
$w(Z;\theta,\lambda,\delta)$
and
$w(Z;\theta,\lambda,\delta)g(Z,\theta)$
satisfy a uniform law of large numbers over
$(\theta,\lambda,\delta)\in\mathcal N\times\mathcal H$; that is,
$
\sup_{(\theta,\lambda,\delta)\in\mathcal N\times\mathcal H}
\left|
\frac{1}{n}\sum_{i=1}^n w_i(\theta,\lambda,\delta)
-
E\!\left[w(Z;\theta,\lambda,\delta)\right]
\right|
\xrightarrow{p}
0,
$
and \\$
\sup_{(\theta,\lambda,\delta)\in\mathcal N\times\mathcal H}
\left\|
\frac{1}{n}\sum_{i=1}^n
w_i(\theta,\lambda,\delta)g_i(\theta)
-
E\!\left[w(Z;\theta,\lambda,\delta)g(Z,\theta)\right]
\right\|
\xrightarrow{p}
0.
$\\
Therefore,
$
\sup_{(\theta,\lambda,\delta)\in\mathcal N\times\mathcal H}
\bigl\|
\Psi_n(\theta,\lambda,\delta)
-
\Psi(\theta,\lambda,\delta)
\bigr\|
\xrightarrow{p}
0.
$

\textbf{Step 2: Uniform convergence of profiled multipliers.}

By Assumption 3,
the Jacobian
$
\partial_{(\lambda,\delta)}
\Psi(\theta_0,\lambda_0,\delta_0)
$
is nonsingular.
By continuity,
it remains nonsingular on a neighborhood $\mathcal N$ of $\theta_0$. Combining the uniform convergence of $\Psi_n$
with the nonsingularity of the Jacobian
and standard uniform $Z$-estimation arguments,
\[
\sup_{\theta\in\mathcal N}
\left\|
\begin{pmatrix}
\hat\lambda(\theta) \\
\hat\delta(\theta)
\end{pmatrix}
-
\begin{pmatrix}
\lambda(\theta) \\
\delta(\theta)
\end{pmatrix}
\right\|
\xrightarrow{p}
0.
\]

\textbf{Step 3: Uniform convergence of the profiled objective.}

On the interior region,
the map
$
(\theta,\lambda,\delta)
\mapsto
\mathscr{I}_{\gamma}\bigl(\pi(\theta,\lambda,\delta),i/n\bigr)
$
is continuous and uniformly continuous on
$\mathcal N\times\mathcal H$. Hence,
$
\sup_{\theta\in\mathcal N}
\bigl|
L_n(\theta)
-
L(\theta)
\bigr|
\xrightarrow{p}
0.
$

\textbf{Step 4: Identification and uniqueness.}

By Assumption \ref{assumption:identification},
$E[g(Z,\theta)]=0$
if and only if $\theta=\theta_0$.
Under correct specification,
the CRPD divergence is minimized
only when the implied weights satisfy the moment restriction,
which occurs uniquely at $\theta_0$. Therefore,
$L(\theta)$ is uniquely minimized at $\theta_0$.
 \qed

\subsection{Proof for Lemma \ref{lem:exist_unique}} \label{subsec:pf_lem-exist-unique}
Fix $\theta\in\mathcal{N}$. By Assumption \ref{assumption:interior_feasibility}, there exists an interior solution in the domain
where all terms $s_i(\theta,\lambda,\delta):=\frac{1}{\gamma+1}-\gamma\delta-\gamma\lambda^{\prime}g_i(\theta)$
are bounded away from zero, ensuring that $w_i=s_i^{1/\gamma}$ is well-defined and smooth.

Consider the mapping $(\lambda,\delta)\mapsto \Psi_n(\theta,\lambda,\delta)$ on the compact set
$\{(\lambda,\delta):\|\lambda\|+\|\delta-\delta_0\|\le \kappa\}$ intersected with the positivity domain
$s_i\ge \kappa$. On this set, $\Psi_n$ is continuously differentiable.
Compute the Jacobian with respect to $(\lambda,\delta)$:
\[
\frac{\partial w_i}{\partial\lambda}
=
-\Big(\frac{1}{\gamma+1}-\gamma\delta-\gamma\lambda^{\prime}g_i(\theta)\Big)^{1/\gamma-1} g_i(\theta), \qquad
\frac{\partial w_i}{\partial\delta}
=
-\Big(\frac{1}{\gamma+1}-\gamma\delta-\gamma\lambda^{\prime}g_i(\theta)\Big)^{1/\gamma-1}
\]
Thus,
\[
\frac{\partial \Psi_{n,1}}{\partial\lambda^{\prime}}
=
-\frac{1}{n}\sum_{i=1}^n s_i^{1/\gamma-1} g_i(\theta)^{\prime},\qquad
\frac{\partial \Psi_{n,1}}{\partial\delta}
=
-\frac{1}{n}\sum_{i=1}^n s_i^{1/\gamma-1},
\]
and
\[
\frac{\partial \Psi_{n,2}}{\partial\lambda^{\prime}}
=
-\frac{1}{n}\sum_{i=1}^n s_i^{1/\gamma-1} g_i(\theta)g_i(\theta)^{\prime},\;
\frac{\partial \Psi_{n,2}}{\partial\delta}
=
-\frac{1}{n}\sum_{i=1}^n s_i^{1/\gamma-1} g_i(\theta).
\]
At $(\lambda,\delta)=(0,\delta_0)$ we have $s_i=1$ and hence
$
\frac{\partial \Psi_{n,1}}{\partial\lambda^{\prime}}\Big|_{(0,\delta_0)}=-\bar g(\theta)^{\prime},\;
\frac{\partial \Psi_{n,1}}{\partial\delta}\Big|_{(0,\delta_0)}=-1,
$
$
\frac{\partial \Psi_{n,2}}{\partial\lambda^{\prime}}\Big|_{(0,\delta_0)}=-\hat\Omega(\theta),\qquad
\frac{\partial \Psi_{n,2}}{\partial\delta}\Big|_{(0,\delta_0)}=-\bar g(\theta),
$
where $\bar g(\theta):=\frac{1}{n}\sum_{i=1}^n g_i(\theta)$ and
$\hat\Omega(\theta):=\frac{1}{n}\sum_{i=1}^n g_i(\theta)g_i(\theta)^{\prime}$. Denote the Jacobian with respect to $(\lambda^{\prime},\delta)$ at $(\lambda,\delta)=(0,\delta_0)$ by
\[
J_n(\theta)
:=
\frac{\partial \Psi_n(\theta,\lambda,\delta)}{\partial(\lambda^{\prime},\delta)}
\Bigg|_{(\lambda,\delta)=(0,\delta_0)}
=
\begin{pmatrix}
-\bar g(\theta)^{\prime} & -1 \\
-\hat\Omega(\theta) & -\bar g(\theta)
\end{pmatrix}.
\]
Since signs do not affect nonsingularity, it suffices to consider
$
\tilde J_n(\theta)
=
\begin{pmatrix}
\bar g(\theta)^{\prime} & 1 \\
\hat\Omega(\theta) & \bar g(\theta)
\end{pmatrix}.
$
Permute rows and columns to place the scalar block first. Let $P$ be the permutation matrix
that swaps the $(q+1)$st coordinate with the first $q$ coordinates, so that
$
P\,\tilde J_n(\theta)\,P^{\prime}
=
\begin{pmatrix}
1 & \bar g(\theta)^{\prime} \\
\bar g(\theta) & \hat\Omega(\theta)
\end{pmatrix}.
$
Because permutation preserves determinant up to sign, $\tilde J_n(\theta)$ is nonsingular
if and only if $P\tilde J_n(\theta)P^{\prime}$ is nonsingular. By Assumption~\ref{assumption:identification} and continuity,
$\hat\Omega(\theta)$ is nonsingular w.p.a.1 for $\theta$ near $\theta_0$. Using the Schur
complement with respect to $\hat\Omega(\theta)$ yields
$
\det\!\bigl(P\tilde J_n(\theta)P^{\prime}\bigr)
=
\det\!\bigl(\hat\Omega(\theta)\bigr)\,
\Bigl(1-\bar g(\theta)^{\prime}\hat\Omega(\theta)^{-1}\bar g(\theta)\Bigr).
$
Since $\hat\Omega(\theta)$ is positive definite w.p.a.1, the quadratic form
$\bar g(\theta)^{\prime}\hat\Omega(\theta)^{-1}\bar g(\theta)\ge 0$. Moreover, for $\theta$ in a
shrinking neighborhood of $\theta_0$, $\bar g(\theta)=O_p(n^{-1/2})$, so
$\bar g(\theta)^{\prime}\hat\Omega(\theta)^{-1}\bar g(\theta)=O_p(n^{-1})$, and hence
$
1-\bar g(\theta)^{\prime}\hat\Omega(\theta)^{-1}\bar g(\theta)=1+o_p(1).
$
Therefore the Schur complement is strictly positive w.p.a.1 for $n$ large, implying that
$J_n(\theta)$ is nonsingular in a neighborhood of $(0,\delta_0)$ w.p.a.1.

Therefore, by the Implicit Function Theorem there exists a neighborhood
$\mathcal{N}$ of $\theta_0$ and a unique continuously differentiable mapping
$
\theta \mapsto \bigl(\lambda(\theta),\delta(\theta)\bigr)
$
defined on $\mathcal{N}$ such that
$
\Psi_n\bigl(\theta,\lambda(\theta),\delta(\theta)\bigr)=0
\; \text{for all } \theta \in \mathcal{N}.
$
In particular, the solution $(\lambda,\delta)$ is locally unique in a neighborhood
of $(0,\delta_0)$, and depends smoothly on $\theta$. \qed

\subsection{Proof for Theorem \ref{thm:theta_consistency}} \label{subsec:pf_thm-theta-consistency}
By Theorem \ref{thm:uniform_conv_profiled}, the profiled criterion $L_n(\theta)$ converges uniformly in probability to a deterministic limit function $L(\theta)$ on $\Theta$, that is,
$
\sup_{\theta \in \Theta} |L_n(\theta) - L(\theta)| \xrightarrow{p} 0.
$
By Assumption \ref{assumption:identification}, the population criterion $L(\theta)$ is uniquely minimized at $\theta_0$. Under Assumption \ref{assumption:theta_compact}, the conditions of the standard argmin theorem for M-estimators 
are satisfied. Therefore, any measurable selection
$
\hat{\theta} \in \arg\min_{\theta \in \Theta} L_n(\theta)
$
satisfies
$
\hat{\theta} \xrightarrow{p} \theta_0.
$
\qed

\subsection{Proof for Corollary \ref{cor:mult_consistency}}\label{subsec:pf_cor-mult-consistency}

By Theorem \ref{thm:theta_consistency}, $\hat{\theta} \xrightarrow{p} \theta_0$. Lemma \ref{lem:exist_unique} implies the mapping 
$\theta \mapsto (\lambda(\theta),\delta(\theta))$ is continuous at $\theta_0$, the Continuous Mapping Theorem implies
$
(\lambda(\hat{\theta}),\delta(\hat{\theta}))
\xrightarrow{p}
(\lambda(\theta_0),\delta(\theta_0)).
$
By definition, $(\hat{\lambda},\hat{\delta}) = (\lambda(\hat{\theta}),\delta(\hat{\theta}))$, which yields the result.
\qed

\subsection{Proof for Theorem \ref{thm:lambda_delta_rates}}\label{subsec:pf_thm-lambda-delta-rates}

Because $(\hat\lambda(\hat{\theta}),\hat\delta(\hat{\theta}))$ satisfies 
$
\Psi_n(\hat\theta,\hat\lambda(\hat{\theta}),\hat\delta(\hat{\theta}))=0
$
and $\hat\theta\xrightarrow{p}\theta_0$, it suffices to expand 
$
\Psi_n(\theta,\lambda(\theta),\delta(\theta))
$
in $(\lambda(\theta),\delta(\theta))$ around $(0,\delta_0)$ with $\theta$ in a shrinking neighborhood of $\theta_0$. 
For notational simplicity, we suppress the explicit profiling dependence on $\theta$ in what follows and write
$(\hat\lambda,\hat\delta)$ for $(\hat\lambda(\hat\theta),\hat\delta(\hat\theta))$, and $(\lambda,\delta)$ for
$(\lambda(\theta),\delta(\theta))$ whenever no confusion arises.

Write
$
s_i(\theta,\lambda,\delta)=\frac{1}{\gamma+1}-\gamma\delta-\gamma\lambda^{\prime}g_i(\theta)$ and $
w_i(\theta,\lambda,\delta)=s_i(\theta,\lambda,\delta)^{1/\gamma}.
$
By Lemma~\ref{lem:exist_unique}, with probability approaching one, $(\hat\lambda,\hat\delta)$ lies in a neighborhood
of $(0,\delta_0)$ on which $\inf_i s_i(\hat\theta,\lambda,\delta)\ge c>0$.
Hence $w_i(\theta,\lambda,\delta)$ is twice continuously differentiable in $(\lambda,\delta)$ on that neighborhood, and
its first derivatives are Lipschitz (uniformly in $i$) on the same set.

\paragraph{Step 1: Mean-value expansion and remainder bound.}
Apply a mean-value expansion in $(\lambda,\delta)$ around $(0,\delta_0)$:
\[
w_i(\hat\theta,\hat\lambda,\hat\delta)
=
w_i(\hat\theta,0,\delta_0)
+
\frac{\partial w_i}{\partial\delta}\Big|_{(\hat\theta,\tilde\lambda,\tilde\delta)}(\hat\delta-\delta_0)
+
\frac{\partial w_i}{\partial\lambda^{\prime}}\Big|_{(\hat\theta,\tilde\lambda,\tilde\delta)}\hat\lambda,
\]
for some intermediate $(\tilde\lambda,\tilde\delta)$ on the line segment between $(0,\delta_0)$ and
$(\hat\lambda,\hat\delta)$. Since $w_i(\hat\theta,0,\delta_0)=1$ by the choice of $\delta_0$, define the remainder
\[
r_{i,n}
:=
\Big(\frac{\partial w_i}{\partial\delta}\Big|_{(\hat\theta,\tilde\lambda,\tilde\delta)}
      -\frac{\partial w_i}{\partial\delta}\Big|_{(\hat\theta,0,\delta_0)}\Big)(\hat\delta-\delta_0)
+
\Big(\frac{\partial w_i}{\partial\lambda^{\prime}}\Big|_{(\hat\theta,\tilde\lambda,\tilde\delta)}
      -\frac{\partial w_i}{\partial\lambda^{\prime}}\Big|_{(\hat\theta,0,\delta_0)}\Big)\hat\lambda.
\]
Because the derivatives are Lipschitz on the neighborhood (uniformly in $i$),
\begin{equation}\label{eq:r_bound_pre}
\max_{1\le i\le n}|r_{i,n}|
=
O_p\bigl(\|\hat\lambda\|^2+|\hat\delta-\delta_0|^2\bigr).
\end{equation}
Moreover, by smoothness in $\theta$ and $\hat\theta\to_p\theta_0$,
\[
\frac{\partial w_i}{\partial\delta}\Big|_{(\hat\theta,0,\delta_0)}=-1+o_p(1),
\qquad
\frac{\partial w_i}{\partial\lambda^{\prime}}\Big|_{(\hat\theta,0,\delta_0)}=-g_i(\hat\theta)^{\prime}+o_p(1),
\]
uniformly in $i$ on events of probability approaching one. Therefore,
\begin{equation}\label{eq:w_linear}
w_i(\hat\theta,\hat\lambda,\hat\delta)
=
1-(\hat\delta-\delta_0)-\hat\lambda^{\prime}g_i(\hat\theta)
+\rho_{i,n}+r_{i,n},
\end{equation}
where $\max_i|\rho_{i,n}|=o_p(1)\bigl(\|\hat\lambda\|+|\hat\delta-\delta_0|\bigr)$ and
$r_{i,n}$ satisfies \eqref{eq:r_bound_pre}.

\paragraph{Step 2: Moment constraint yields the rate and influence function for $\hat\lambda$.}
Next plug \eqref{eq:w_linear} into $\Psi_{n,2}(\hat\theta,\hat\lambda,\hat\delta)=0$:
\begin{equation} \label{eq:moment_constraint}
\begin{split}
0
&=
\frac{1}{n}\sum_{i=1}^n w_i(\hat\theta,\hat\lambda,\hat\delta)\,g_i(\hat\theta) \\
&=
\bar g(\hat\theta)
-(\hat\delta-\delta_0)\bar g(\hat\theta)
-\hat\Omega(\hat\theta)\hat\lambda
+\tilde\rho_n+\tilde r_n,
\end{split}
\end{equation}
where $\hat\Omega(\hat\theta):=n^{-1}\sum_i g_i(\hat\theta)g_i(\hat\theta)^{\prime}$,
$\tilde\rho_n:=n^{-1}\sum_i \rho_{i,n}g_i(\hat\theta)$, and
$\tilde r_n:=n^{-1}\sum_i r_{i,n}g_i(\hat\theta)$.
Under Assumption \ref{assumption:smooth_envolope} and the definitions above,
$
\tilde\rho_n=o_p(1)\bigl(\|\hat\lambda\|+|\hat\delta-\delta_0|\bigr),
\;
\tilde r_n=O_p\bigl(\|\hat\lambda\|^2+|\hat\delta-\delta_0|^2\bigr).
$
Since $\bar g(\hat\theta)=O_p(n^{-1/2})$ and $\hat\delta-\delta_0=o_p(1)$ by Corollary \ref{cor:mult_consistency},
the term $(\hat\delta-\delta_0)\bar g(\hat\theta)=o_p(n^{-1/2})$. Hence,
\begin{equation}\label{eq:lambda_key}
\hat\Omega(\hat\theta)\hat\lambda
=
\bar g(\hat\theta)
+o_p(n^{-1/2})
+o_p(1)\bigl(\|\hat\lambda\|+|\hat\delta-\delta_0|\bigr)
+O_p\bigl(\|\hat\lambda\|^2+|\hat\delta-\delta_0|^2\bigr).
\end{equation}
Since $\hat\Omega(\hat\theta)\xrightarrow{p} \Omega_0\succ0$, we have $\|\hat\Omega(\hat\theta)^{-1}\|=O_p(1)$. Thus,
multiplying \eqref{eq:lambda_key} by $\hat\Omega(\hat\theta)^{-1}$ and taking norms yields
\[
\|\hat\lambda\|
\le 
C\Big(\|\bar g(\hat\theta)\|+o_p(n^{-1/2})
+o_p(1)\bigl(\|\hat\lambda\|+|\hat\delta-\delta_0|\bigr)
+O_p\bigl(\|\hat\lambda\|^2+|\hat\delta-\delta_0|^2\bigr)\Big)
\]
for some constant $C>0$ on an event of probability approaching one.

By Lemma~\ref{lem:exist_unique}, with probability approaching one the solution
$(\hat\lambda,\hat\delta)$ lies in a fixed neighborhood of $(0,\delta_0)$,
so that $\|\hat\lambda\|+|\hat\delta-\delta_0|\le \kappa$ for some $\kappa>0$.
In particular, since $x^2 \le \kappa x$ whenever $0\le x\le \kappa$,
$
\|\hat\lambda\|^2+|\hat\delta-\delta_0|^2
\le
\kappa\bigl(\|\hat\lambda\|+|\hat\delta-\delta_0|\bigr).
$
Hence the quadratic remainder term in \eqref{eq:lambda_key}
is bounded by a constant multiple of
$\|\hat\lambda\|+|\hat\delta-\delta_0|$
on this neighborhood.

Combining the bounds above with \eqref{eq:lambda_key} and multiplying by
$\|\hat\Omega(\hat\theta)^{-1}\|$, which is $O_p(1)$ since
$\hat\Omega(\hat\theta)\to_p\Omega_0\succ0$, we obtain on an event of probability
approaching one
\[
\|\hat\lambda\|
\le
C\|\bar g(\hat\theta)\|
+
C\,o_p(1)\bigl(\|\hat\lambda\|+|\hat\delta-\delta_0|\bigr)
+
C\,\kappa\bigl(\|\hat\lambda\|+|\hat\delta-\delta_0|\bigr),
\]
for some deterministic constant $C>0$. Collecting the coefficients multiplying $\|\hat\lambda\|$ into
$
\eta_n := C\,o_p(1) + C\kappa,
$
we can rewrite the inequality as
$
\|\hat\lambda\|
\le
C\|\bar g(\hat\theta)\|
+
\eta_n \|\hat\lambda\|
+
C^{\prime}|\hat\delta-\delta_0|,
$
where $C^{\prime}>0$ is another finite constant absorbing fixed multiplicative factors.
Because $o_p(1)\to 0$ and $\kappa$ is fixed and can be chosen sufficiently small
(by restricting to the local solution), we have $\eta_n=o_p(1)$.
Hence for $n$ sufficiently large, $\eta_n<1/2$ w.p.a.1.
Rearranging gives
$
(1-\eta_n)\|\hat\lambda\|
\le
C\|\bar g(\hat\theta)\|
+
C^{\prime}|\hat\delta-\delta_0|,
$
and therefore
$
\|\hat\lambda\|
\le
\frac{C}{1-\eta_n}
\bigl(\|\bar g(\hat\theta)\|+|\hat\delta-\delta_0|\bigr).
$
Since $\hat{\delta}-\delta_{0}=o_p(1)$,  $(1-\eta_n)^{-1}=O_p(1)$ and $\bar g(\hat\theta)=O_p(n^{-1/2})$,
this implies $\hat\lambda=O_p(n^{-1/2})$. Specifically, using
$
\bar g(\hat\theta)=\bar g(\theta_0)+G_0(\hat\theta-\theta_0)+o_p(n^{-1/2}),
\;
\hat\Omega(\hat\theta)=\hat\Omega(\theta_0)+o_p(1),
$
we obtain the first-order representation of \eqref{eq:lambda_key}
$
\hat\lambda
=
\hat\Omega(\theta_0)^{-1}\bar g(\theta_0)+o_p(n^{-1/2}),
$
so $\sqrt n\,\hat\lambda \xrightarrow{d} N(0,\Omega_0^{-1})$ by the CLT for $\sqrt n\,\bar g(\theta_0)$.

\paragraph{Step 3: Probability constraint yields the rate for $\hat{\delta}-\delta_{0}$.}
Plug \eqref{eq:w_linear} into $\Psi_{n,1}(\hat\theta,\hat\lambda,\hat\delta)=0$:
\[
0
=
\frac{1}{n}\sum_{i=1}^n \bigl[1-(\hat\delta-\delta_0)-\hat\lambda^{\prime}g_i(\hat\theta)+\rho_{i,n}+r_{i,n}\bigr]-1
=
-(\hat\delta-\delta_0)-\hat\lambda^{\prime}\bar g(\hat\theta)+\bar\rho_n+\bar r_n,
\]
where $\bar\rho_n:=n^{-1}\sum_i\rho_{i,n}$ and $\bar r_n:=n^{-1}\sum_i r_{i,n}$.
Hence,
\begin{equation}\label{eq:delta_from_A_revised}
\hat\delta-\delta_0
=
-\hat\lambda^{\prime}\bar g(\hat\theta)+\bar\rho_n+\bar r_n.
\end{equation}
By construction,
$
\bar\rho_n=o_p(1)\bigl(\|\hat\lambda\|+|\hat\delta-\delta_0|\bigr),
\;
\bar r_n=O_p\bigl(\|\hat\lambda\|^2+|\hat\delta-\delta_0|^2\bigr).
$ By Step~2 we have
$
\hat\lambda = O_p(n^{-1/2}),
\;
\bar g(\hat\theta)=O_p(n^{-1/2}).
$
Therefore the leading term satisfies
$
\hat\lambda^{\prime}\bar g(\hat\theta)
=
O_p(n^{-1/2}) \cdot O_p(n^{-1/2})
=
O_p(n^{-1}).
$ Next consider the remainder terms in \eqref{eq:delta_from_A_revised}:
$
\bar\rho_n
=
o_p(1)\bigl(\|\hat\lambda\|+|\hat\delta-\delta_0|\bigr)
=
o_p(n^{-1/2}) + o_p(1)\,|\hat\delta-\delta_0|,
$
and
$
\bar r_n
=
O_p\bigl(\|\hat\lambda\|^2+|\hat\delta-\delta_0|^2\bigr)
=
O_p(n^{-1}) + O_p\bigl(|\hat\delta-\delta_0|^2\bigr).
$ Substituting these bounds into \eqref{eq:delta_from_A_revised} gives
$
\hat\delta-\delta_0
=
-\hat\lambda^{\prime}\bar g(\hat\theta)
+ O_p(n^{-1})
+ o_p(n^{-1/2})
+ o_p(1)\,|\hat\delta-\delta_0|
+ O_p\bigl(|\hat\delta-\delta_0|^2\bigr).
$

Since $|\hat\delta-\delta_0|=o_p(1)$ by Corollary \ref{cor:mult_consistency}, the terms
$o_p(1)\,|\hat\delta-\delta_0|$ and $O_p(|\hat\delta-\delta_0|^2)$
are of smaller order than $|\hat\delta-\delta_0|$ itself and therefore negligible
relative to the leading terms on the right-hand side.
Consequently,
$
\hat\delta-\delta_0
=
-\hat\lambda^{\prime}\bar g(\hat\theta)
+ O_p(n^{-1})
+ o_p(n^{-1/2}),
$
and
$
\hat\lambda^{\prime}\bar g(\hat\theta)=O_p(n^{-1}).
$
Hence all terms on the right-hand side are of order $O_p(n^{-1})$,
which implies
$
\hat\delta-\delta_0 = O_p(n^{-1}).
$
\qed

\subsection{Proof for Theorem \ref{thm:delta_limit}}\label{subsec:pf_thm-delta-limit}
Define
$
t_i(\theta):=(\hat\delta-\delta_0)+\hat\lambda^{\prime}g_i(\theta),
\;
\text{so that}\;
w_i=(1-\gamma t_i)^{1/\gamma}
$
because $\frac{1}{\gamma+1}-\gamma\hat\delta-\gamma\hat\lambda^{\prime}g_i(\theta)
=1-\gamma\big((\hat\delta-\delta_0)+\hat\lambda^{\prime}g_i(\theta)\big)$ when $\delta_0=-1/(\gamma+1)$. Note that at the population solution under correct specification, we have
$\lambda_0=0$ and $\delta=\delta_0$, and hence
$
t_i(\theta_0;\lambda_0,\delta_0)
=
(\delta_0-\delta_0)+\lambda_0^{\prime} g_i(\theta_0)
=
0,
$
so that $t=0$ is the appropriate expansion point. By Theorem~\ref{thm:lambda_delta_rates}, $t_i(\theta_0)=O_p(n^{-1/2})$, and hence
$t_i(\theta_0)\xrightarrow{p} 0$. Therefore a Taylor expansion of $(1-\gamma t)^{1/\gamma}$
around $t=0$ is justified. Moreover,
$
t_i^3 = O_p(n^{-3/2}) = o_p(n^{-1/2}),
$
so the cubic remainder term is negligible at the $\sqrt{n}$-scale relevant for
first-order asymptotics. Consequently,
\begin{equation}\label{eq:w_i}
w_i=
(1-\gamma t_{i})^{1/\gamma}
=
1 - t_{i} + \frac{1-\gamma}{2}\, t_{i}^2 + O(t_{i}^3),
\end{equation}
where the expansion holds uniformly on events of probability approaching one
under the stated remainder control.

Apply the probability constraint $\Psi_{n,1}(\hat\theta,\hat\lambda,\hat\delta)=0$:
$
0
=
\frac{1}{n}\sum_{i=1}^n w_i-1
=\\
\frac{1}{n}\sum_{i=1}^n
\left(
1-t_i+\frac{1-\gamma}{2}t_i^{2}
\right)
-1
+O\!\left(\frac{1}{n}\sum_{i=1}^n |t_i|^3\right).
$
Cancelling the constants yields\footnote{In the present setting,
$
t_i(\theta_0)
=
(\hat\delta-\delta_0)
+
\hat\lambda^{\prime} g_i(\theta_0),
$
where $\hat\lambda=O_p(n^{-1/2})$ and $\hat\delta-\delta_0=O_p(n^{-1})$ by
Theorem~\ref{thm:lambda_delta_rates}.
Hence
$
|t_i|
\le
|\hat\delta-\delta_0|
+
|\hat\lambda^{\prime} g_i(\theta_0)|.
$
Since $g_i(\theta_0)=O_p(1)$ and $E\|g(Z,\theta_0)\|^3<\infty$ under the stated moment
conditions, we obtain
\[
|t_i|^3
\lesssim
|\hat\delta-\delta_0|^3
+
|\hat\lambda^{\prime} g_i(\theta_0)|^3
=
O_p(n^{-3})
+
O_p(n^{-3/2})\,\|g_i(\theta_0)\|^3.
\]
Averaging over $i$ yields
\[
\frac{1}{n}\sum_{i=1}^n |t_i|^3
=
O_p(n^{-3})
+
O_p(n^{-3/2})
\left(
\frac{1}{n}\sum_{i=1}^n \|g_i(\theta_0)\|^3
\right).
\]
By the law of large numbers,
$
\frac{1}{n}\sum_{i=1}^n \|g_i(\theta_0)\|^3 = O_p(1),
$
and therefore
$
\frac{1}{n}\sum_{i=1}^n |t_i|^3
=
O_p(n^{-3/2}).
$
In particular, this remainder term is $o_p(n^{-1/2})$ and hence negligible
at the $\sqrt{n}$-scale relevant for the first-order asymptotic expansion.}
\begin{equation}\label{eq:expanded_prob_constraint}
\hat\delta-\delta_0
=
-\hat\lambda^{\prime}\bar g(\hat\theta)
+\frac{1-\gamma}{2}\cdot \frac{1}{n}\sum_{i=1}^n t_i(\hat\theta)^2
+o_p(n^{-1}).
\end{equation}
Since $\hat\delta-\delta_0=O_p(n^{-1})$ and $\hat\lambda^{\prime}g_i(\hat\theta)=O_p(n^{-1/2})$,
we have
$
\frac{1}{n}\sum_{i=1}^n t_i(\hat\theta)^2
=
\hat\lambda^{\prime}\hat\Omega(\hat\theta)\hat\lambda
+o_p(n^{-1}).
$ The first term in \ref{eq:expanded_prob_constraint}, recall \ref{eq:lambda_key}, where
$
\tilde\rho_n
=
o_p(1)\bigl(O_p(n^{-1/2})+O_p(n^{-1})\bigr)
=
o_p(n^{-1/2}),
$
and
$
\tilde r_n
=
O_p\bigl(O_p(n^{-1})+O_p(n^{-2})\bigr)
=
O_p(n^{-1})
=
o_p(n^{-1/2}).
$
Hence both $\tilde\rho_n$ and $\tilde r_n$ are negligible at the $\sqrt{n}$-scale in the moment constraint expansion and
$
    \hat{\Omega}(\hat{\theta})\hat{\lambda}=\bar{g}(\hat{\theta})+o_p(n^{-1/2}).
$
Thus, $\hat{\lambda}^{\prime}\bar{g}(\theta)=\hat{\lambda}^{\prime}\hat{\Omega}(\hat{\theta})\hat{\lambda}+o_p(n^{-1})$. Therefore,
$\hat\delta-\delta_0
=
\left(\frac{1-\gamma}{2}-1\right)\hat\lambda^{\prime}\hat\Omega(\hat\theta)\hat\lambda
+o_p(n^{-1})
=
-\frac{\gamma+1}{2}\,\hat\lambda^{\prime}\hat\Omega(\theta_0)\hat\lambda
+o_p(n^{-1}).
$
Using $\hat\lambda=\hat\Omega(\theta_0)^{-1}\bar g(\theta_0)+o_p(n^{-1/2})$ and
$\hat\Omega(\theta_0)\xrightarrow{p}\Omega_0$ yields
\begin{align*}
n(\hat\delta-\delta_0)
&=
-\frac{\gamma+1}{2}\,\Big(\sqrt n\,\bar g(\theta_0)\Big)^{\prime}\Omega_0^{-1}\Big(\sqrt n\,\bar g(\theta_0)\Big)
+o_p(1),\\
&=-\frac{\gamma+1}{2}\Big(\Omega_{0}^{-1/2}\sqrt{n}\bar{g}(\theta_{0})\Big)^{\prime}\Big(\Omega_{0}^{-1/2}\sqrt{n}\bar{g}(\theta_{0})\Big)+o_p(1)
\xrightarrow{d}-\frac{\gamma+1}{2}\chi_{q}^{2},
\end{align*}
because $\sqrt n\,\bar g(\theta_0)\xrightarrow{d} N(0,\Omega_0)$. \qed

\subsection{Proof for Theorem \ref{thm:theta_CLT}}\label{subsec:pf_thm-theta-CLT}
By the envelope theorem, the first-order condition for the profiled problem can be written as the
$\theta$-derivative of the Lagrangian evaluated at the inner optimizer:
\[
0
=
\frac{\partial}{\partial\theta}
\left[
\lambda^{\prime}\sum_{i=1}^n \pi_i g_i(\theta)
\right]_{\ (\theta,\lambda,\delta)=(\hat\theta,\hat\lambda(\hat\theta),\hat\delta(\hat\theta))}
=
\sum_{i=1}^n \hat\pi_i(\hat\theta)\,G_i(\hat\theta)^{\prime}\hat\lambda,
\]
where $G_i(\theta):=\partial g_i(\theta)/\partial\theta^{\prime}$.
Divide by $n$:
$
0
=
\frac{1}{n}\sum_{i=1}^n \hat\pi_i(\hat\theta)\,G_i(\hat\theta)^{\prime}\hat\lambda.
$ Since $\hat\pi_i(\hat\theta)=\frac{1}{n}w_i(\hat\theta,\hat\lambda,\hat\delta)$, we may write
$
0=\frac{1}{n}\sum_{i=1}^n \frac{1}{n}w_i(\hat\theta,\hat\lambda,\hat\delta)\,G_i(\hat\theta)^{\prime}\hat\lambda
=\Big(\frac{1}{n}\sum_{i=1}^n w_i(\hat\theta,\hat\lambda,\hat\delta)\,G_i(\hat\theta)\Big)^{\prime}\hat\lambda\cdot \frac{1}{n}.
$
Multiplying both sides by $n$ (which does not change the equality) yields
$
0=\Big(\frac{1}{n}\sum_{i=1}^n w_i(\hat\theta,\hat\lambda,\hat\delta)\,G_i(\hat\theta)\Big)^{\prime}\hat\lambda.
$
Decompose $w_i=1+(w_i-1)$ to obtain
$
0=\Big(\frac{1}{n}\sum_{i=1}^n G_i(\hat\theta)\Big)^{\prime}\hat\lambda
+\Big(\frac{1}{n}\sum_{i=1}^n (w_i-1)G_i(\hat\theta)\Big)^{\prime}\hat\lambda.
$
By the weight expansion, $w_i-1=O_p(n^{-1/2})$ for each fixed $i$\footnote{From the first-order expansion of the weights,
$
w_i(\hat\theta,\hat\lambda,\hat\delta)
=
1-(\hat\delta-\delta_0)
-\hat\lambda^{\prime} g_i(\hat\theta)
+r_{i,n},
$
so that
$
w_i-1
=
-(\hat\delta-\delta_0)
-\hat\lambda^{\prime} g_i(\hat\theta)
+r_{i,n}.
$
By the previously established rates
$
\hat\lambda = O_p(n^{-1/2}),
\;
\hat\delta-\delta_0 = O_p(n^{-1}),
\;
r_{i,n}=O_p(\|\hat\lambda\|^2+|\hat\delta-\delta_0|^2)=O_p(n^{-1}),
$
and since for each fixed $i$ we have $g_i(\hat\theta)=O_p(1)$,
it follows that
$
w_i-1 = O_p(n^{-1/2})
\; \text{for each fixed } i.
$
Hence the natural pointwise stochastic rate is
$
w_i = 1 + O_p(n^{-1/2})
\; \text{for each fixed } i.
$} and under Assumption \ref{assumption:smooth_envolope}, we have $(1/n)\sum_{i=1}^n\|G_i(\hat\theta)\|=O_p(1)$.
Since $\hat\lambda=O_p(n^{-1/2})$, it follows that
$
\Big\|\Big(\frac{1}{n}\sum_{i=1}^n (w_i-1)G_i(\hat\theta)\Big)^{\prime}\hat\lambda\Big\|
\;\le\;
\|\hat\lambda\|\cdot \frac{1}{n}\sum_{i=1}^n |w_i-1|\,\|G_i(\hat\theta)\|
=
O_p(n^{-1/2})\cdot O_p(n^{-1/2})
=
O_p(n^{-1}),
$
and in particular this term is $o_p(n^{-1/2})$. Therefore,
\begin{equation}\label{eq:env_foc}
0=\Big(\frac{1}{n}\sum_{i=1}^n G_i(\hat\theta)\Big)^{\prime}\hat\lambda + o_p(n^{-1/2}).
\end{equation}

By the uniform law of large numbers (from the envelope condition) and consistency $\hat\theta\xrightarrow{p}\theta_0$,
$
\frac{1}{n}\sum_{i=1}^n G_i(\hat\theta)\xrightarrow{p} G_0:=E[G(Z,\theta_0)].
$
Moreover, by the first-order expansion for the multiplier,
\begin{equation}\label{eq:lambda_linrep}
\hat\lambda=\hat\Omega(\theta_0)^{-1}\,\bar g(\theta_0)+o_p(n^{-1/2}),
\end{equation}
where $\hat\Omega(\theta_0):=\frac{1}{n}\sum_{i=1}^n g_i(\theta_0)g_i(\theta_0)^{\prime}$ and $\hat\Omega(\theta_0)\xrightarrow{p}\Omega_0:=E[g(Z,\theta_0)g(Z,\theta_0)^{\prime}]$.
Substituting \eqref{eq:lambda_linrep} into \eqref{eq:env_foc} and using
$\frac{1}{n}\sum_i G_i(\hat\theta)=G_0+o_p(1)$ yields
\begin{equation}\label{eq:env_foc2}
0
=
\Big(G_0^{\prime}+o_p(1)\Big)
\Big(\Omega_0^{-1}\bar g(\theta_0)+o_p(n^{-1/2})\Big)
+o_p(n^{-1/2})
=
G_0^{\prime}\Omega_0^{-1}\bar g(\theta_0)+o_p(n^{-1/2}).
\end{equation}
To express the result in terms of $\bar g(\hat\theta)$, use the mean-value expansion
$
\bar g(\hat\theta)=\bar g(\theta_0)+G_0(\hat\theta-\theta_0)+o_p(n^{-1/2}),
$
so that $\bar g(\theta_0)=\bar g(\hat\theta)-G_0(\hat\theta-\theta_0)+o_p(n^{-1/2})$.
Plugging this into \eqref{eq:env_foc2} gives
$
0
=
G_0^{\prime}\Omega_0^{-1}\Big(\bar g(\hat\theta)-G_0(\hat\theta-\theta_0)\Big)+o_p(n^{-1/2})
=
G_0^{\prime}\Omega_0^{-1}\bar g(\hat\theta)
-
G_0^{\prime}\Omega_0^{-1}G_0(\hat\theta-\theta_0)
+o_p(n^{-1/2}),
$
or equivalently,
\begin{equation}\label{eq:theta_key}
G_0^{\prime}\Omega_0^{-1}G_0(\hat\theta-\theta_0)
=
G_0^{\prime}\Omega_0^{-1}\bar g(\hat\theta)
+o_p(n^{-1/2}).
\end{equation}
Since $G_0$ has full column rank and $\Omega_0\succ0$, the matrix
$G_0^{\prime}\Omega_0^{-1}G_0$ is nonsingular, and solving \eqref{eq:theta_key} yields the linear representation
$
\hat\theta-\theta_0
=
\big(G_0^{\prime}\Omega_0^{-1}G_0\big)^{-1}G_0^{\prime}\Omega_0^{-1}\bar g(\hat\theta)
+o_p(n^{-1/2}).
$
Using again $\bar g(\hat\theta)=\bar g(\theta_0)+o_p(n^{-1/2})$, we obtain the equivalent form
\begin{equation}\label{eq:theta_if}
\sqrt{n}(\hat\theta-\theta_0)
=
\big(G_0^{\prime}\Omega_0^{-1}G_0\big)^{-1}G_0^{\prime}\Omega_0^{-1}\sqrt{n}\,\bar g(\theta_0)
+o_p(1).
\end{equation}

By the multivariate central limit theorem,
$
\sqrt{n}\,\bar g(\theta_0)\xrightarrow{d}N(0,\Omega_0).
$
Premultiplying by the constant matrix
$
A:=\big(G_0^{\prime}\Omega_0^{-1}G_0\big)^{-1}G_0^{\prime}\Omega_0^{-1},
$
and applying Slutsky's theorem to \eqref{eq:theta_if} yields
$
\sqrt{n}(\hat\theta-\theta_0)\xrightarrow{d} N\!\Big(0,\;A\,\Omega_0\,A^{\prime}\Big)
=
N\!\Big(0,\;\big(G_0^{\prime}\Omega_0^{-1}G_0\big)^{-1}\Big).
$
\qed

\subsection{Proof for Lemma \ref{lem:moment_eq_second}} \label{subsec:pf_lem-moment-eq-second}
Recall \eqref{eq:w_i}:
$
w_i=1-t_i+\frac{1-\gamma}{2} t_i^2 + O(t_i^3),
$
where $t_{i}=O_p(n^{-1/2})$. Plugging into the weighted moment constraint gives
$
0=\frac{1}{n}\sum_{i=1}^n w_i g_i(\hat\theta)
=
\bar g(\hat\theta)
-\frac{1}{n}\sum_{i=1}^n t_i g_i(\hat\theta)
+\frac{1-\gamma}{2} \frac{1}{n}\sum_{i=1}^n t_i^2 g_i(\hat\theta)
+\frac{1}{n}\sum_{i=1}^n O(t_i^3)g_i(\hat\theta).
$
Since $t_i=(\hat\delta-\delta_0)+\hat\lambda^{\prime}g_i(\hat\theta)$ and
$\hat\delta-\delta_0=O_p(n^{-1})$, we have
$
\frac{1}{n}\sum_{i=1}^n t_i g_i(\hat\theta)
=
(\hat\delta-\delta_0)\bar g(\hat\theta)+\hat\Omega(\hat\theta)\hat\lambda.
$
The term $(\hat\delta-\delta_0)\bar g(\hat\theta)=O_p(n^{-3/2})$ is negligible at order $n^{-1}$.
For the remainder, $\frac{1}{n}\sum |t_i|^3=O_p(n^{-3/2})$, so under the moment bounds,
$\frac{1}{n}\sum O(t_i^3)g_i(\hat\theta)=o_p(n^{-1})$, yielding \eqref{eq:moment_second}.
\qed

\subsection{Proof for Theorem \ref{thm:lambda_second}}\label{subsec:pf_thm-lambda-second}

Theorem~\ref{thm:theta_consistency} and \eqref{eq:moment_second} imply
\begin{equation}\label{eq:lambda_eq_star}
\hat\Omega(\theta_0)\hat\lambda
=
\bar g(\theta_0)
+
\frac{1-\gamma}{2}\cdot \frac1n\sum_{i=1}^n
t_i(\theta_0,\hat\lambda,\hat\delta)^2\, g_i(\theta_0)
+ o_p(n^{-1}).
\end{equation}
Moreover,
$
t_i(\theta_0,\hat\lambda,\hat\delta)
=
\hat\lambda^{\prime} g_i(\theta_0)
+ O_p(n^{-1}),
$
so
$
\frac1n\sum_{i=1}^n t_i^2 g_i(\theta_0)
=
\frac1n\sum_{i=1}^n
(\hat\lambda^{\prime} g_i(\theta_0))^2 g_i(\theta_0)
+ o_p(n^{-1}).
$ Substituting the first-order representation
$
\hat\lambda
=
\hat\Omega(\theta_0)^{-1}\bar g(\theta_0)
+ o_p(n^{-1/2})
$
into the quadratic term yields, uniformly in $i$ on events of probability approaching one,
$
(\hat\lambda^{\prime} g_i(\theta_0))^2
=
\big((\hat\Omega(\theta_0)^{-1}\bar g(\theta_0))^{\prime} g_i(\theta_0)\big)^2
+ o_p(n^{-1}),
$
and therefore
\begin{equation}\label{eq:t2g_expand}
\frac1n\sum_{i=1}^n t_i^2 g_i(\theta_0)
=
\frac1n\sum_{i=1}^n
\big((\hat\Omega(\theta_0)^{-1}\bar g(\theta_0))^{\prime} g_i(\theta_0)\big)^2
g_i(\theta_0)
+ o_p(n^{-1}).
\end{equation}

Multiplying \eqref{eq:lambda_eq_star} by $\hat\Omega(\theta_0)^{-1}$ and collecting
all terms of order $n^{-1}$ gives
\begin{equation}\label{eq:lambda_second_hat}
\hat\lambda
=
\hat\Omega(\theta_0)^{-1}\bar g(\theta_0)
+
\frac{1}{n}B_{\lambda,n}(\gamma)
+
o_p(n^{-1}),
\end{equation}
where
\begin{equation}\label{eq:Blambda_hat}
B_{\lambda,n}(\gamma)
=
\frac{1-\gamma}{2}\,
\hat\Omega(\theta_0)^{-1}
\left[
\frac1n\sum_{i=1}^n
\big((\hat\Omega(\theta_0)^{-1}\sqrt n\,\bar g(\theta_0))^{\prime}
g_i(\theta_0)\big)^2
g_i(\theta_0)
\right].
\end{equation}

\noindent To verify the stochastic order of the bracketed term in \eqref{eq:Blambda_hat},
write
$
v_n := \hat\Omega(\theta_0)^{-1}\sqrt n\,\bar g(\theta_0),
$
so that the bracket equals
$
\frac{1}{n}\sum_{i=1}^n
\big(v_n^{\prime} g_i(\theta_0)\big)^2 g_i(\theta_0).
$
Using the inequality
$
|v_n^{\prime} g_i(\theta_0)|
\le \|v_n\|\,\|g_i(\theta_0)\|,
$
we obtain
$
\left\|
\frac{1}{n}\sum_{i=1}^n
\big(v_n^{\prime} g_i(\theta_0)\big)^2 g_i(\theta_0)
\right\|
\le
\|v_n\|^2
\cdot
\frac{1}{n}\sum_{i=1}^n \|g_i(\theta_0)\|^3.
$
Since $\hat\Omega(\theta_0)^{-1}=O_p(1)$ as $\hat{\Omega}\xrightarrow{p}\Omega_{0}\succ 0$ and
$\sqrt n\,\bar g(\theta_0)=O_p(1)$ by the central limit theorem,
it follows that $\|v_n\|=O_p(1)$ and hence $\|v_n\|^2=O_p(1)$.
Moreover, under Assumption \ref{assumption:smooth_envolope}, the law of large numbers yields
$
\frac{1}{n}\sum_{i=1}^n \|g_i(\theta_0)\|^3 = O_p(1).
$
Combining these bounds gives
$
\frac{1}{n}\sum_{i=1}^n
\big(v_n^{\prime} g_i(\theta_0)\big)^2 g_i(\theta_0)
=
O_p(1).
$
Hence the bracketed term in \eqref{eq:Blambda_hat} is $O_p(1)$. Thus, $B_{\lambda,n}(\gamma)=O_p(1)$.
\qed

\subsection{Proof for Theorem \ref{thm:theta_second}}\label{subsec:pf_thm-theta-second}
\textbf{Step 1: First-order condition for $\theta$.}

Let $\hat{\pi}_i(\theta)$ denote the implied probability weights at
$(\theta,\hat{\lambda}(\theta),\hat{\delta}(\theta))$.
The envelope first-order condition (FOC) for $\theta$ is

\begin{align*}
    0\overset{\text{FOC}}{=}\frac{\partial \mathcal{L}}{\partial \theta}\Bigg|_{\pi=\hat{\pi}(\theta),\lambda=\hat{\lambda}(\theta),\delta=\hat{\delta}(\theta)}
    &=\hat{\lambda}^{\prime}\sum_{i=1}^{n}\hat{\pi}_{i}(\hat{\theta})\frac{\partial g_{i}}{\partial \theta}\Bigg|_{\theta=\hat{\theta}}\\
    &=\sum_{i=1}^{n}\hat{\pi}_{i}(\hat{\theta})\frac{\partial g_{i}(\theta)}{\partial \theta}\Bigg|_{\theta=\hat{\theta}}^{\prime}\hat{\lambda}\\
    &=\sum_{i=1}^{n}\hat{\pi}_{i}(\hat{\theta})G_{i}(\hat{\theta})^{\prime}\hat{\lambda}.
\end{align*}

Since $\hat{\pi}_i(\hat{\theta})=\frac1n w_i(\hat{\theta},\hat{\lambda},\hat{\delta})$,
this condition can be written
$
0
=
\frac1n\sum_{i=1}^n
w_i(\hat{\theta},\hat{\lambda},\hat{\delta})
\,G_i(\hat{\theta})^{\prime}\hat{\lambda}.
$ Define
$
\bar G(\theta)=\frac1n\sum_{i=1}^n G_i(\theta).
$ Then the FOC becomes
$0
=
\bar G(\hat{\theta})^{\prime}\hat{\lambda}
+
R_{w,G},
$ where
$
R_{w,G}
=
\frac1n\sum_{i=1}^n
\big(w_i(\hat{\theta},\hat{\lambda},\hat{\delta})-1\big)
G_i(\hat{\theta})^{\prime}\hat{\lambda}.
$
\\

\textbf{Step 2: Order of the weight remainder.}

From the multiplier expansion we have $\hat{\lambda}=O_p(n^{-1/2}).$ Let $
\psi_\gamma(u):=\left(\frac{1}{\gamma+1}-\gamma u\right)^{1/\gamma}$, with $u=\delta+\lambda^{\prime}g_i(\theta)$. Under the smoothness conditions for the CRPD weights,
$
\frac1n\sum_{i=1}^n |w_i-1| = O_p(n^{-1/2}).
$ To see why, because the CRPD weights are smooth in $(\lambda,\delta)$, a first-order Taylor expansion around the population solution $(\lambda,\delta)=(0,\delta_0)$ yields $
w_i(\theta_0,\lambda,\delta)
=
w_i(\theta_0,0,\delta_0)
+
\psi_\gamma^{\prime}(u_0)\,\lambda^{\prime} g_i(\theta_0)
+
O\!\left(\|\lambda\|^2\right).
$\footnote{Fix $\theta=\theta_0$ and write the CRPD weight in terms of the scalar index
\[
u_i(\lambda,\delta):=\delta+\lambda^{\prime} g_i(\theta_0),
\qquad
w_i(\theta_0,\lambda,\delta)=\psi_\gamma\!\big(u_i(\lambda,\delta)\big),
\]
where $\psi_\gamma(u):=\left(\frac{1}{\gamma+1}-\gamma u\right)^{1/\gamma}$.
Let $(\lambda_0,\delta_0)=(0,\delta_0)$ denote the population solution and set $u_0:=u_i(0,\delta_0)=\delta_0$.

\medskip
\noindent\textbf{Step 1 (Taylor expansion in the scalar index).} Since $\psi_\gamma(\cdot)$ is differentiable at $u_0$, Taylor's theorem implies
$
\psi_\gamma(u)
=
\psi_\gamma(u_0)
+
\psi_\gamma^{\prime}(u_0)(u-u_0)
+
O\!\big((u-u_0)^2\big),
\; u\to u_0.
$

Applying this with $u=u_i(\lambda,\delta)$ gives

\[
w_i(\theta_0,\lambda,\delta)
=
w_i(\theta_0,0,\delta_0)
+
\psi_\gamma^{\prime}(u_0)\Big[(\delta-\delta_0)+\lambda^{\prime} g_i(\theta_0)\Big]
+
O\!\left(\Big|(\delta-\delta_0)+\lambda^{\prime} g_i(\theta_0)\Big|^2\right).
\]

\medskip
\noindent\textbf{Step 2 (Control the remainder).} Using $(a+b)^2\le 2a^2+2b^2$,
$
\Big((\delta-\delta_0)+\lambda^{\prime} g_i(\theta_0)\Big)^2
\le
2(\delta-\delta_0)^2
+
2\big(\lambda^{\prime} g_i(\theta_0)\big)^2.
$
Under the maintained rates $\lambda=O_p(n^{-1/2})$ and $\delta-\delta_0=O_p(n^{-1})$, and assuming
$E\|g(Z,\theta_0)\|^2<\infty$, we have
$
(\delta-\delta_0)^2=O_p(n^{-2})=o_p(\|\lambda\|^2),
\;
\big(\lambda^{\prime} g_i(\theta_0)\big)^2=O_p(\|\lambda\|^2).
$ Hence
$
O\!\left(\Big|(\delta-\delta_0)+\lambda^{\prime} g_i(\theta_0)\Big|^2\right)
=
O_p(\|\lambda\|^2).
$

\medskip
\noindent\textbf{Step 3 (Absorb the $\delta$ term).} Moreover,
$
\psi_\gamma^{\prime}(u_0)(\delta-\delta_0)=O_p(n^{-1})=o_p(\|\lambda\|),
$
so this term is of smaller order than the $\lambda^{\prime} g_i(\theta_0)$ term and can be absorbed into the
$O_p(\|\lambda\|^2)$ remainder at the scale of interest.

\medskip
\noindent\textbf{Step 4 (Simplified expansion).} Combining the above yields
$
w_i(\theta_0,\lambda,\delta)
=
w_i(\theta_0,0,\delta_0)
+
\psi_\gamma^{\prime}(u_0)\,\lambda^{\prime} g_i(\theta_0)
+
O_p(\|\lambda\|^2),
$
which is the desired form.}

Under correct specification, $\lambda_0=0$ and $\delta_0=u_0=-1/(\gamma+1)$, so $w_i(\theta_0,0,\delta_0)=\psi_\gamma(u_0)=1$ and $\psi_\gamma^{\prime}(u_0)=-1$. Hence
$
w_i(\theta_0,\lambda,\delta)
=
1
-\lambda^{\prime} g_i(\theta_0)
+
O\!\left(\|\lambda\|^2\right).
$ Hence
$
|w_i-1|
\le
C\,|\lambda^{\prime} g_i(\theta_0)| + O(\|\lambda\|^2)
$
for some constant $C>0$.

Since $\lambda=O_p(n^{-1/2})$, 
$
|w_i-1| = O_p(n^{-1/2})
$
for each $i$. Averaging across $i$ and using $E\|g(Z,\theta_0)\|<\infty$ gives
$
\frac1n\sum_{i=1}^n |w_i-1|
=
O_p(n^{-1/2}).
$ Moreover
$
\frac1n\sum_{i=1}^n \|G_i(\hat{\theta})\| = O_p(1).
$
Therefore
$
\|R_{w,G}\|
\le
\Big(\frac1n\sum_{i=1}^n |w_i-1|\,\|G_i(\hat{\theta})\|\Big)
\|\hat{\lambda}\|
=
O_p(n^{-1}).
$
Thus
$
0=\bar G(\hat{\theta})^{\prime}\hat{\lambda}+O_p(n^{-1}).
$ \\

\textbf{Step 3: Taylor expansion of $\bar G(\hat{\theta})$.}

Expand $\bar G(\hat{\theta})$ around $\theta_0$:
$
\bar G(\hat{\theta})
=
\bar G(\theta_0)
+
\bar H(\theta_0)(\hat{\theta}-\theta_0)
+
R_G,
$ where $
\bar H(\theta_0)=\frac1n\sum_{i=1}^n H_i(\theta_0),
\;
\|R_G\|=O_p(\|\hat{\theta}-\theta_0\|^2).
$ Since $\hat{\theta}-\theta_0=O_p(n^{-1/2})$,
the remainder satisfies
$
R_G=O_p(n^{-1}).
$ Substituting into the first-order condition gives
$
0
=
\Big(
\bar G(\theta_0)
+
\bar H(\theta_0)(\hat{\theta}-\theta_0)
\Big)^{\prime}\hat{\lambda}
+
O_p(n^{-1}).
$\\

\textbf{Step 4: Expansion of $\hat{\lambda}$.}

From Theorem \ref{thm:lambda_second},
$
\hat{\lambda}
=
\hat{\Omega}(\theta_0)^{-1}\bar g(\theta_0)
+
\frac1n B_{\lambda,n}(\gamma)
+
o_p(n^{-1}),
$ with $B_{\lambda,n}(\gamma)=O_p(1)$. Substituting this expansion gives
\begin{equation} \label{eq:expansion_multiplier}
0
=
\bar G(\theta_0)^{\prime}\hat{\Omega}(\theta_0)^{-1}\bar g(\theta_0)
+
\bar H(\theta_0)^{\prime}(\hat{\theta}-\theta_0)
\hat{\Omega}(\theta_0)^{-1}\bar g(\theta_0)
+
\frac1n\bar G(\theta_0)^{\prime}B_{\lambda,n}(\gamma)
+
o_p(n^{-1}),
\end{equation}
with $\frac{1}{n}\bar{H}(\theta_{0})^{\prime}(\hat{\theta}-\theta_{0})B_{\lambda,n}=O_p(n^{-3/2})=o_p(n^{-1})$.\\

\textbf{Step 5: Replace sample matrices with population limits.}

Since $\hat\theta-\theta_0=O_p(n^{-1/2})$, $\bar g(\theta_0)=O_p(n^{-1/2})$, $\hat\Omega(\theta_0)^{-1}=O_p(1)$,
and $\bar H(\theta_0)=O_p(1)$, we have
$
\bar H(\theta_0)^{\prime}(\hat\theta-\theta_0)\hat\Omega(\theta_0)^{-1}\bar g(\theta_0)
=
O_p(n^{-1}).
$ Hence we can absorb this term into the order-$n^{-1}$ remainder by defining a new random scalar
$\widetilde\Delta_{H,n}$ such that
$
\bar H(\theta_0)^{\prime}(\hat\theta-\theta_0)\hat\Omega(\theta_0)^{-1}\bar g(\theta_0)
=
\frac{1}{n}\widetilde\Delta_{H,n}
+
o_p(n^{-1}),
\; \widetilde\Delta_{H,n}=O_p(1).
$
Then \eqref{eq:expansion_multiplier} becomes
\begin{equation}\label{eq:envelope_FOC}
0
=
\bar G(\theta_0)^{\prime}\hat\Omega(\theta_0)^{-1}\bar g(\theta_0)
+
\frac{1}{n}\bar G(\theta_0)^{\prime}B_{\lambda,n}(\gamma)
+
\frac{1}{n}\widetilde\Delta_{H,n}
+
o_p(n^{-1}).
\end{equation}

Moreover, a first-order Taylor expansion of the sample moment yields
\begin{equation}\label{eq:expansion_g}
\bar g(\hat\theta)
=
\bar g(\theta_0)
+
\bar G(\theta_0)(\hat\theta-\theta_0)
+
O_p(\|\hat\theta-\theta_0\|^2)
=
\bar g(\theta_0)
+
\bar G(\theta_0)(\hat\theta-\theta_0)
+
O_p(n^{-1}).
\end{equation}
Premultiplying by $\bar G(\theta_0)^{\prime}\hat\Omega(\theta_0)^{-1}$ gives
\begin{align}\label{eq:premult}
\bar G(\theta_0)^{\prime}\hat\Omega(\theta_0)^{-1}\bar g(\hat\theta)
&=
\bar G(\theta_0)^{\prime}\hat\Omega(\theta_0)^{-1}\bar g(\theta_0)
+
\bar G(\theta_0)^{\prime}\hat\Omega(\theta_0)^{-1}\bar G(\theta_0)(\hat\theta-\theta_0)
+
O_p(n^{-1}).
\end{align}
Rearranging \eqref{eq:premult} yields the identity
\begin{equation}\label{eq:add-subtract}
\bar G(\theta_0)^{\prime}\hat\Omega(\theta_0)^{-1}\bar g(\theta_0)
=
\bar G(\theta_0)^{\prime}\hat\Omega(\theta_0)^{-1}\bar g(\hat\theta)
-
\bar G(\theta_0)^{\prime}\hat\Omega(\theta_0)^{-1}\bar G(\theta_0)(\hat\theta-\theta_0)
+
O_p(n^{-1}).
\end{equation}

Substituting \eqref{eq:add-subtract} into \eqref{eq:envelope_FOC} gives
\begin{align}\label{eq:envelope_FOC2}
0
&=
\bar G(\theta_0)^{\prime}\hat\Omega(\theta_0)^{-1}\bar g(\hat\theta)
-
\bar G(\theta_0)^{\prime}\hat\Omega(\theta_0)^{-1}\bar G(\theta_0)(\hat\theta-\theta_0)
+\frac{1}{n}\bar G(\theta_0)^{\prime}B_{\lambda,n}(\gamma)
+\frac{1}{n}\widetilde\Delta_{H,n}
+o_p(n^{-1}).
\end{align}

\medskip
\noindent\textbf{(i) First term.}
Since $\bar G(\theta_0)=G_0+O_p(n^{-1/2})$ and $\hat\Omega(\theta_0)^{-1}=\Omega_0^{-1}+O_p(n^{-1/2})$, we have
$
\bar G(\theta_0)^{\prime}\hat\Omega(\theta_0)^{-1}
=
(G_0+O_p(n^{-1/2}))^{\prime}(\Omega_0^{-1}+O_p(n^{-1/2}))
=
G_0^{\prime}\Omega_0^{-1}+O_p(n^{-1/2}).
$
Moreover, $\bar g(\hat\theta)=O_p(n^{-1/2})$ under the maintained assumptions. 
Thus,
\begin{equation}\label{eq:first_term_repl}
\bar G(\theta_0)^{\prime}\hat\Omega(\theta_0)^{-1}\bar g(\hat\theta)
=
G_0^{\prime}\Omega_0^{-1}\bar g(\hat\theta)
+
O_p(n^{-1}).
\end{equation}

\medskip
\noindent\textbf{(ii) Second term.}
Using $\bar G(\theta_0)=G_0+O_p(n^{-1/2})$ and $\hat\Omega(\theta_0)^{-1}=\Omega_0^{-1}+O_p(n^{-1/2})$,
\begin{align*}
\bar G(\theta_0)^{\prime}\hat\Omega(\theta_0)^{-1}\bar G(\theta_0)
&=
(G_0+O_p(n^{-1/2}))^{\prime}(\Omega_0^{-1}+O_p(n^{-1/2}))(G_0+O_p(n^{-1/2})) \\
&=
G_0^{\prime}\Omega_0^{-1}G_0
+O_p(n^{-1/2}),
\end{align*}
and therefore, since $\hat\theta-\theta_0=O_p(n^{-1/2})$,
\begin{equation}\label{eq:second_term_repl}
\bar G(\theta_0)^{\prime}\hat\Omega(\theta_0)^{-1}\bar G(\theta_0)(\hat\theta-\theta_0)
=
G_0^{\prime}\Omega_0^{-1}G_0(\hat\theta-\theta_0)
+
O_p(n^{-1}).
\end{equation}

\medskip
\noindent\textbf{(iii) Third term.}
Since $\bar G(\theta_0)=G_0+O_p(n^{-1/2})$ and $B_{\lambda,n}(\gamma)=O_p(1)$,
\begin{align}
\frac{1}{n}\bar G(\theta_0)^{\prime}B_{\lambda,n}(\gamma)
&=
\frac{1}{n}G_0^{\prime}B_{\lambda,n}(\gamma)
+
\frac{1}{n}O_p(n^{-1/2})O_p(1) \notag\\
&=
\frac{1}{n}G_0^{\prime}B_{\lambda,n}(\gamma)
+
o_p(n^{-1}).
\label{eq:third_term_repl}
\end{align}

\medskip
\noindent\textbf{(iv) Fourth term.}
By construction, $\widetilde\Delta_{H,n}=O_p(1)$, hence
\begin{equation}\label{eq:fourth_term_repl}
\frac{1}{n}\widetilde\Delta_{H,n}=O_p(n^{-1}).
\end{equation}

\medskip
\noindent\textbf{Substitution into \eqref{eq:envelope_FOC2}.}
Substituting \eqref{eq:first_term_repl}--\eqref{eq:fourth_term_repl} into \eqref{eq:envelope_FOC2} yields
$
0
=
G_0^{\prime}\Omega_0^{-1}\bar g(\hat\theta)
-
G_0^{\prime}\Omega_0^{-1}G_0(\hat\theta-\theta_0)
+
\frac{1}{n}G_0^{\prime}B_{\lambda,n}(\gamma)
+
\frac{1}{n}\widetilde\Delta_{H,n}
+
O_p(n^{-1})
+
o_p(n^{-1}).
$
Absorbing the $O_p(n^{-1})$ terms into a single $\frac{1}{n}\Delta_n(\gamma)$ term gives
$
0
=
G_0^{\prime}\Omega_0^{-1}\bar g(\hat\theta)
-
G_0^{\prime}\Omega_0^{-1}G_0(\hat\theta-\theta_0)
+
\frac{1}{n}\Delta_n(\gamma)
+
o_p(n^{-1}),
$
where $\Delta_n(\gamma)=O_p(1)$ collects all order-$n^{-1}$ terms, including 
$G_0^{\prime}B_{\lambda,n}(\gamma)$, $\widetilde\Delta_{H,n}$, and additional plug-in
terms arising from replacing the sample matrices $\bar G(\theta_0)$ and 
$\hat{\Omega}(\theta_0)^{-1}$ by their population limits $G_0$ and 
$\Omega_0^{-1}$.\footnote{These plug-in terms are generated by the deviations 
$\bar G(\theta_0)-G_0$ and $\hat{\Omega}(\theta_0)^{-1}-\Omega_0^{-1}$, both of which are 
$O_p(n^{-1/2})$, and therefore contribute $O_p(n^{-1})$ when multiplied by 
$O_p(n^{-1/2})$ quantities such as $\bar g(\theta_0)$ or $(\hat{\theta}-\theta_0)$.}

Since $G_0^{\prime}\Omega_0^{-1}G_0$ is nonsingular, we can rearrange the preceding equation to obtain
$
G_0^{\prime}\Omega_0^{-1}G_0(\hat\theta-\theta_0)
=
G_0^{\prime}\Omega_0^{-1}\bar g(\hat\theta)
+
\frac{1}{n}\Delta_n(\gamma)
+
o_p(n^{-1}).
$
Premultiplying both sides by $\big(G_0^{\prime}\Omega_0^{-1}G_0\big)^{-1}$ yields
$
\hat\theta-\theta_0
=
\big(G_0^{\prime}\Omega_0^{-1}G_0\big)^{-1}G_0^{\prime}\Omega_0^{-1}\bar g(\hat\theta)
+
\frac{1}{n}\big(G_0^{\prime}\Omega_0^{-1}G_0\big)^{-1}\Delta_n(\gamma)
+
o_p(n^{-1}).
$ The dependence of $\Delta_n(\gamma)$ on $\gamma$ arises
through the second-order expansion of the Lagrange multiplier
$B_{\lambda,n}(\gamma)$, which depends on $\gamma$,
as well as through the curvature term $\widetilde\Delta_{H,n}$ that
interacts with the multiplier expansion. Consequently, $\Delta_n(\gamma)$
captures how $\gamma$ affects the $n^{-1}$ (second-order)
behavior of the estimator.
\qed

\subsection{Proof for Lemma \ref{lem:system_risk_uniform_convergence}} \label{subsec:pf_uniform-conv-system-risk}
For $j\in{\theta,\lambda}$, define
\begin{equation}
\widehat R_{j,n}
=
\max_{\gamma\in\Gamma}\widehat C_{j,\mathrm{raw}}(\gamma)
-
\min_{\gamma\in\Gamma}\widehat C_{j,\mathrm{raw}}(\gamma),
\end{equation}
and
\begin{equation}
R_{j}
=
\max_{\gamma\in\Gamma}C_{j}^{\mathrm{raw}}(\gamma)
-
\min_{\gamma\in\Gamma}C_{j}^{\mathrm{raw}}(\gamma).
\end{equation}
Uniform convergence of the raw components implies
\begin{equation}
\widehat R_{j,n}
\xrightarrow{p}
R_{j}.
\end{equation}
Since $R_{j}>0$, the denominator in the normalized sample criterion is bounded
away from zero with probability approaching one.

Next, uniform convergence of
$\widehat C_{j,\mathrm{raw}}(\gamma)$ to $C_{j}^{\mathrm{raw}}(\gamma)$ implies
uniform convergence of the corresponding minima and maxima. In particular,
\begin{equation}
\left|
\min_{\gamma\in\Gamma}\widehat C_{j,\mathrm{raw}}(\gamma)
-
\min_{\gamma\in\Gamma}C_{j}^{\mathrm{raw}}(\gamma)
\right|
\leq
\sup_{\gamma\in\Gamma}
\left|
\widehat C_{j,\mathrm{raw}}(\gamma)
-
C_{j}^{\mathrm{raw}}(\gamma)
\right|,
\end{equation}
and the same argument applies to the maxima. Therefore,
\begin{equation}
\sup_{\gamma\in\Gamma}
\left|
\widehat{\widetilde C}*{j}(\gamma)
-
\widetilde C*{j}(\gamma)
\right|
\xrightarrow{p}0,
\qquad
j\in{\theta,\lambda}.
\end{equation}
Since
\begin{equation}
\widehat Q_{n}(\gamma;\tau)
=
\tau \widehat{\widetilde C}*{\theta}(\gamma)
+
(1-\tau)\widehat{\widetilde C}*{\lambda}(\gamma),
\end{equation}
and
\begin{equation}
Q(\gamma;\tau)
=
\tau \widetilde C_{\theta}(\gamma)
+
(1-\tau)\widetilde C_{\lambda}(\gamma),
\end{equation}
we obtain
\begin{align}
\sup_{\gamma\in\Gamma}
\left|
\widehat Q_{n}(\gamma;\tau)
-
Q(\gamma;\tau)
\right|
&\leq
\tau
\sup_{\gamma\in\Gamma}
\left|
\widehat{\widetilde C}*{\theta}(\gamma)
-
\widetilde C*{\theta}(\gamma)
\right|
\nonumber\\
&\quad
+
(1-\tau)
\sup_{\gamma\in\Gamma}
\left|
\widehat{\widetilde C}*{\lambda}(\gamma)
-
\widetilde C*{\lambda}(\gamma)
\right|
\nonumber\\
&=
o_{p}(1).
\end{align}
This proves \eqref{eq:uniform_convergence_gamma}. \qed

\subsection{Proof for Theorem \ref{thm:gamma_hat_consistency}} \label{subsec:thm_consistency_gammahat}

By Lemma \ref{lem:system_risk_uniform_convergence},
\begin{equation}
\sup_{\gamma\in\Gamma}
\left|
\widehat Q_{n}(\gamma;\tau)
-
Q(\gamma;\tau)
\right|
\xrightarrow{p}0.
\end{equation}
The result then follows from the argmin theorem. For completeness, fix
$\varepsilon>0$ and define
\begin{equation}
\Gamma_{\varepsilon}(\tau)
=
\left\{
\gamma\in\Gamma:
|\gamma-\gamma^{\star}(\tau)|\geq \varepsilon
\right\}.
\end{equation}
By compactness, continuity, and uniqueness of the minimizer,
\begin{equation}
\eta_{\varepsilon}
=
\inf_{\gamma\in\Gamma_{\varepsilon}(\tau)}
\left[
Q(\gamma;\tau)
-
Q(\gamma^{\star}(\tau);\tau)
\right]
>
0.
\end{equation}
On the event
\begin{equation}
\sup_{\gamma\in\Gamma}
\left|
\widehat Q_{n}(\gamma;\tau)
-
Q(\gamma;\tau)
\right|
<
\frac{\eta_{\varepsilon}}{3},
\end{equation}
any $\gamma\in\Gamma_{\varepsilon}(\tau)$ satisfies
\begin{align}
\widehat Q_{n}(\gamma;\tau)
&\geq
Q(\gamma;\tau)
-
\frac{\eta_{\varepsilon}}{3}
\nonumber\\
&\geq
Q(\gamma^{\star}(\tau);\tau)
+
\frac{2\eta_{\varepsilon}}{3}
\nonumber\\
&>
\widehat Q_{n}(\gamma^{\star}(\tau);\tau).
\end{align}
Hence no minimizer of $\widehat Q_{n}(\gamma;\tau)$ can lie outside the
$\varepsilon$-neighborhood of $\gamma^{\star}(\tau)$ on this event. Since the
event has probability approaching one,
\begin{equation}
P\left(
|\hat\gamma(\tau)-\gamma^{\star}(\tau)|\geq \varepsilon
\right)
\to 0.
\end{equation}
Therefore,
\begin{equation}
\hat\gamma(\tau)\xrightarrow{p}\gamma^{\star}(\tau).
\end{equation} \qed

\subsection{Proof for Proposition \ref{prop:post_selection_first_order_validity}} \label{subsec:prop_post-selection-first-order-validity}
By \eqref{eq:post_selection_rootn_expansion},
\begin{equation}
\sqrt n
\left(
\hat\theta_{\hat\gamma(\tau)}-\theta_{0}
\right)
=
\sqrt n A_{0}\bar g(\theta_{0})
+
o_{p}(1).
\end{equation}
By the central limit theorem,
\begin{equation}
\sqrt n A_{0}\bar g(\theta_{0})
\xrightarrow{d}
N(0,V_{\theta}).
\end{equation}
Therefore,
\begin{equation}
\sqrt n
\left(
\hat\theta_{\hat\gamma(\tau)}-\theta_{0}
\right)
\xrightarrow{d}
N(0,V_{\theta}).
\end{equation}
Consistency of $\widehat V_{\theta}$ and Slutsky's theorem imply the stated
coverage result. \qed

\subsection{Proof for Proposition \ref{prop:post_selection_bias_mse}} \label{subsec:pf_prop-post-selection-bias-mse}
Rewrite \eqref{eq:post_selection_expansion_with_remainder} as
\begin{equation}
\hat\theta_{\hat\gamma(\tau)}-\theta_{0}
=
\frac{1}{\sqrt n}Z_{n}
+
\frac{1}{n}B_{\theta,n}(\gamma^{\star}(\tau))
+
r_{\theta,n}.
\end{equation}
Taking expectations and using $E[\bar g(\theta_{0})]=0$ gives
\begin{equation}
E[Z_{n}]=0.
\end{equation}
Together with $E[r_{\theta,n}]=o(n^{-1})$, this gives
\eqref{eq:post_selection_bias}.

For the mean squared error matrix, expand the outer product:
\begin{align}
&\left(
\hat\theta_{\hat\gamma(\tau)}-\theta_{0}
\right)
\left(
\hat\theta_{\hat\gamma(\tau)}-\theta_{0}
\right)^{\prime}
\nonumber\\
&\quad =
\frac{1}{n}Z_{n}Z_{n}^{\prime}
+
\frac{1}{n^{3/2}}
\left[
Z_{n}B_{\theta,n}(\gamma^{\star}(\tau))^{\prime}
+
B_{\theta,n}(\gamma^{\star}(\tau))Z_{n}^{\prime}
\right]
\nonumber\\
&\qquad
+
\frac{1}{n^{2}}
B_{\theta,n}(\gamma^{\star}(\tau))
B_{\theta,n}(\gamma^{\star}(\tau))^{\prime}
+
\text{terms involving } r_{\theta,n}.
\end{align}
The uniform integrability and mean-square negligibility assumptions on
$r_{\theta,n}$ imply that the terms involving $r_{\theta,n}$ are $o(n^{-2})$ after
taking expectations. This yields \eqref{eq:post_selection_mse_matrix}. \qed

\subsection{Proof for Proposition \ref{prop:coverage_error_representation}}
\label{subsec:pf_prop-coverage-error-representation}
Let
\begin{equation}
F_{n}(t)
=
P
\left(
T_{n}(\hat\gamma(\tau))\leq t
\right).
\end{equation}
Then
\begin{equation}
P
\left(
|T_{n}(\hat\gamma(\tau))|\leq z_{1-\alpha/2}
\right)
=
F_{n}(z_{1-\alpha/2})
-
F_{n}(-z_{1-\alpha/2}).
\end{equation}
Using Assumption \ref{ass:edgeworth_post_selection} at
$t=z_{1-\alpha/2}$ and $t=-z_{1-\alpha/2}$ gives
\begin{align}
F_{n}(z_{1-\alpha/2})
-
F_{n}(-z_{1-\alpha/2})
&=
\Phi(z_{1-\alpha/2})
-
\Phi(-z_{1-\alpha/2})
\nonumber\\
&\quad
+
n^{-1/2}
\left[
p_{1}(z_{1-\alpha/2};\gamma^{\star}(\tau))
\phi(z_{1-\alpha/2})
\right.
\nonumber\\
&\qquad
\left.
-
p_{1}(-z_{1-\alpha/2};\gamma^{\star}(\tau))
\phi(-z_{1-\alpha/2})
\right]
+
o(n^{-1/2}).
\end{align}
Since
\begin{equation}
\Phi(z_{1-\alpha/2})-\Phi(-z_{1-\alpha/2})
=
1-\alpha,
\end{equation}
the result follows. \qed

\subsection{Boundary behavior and multiplier stability}\label{subsec:boundary_problem}

The boundary issue arises because the implied CRPD weights must lie in the probability simplex,
\begin{equation}
\Delta_{n}
=
\left\{
\pi=(\pi_{1},\ldots,\pi_{n})^{\prime}:
\pi_{i}\geq 0,\;
\sum_{i=1}^{n}\pi_{i}=1
\right\}.
\end{equation}
An interior solution satisfies $\hat\pi_{i,\gamma}>0$ for all $i$, preferably with weights not too far from the uniform benchmark $1/n$. A boundary solution arises when some weights are zero or close to zero. In the CRPD problem, the weights are determined by
\begin{equation}
\hat\pi_{i,\gamma}
=
\frac{1}{n}
\left[
\frac{1}{\gamma+1}
-
\gamma \hat\delta_{\gamma}
-
\gamma \hat\lambda_{\gamma}^{\prime}
g_{i}(\hat\theta_{\gamma})
\right]^{1/\gamma}.
\end{equation}
Hence, instability in $\hat\lambda_{\gamma}$ directly translates into instability in the implied weights. If the scalar adjustment term $\hat\lambda_{\gamma}^{\prime}g_{i}(\hat\theta_{\gamma})$ is large for some observations, the bracketed term may move close to zero. As a result, some implied weights may become very small, and the solution may approach the boundary of $\Delta_{n}$.

This boundary behavior matters because the weighted moment restriction
\begin{equation}
\sum_{i=1}^{n}
\hat\pi_{i,\gamma}g_{i}(\hat\theta_{\gamma})
=
0
\end{equation}
may then be satisfied only by effectively suppressing part of the sample. Observations with $\hat\pi_{i,\gamma}\approx 0$ contribute little to the empirical moment equation, while observations with relatively large weights may become disproportionately influential. Thus, even if the resulting $\hat\theta_{\gamma}$ appears stable, the underlying moment enforcement mechanism may be fragile because it relies on highly uneven empirical reweighting.

Near-boundary weights also reduce the effective amount of information used by the estimator. A useful diagnostic is the effective sample size,
\begin{equation}
ESS(\gamma)
=
\frac{1}{
\sum_{i=1}^{n}\hat\pi_{i,\gamma}^{2}
}.
\end{equation}
If the weights are uniform, then $ESS(\gamma)=n$. If the weights are concentrated on a small subset of observations, then $ESS(\gamma)$ can be much smaller than $n$. Therefore, large or unstable multipliers can make the estimator behave as if it were based on a smaller effective sample, even when the nominal sample size is unchanged.

The boundary issue is also important for asymptotic analysis. The higher-order expansions of the estimator and the multipliers rely on smoothness of the implied-weight mapping around an interior solution. If some weights approach zero, the mapping from $(\theta,\lambda,\delta)$ to $\pi$ can become highly nonlinear or nonsmooth near the boundary. This can invalidate the local Taylor expansions and implicit-function arguments used to obtain the second-order representations. Therefore, the multiplier component in the selection criterion is not a penalty on a nuisance parameter. It is a way to discourage values of $\gamma$ for which the moment restrictions are satisfied only through unstable, near-boundary reweighting.

This does not mean that downweighting observations is always undesirable. Robustness often requires reducing the influence of observations associated with large moment residuals. The concern is instead excessive or unstable downweighting. A moderate multiplier correction indicates that the moments can be enforced with limited distortion of the empirical distribution. In contrast, an unstable multiplier indicates a high finite-sample cost of moment stabilization. This provides a direct motivation for including a multiplier-stability component when selecting the CRPD power parameter.

\clearpage
\bibliographystyle{apalike}
\bibliography{main}

@article{CressieandRead1984,
author = {Cressie, Noel and Timothy RC Read},
title = {Multinomial goodness-of-fit tests},
journal = {Journal of the Royal Statistical Society Series B: Statistical Methodology},
volume = {46},
number = {3},
pages = {440-464},
year = {1984}
}

@article{HansenHeatonandYaron1996,
author = {Hansen, L. and Heaton, J. and Yaron, A.},
title = {Finite-Sample Properties of Some
tive GMM Estimators},
journal = {Journal of Business and Economic Statistics},
volume = {14},
number = {},
pages = {262-280},
year = {1996}
}

@article{Huber1964,
author = {Huber, P.},
title = {Robust Estimation of a Location Parameter},
journal = {The Annals of Mathematical Statistics},
volume = {35},
number = {1},
pages = {73-101},
year = {1964}
}

@book{HuberandRonchetti2009,
title={Robust Statistics},
author={Huber, P. and Ronchetti, E.},
volume={},
pages={},
year={2009},
publisher={Wiley}
}

@article{ImbensSpadyandJohnson1998,
author = {Imbens, G. W. and Spady, R. H. and Johnson, P. },
title = {Information theoretic approaches to inference in moment conditions models},
journal = {Econometrica},
volume = {66},
number = {2},
pages = {333-357},
year = {1998}
}

@article{KitamuraandStutzer1997,
author = {Kitamura, Y. and Stutzer, M.},
title = {An information-theoretic alternative to generalized method of moments estimation},
journal = {Econometrica},
volume = {65},
number = {},
pages = {861-874},
year = {1997}
}

@article{NeweyandSmith2004,
author = {Newey, W. K. and Smith, R. J. },
title = {Higher order properties of GMM and generalized empirical likelihood estimators},
journal = {Econometrica},
volume = {72},
number = {1},
pages = {219-255},
year = {2004}
}

@article{Owen1988,
author = {Owen, A. B.},
title = {Empirical likelihood ratio confidence intervals for a single functional},
journal = {Biometrika},
volume = {75},
number = {2},
pages = {237-249},
year = {1988}
}

@book{Owen2001,
title={Empirical likelihood},
author={Owen, A. B.},
volume={},
pages={},
year={2001},
publisher={Chapman and Hall/CRC}
}

@article{QinandLawless1994,
author = {Qin, J. and Lawless, J.},
title = {Empirical likelihood and general estimating equations},
journal = {The Annals of Statistics},
volume = {22},
number = {1},
pages = {300-3235},
year = {1994}
}

@book{Rothenberg1984,
title={Handbook of econometrics (Approximating the distributions of econometric estimators and test statistics)},
author={Rothenberg, T. J.},
volume={2},
pages={881-935},
year={1984},
publisher={}
}

@article{Schennach2007,
author = {Schennach, S.},
title = {Point estimation with exponentially tilted empirical likelihood},
journal = {The Annals of Statistics},
volume = {35},
number = {2},
pages = {634-672},
year = {2007}
}

\end{document}